\documentclass[conference]{IEEEtran}
\pdfoutput=1
\usepackage[T1]{fontenc}
\usepackage{lmodern}
\usepackage{amsmath} 
\usepackage{amssymb}
\usepackage{amsthm}
\usepackage{bbm}
\usepackage{bm}
\usepackage[normalem]{ulem}
\usepackage{enumerate}
\usepackage{comment}
\usepackage{color}
\usepackage{tikz}
\usepackage{xcolor}
\usepackage[margin=20mm]{geometry}
\usepackage{graphicx}
\usepackage{todonotes}
\usepackage[utf8]{inputenc}
\usepackage[linesnumbered,figure]{algorithm2e}
\usepackage{setspace}
\usepackage{algpseudocode}
\usepackage{paralist}
\usepackage{tikz}
\usepackage{wrapfig}
\usepackage[nolist,printonlyused]{acronym}
\usepackage{etoolbox}
\usepackage{amsmath,amsfonts,amsthm,bm}
\usepackage{multirow}
\usepackage{lipsum}
\usepackage{float}
\usepackage{verbatim}

\usepackage[
expansion=true,
protrusion=true,
kerning=true,
tracking=true,
spacing=true
]{microtype}

\newcommand*\ruleline[1]{\par\noindent\raisebox{.8ex}{\makebox[\linewidth]{\hrulefill\hspace{1ex}\raisebox{-.8ex}{#1}
\hspace{1ex}\hrulefill}}}
\newcommand\rev[1]{\textcolor{black}{#1}}
\newcommand\revo[1]{\textcolor{black}{#1}}
\SetKwIF{If}{ElseIf}{Else}{if}{}{else \hspace{-0.15cm} if}{else}{end if}%
\SetKwFor{While}{while}{}{end while}%
\SetKwRepeat{Do}{do}{while}

\usetikzlibrary{arrows,shapes,decorations.pathmorphing,backgrounds,positioning,fit,matrix}
\usetikzlibrary{calc}
\usetikzlibrary{matrix}
\usetikzlibrary{matrix}


\newtheorem{lemma}{Lemma}
\newtheorem{corollary}{Corollary}
\newtheorem{proposition}{\protect\propositionname}
\newtheorem{theorem}{\protect\theoremname}
\providecommand{\propositionname}{Proposition}
\providecommand{\theoremname}{Theorem}
\theoremstyle{definition}
\newtheorem{definition}{Definition}

\def\varplayersset{\mathcal{N}}
\def\varplayerssetdym{N}
\def\varAPsset{\mathcal{A}}
\def\varAPssetdym{A}
\def\varcloudsset{\mathcal{C}}
\def\varcloudssetdym{C}
\def\varslicesset{\mathcal{S}}
\def\varslicessetdym{S}
\def\varcloud{c}
\def\varplayer{i}
\def\vardatasize{D}
\def\vartaskcomplexity{L}
\def\varAP{a}
\def\varoAP{a^{\prime}}
\def\vardecisionsset{\mathfrak{D}}
\def\varoffloaders{O}

\def\varcomppower{F}
\def\vartimecost{T}
\def\varlocal{l}
\def\varphywirelessrate{R}
\def\varoplayer{j}
\def\varslice{s}
\def\varoslice{s^\prime}
\def\varpolicy{\mbox{\small $\mathcal{P}$}}
\def\varuplinkrate{W}
\def\varuplinkratecoef{w}
\def\varcomppowercoef{w}
\def\varuplinkratecoefvector{\textbf{w}}

\def\vardeviceslicecoef{h}
\def\varcost{C}
\def\varindicatorfunction{I}
\def\varresource{r}

\def\vartimeslot{t}
\def\varbandwidthcoef{b}
\def\varbandwidthcoefvector{\textbf{b}}

\def\varbandwidthslicecoef{b}

\def\varlagrangian{{\mathcal{L}}}
\def\varcongestionweight{q}
\def\varresourcesset{\mathcal{R}}
\def\varpotentialfunction{\Psi}
\def\vartime{T}
\def\varresourcecost{c}
\def\varplayerdistance{d}
\def\varpathlossexponent{\alpha}
\def\varbandwidth{B}
\def\vartransmissionpower{P}
\def\varnoisepower{N_0} 
\def\varcomplexityperbit{X}
\def\varnumber{n}

\def\varedgeresourcesset{\mathcal{E}}
\def\varedgeconstant{E}
\def\varedgeresource{e}
\def\varoedgeresource{e^{\prime}}

\newbool{NoLearning}
\booltrue{NoLearning}

\newbool{NLBAGFirst}
\booltrue{NLBAGFirst}

\ifbool{NoLearning}{
\def\varactivedecisionsset{\mathfrak{D}}
\def\vardecisionsset{\mathfrak{D}}
\def\vardecisionsvector{\textbf{d}}
\def\vardecision{d}
\def\varactiveplayersset{\mathcal{N}}

\def\varactiveresourcesset{\mathcal{R}}
\def\varthegame{\Gamma}
}{
\def\varactivedecisionsset{\mathfrak{D}^{\vartimeslot}}
\def\vardecisionsvector{\textbf{d}^{\vartimeslot}}
\def\vardecision{d^{\vartimeslot}}
\def\varactiveplayersset{\mathcal{N}_{\vartimeslot}}

\def\varactiveresourcesset{\mathcal{R}_{\vartimeslot}}
\def\varthegame{\Gamma^\vartimeslot}
}

\def\argmax{\mathop{\rm arg\,max}}
\def\argmin{\mathop{\rm arg\,min}}
\begin{acronym}[MEC]
\acro{ES-COG}{\emph{edge slicing computation offloading game}}
\acro{SL-COG}{\emph{slice level computation offloading game}}
\acro{NL-BAG}{\emph{network level bandwidth allocation game}}
\acro{OSA-COG}{\emph{optimal slices computation offloading game}}
\acro{COS}{\emph{choose offloading slice}}
\acro{JSS-ERM}{\emph{Joint Slice Selection and Edge Resource Management}}
\acro{SRO}{\emph{slice resource orchestrator}} 
\acro{NSO-ERA}{\emph{network slice orchestration and edge resources allocation}}
\end{acronym}

\begin{document}

\title{Joint Wireless and Edge Computing Resource Management with Dynamic Network Slice Selection}
\author{Sla\dj ana Jo\v{s}ilo and Gy\"orgy D\'an\\
  Division of Network and Systems Engineering,\\
 School of Electrical Engineering and Computer Science\\
  KTH, Royal Institute of Technology, Stockholm, Sweden
  E-mail: \{josilo, gyuri\}@kth.se 
  }

\maketitle

\begin{abstract}
Network slicing is a promising approach for enabling low latency computation offloading in edge computing systems.
In this paper, we consider an edge computing system under network slicing in which the wireless devices generate latency sensitive computational tasks. 
We address the problem of joint dynamic assignment of computational tasks to slices, management of radio resources across slices and management of radio and computing resources within slices. 
We formulate the \ac{JSS-ERM} problem as a mixed-integer problem with the objective to minimize the completion time of computational tasks. 
We show that the \ac{JSS-ERM} problem is NP-hard and develop an approximation algorithm with bounded approximation ratio based on a game theoretic treatment of the problem. 
We provide extensive simulation results to show that network slicing can improve the system performance compared to no slicing and that the proposed solution can achieve significant gains compared to the equal slicing policy. Our results also show that the computational complexity of the proposed algorithm is approximately linear in the number of devices.
\end{abstract}


\IEEEpeerreviewmaketitle

\section{Introduction}
\label{sec::intro}
Network slicing is emerging as an enabler for providing logical networks that are customized to meet the needs of different kinds of applications, mostly in 5G mobile networks. Horizontal network slices are designed for specific classes of applications, e.g., streaming visual analytics, real-time control, or media delivery,  while vertical network slices are designed for specific industries. Slicing is expected to allow flexible and efficient end-to-end provisioning of bandwidth, composition of in-network processing, e.g., in the form of service chains composed of virtual network functions (VNF), and the allocation of dedicated computing resources. At the same time it provides performance isolation.  Slicing is particularly appealing in combination with edge computing, as network slicing could allow low latency access to customized computing services located in edge clouds~\cite{ordonez2017network,kekki2018mec}. 

Flexibility in network slicing is achieved through service orchestration. Orchestration focuses on the deployment and service-aware adaptation of VNFs and edge cloud services based on predicted workloads. Recent works in the area addressed the joint placement and routing of service function chains, formulated as a virtual network embedding problem~\cite{Rost:2019:VNE:3370562.3370587}, and the problem of joint resource dimensioning and routing~\cite{8850061,8082507}. Typical objectives are maximization of the service capacity or profit under physical (bandwidth and computational power) resource constraints, or the minimization of the energy consumption subject to satisfying service demand. 

Common to the works on service orchestration is that they assume that each application is mapped to a specific slice deterministically, and assume a static resource pool per slice so as to ensure performance isolation~\cite{Rost:2019:VNE:3370562.3370587,8850061,8082507}. A deterministic mapping is, however, not mandatory in practice. While there may be a designated (default) slice for every application, most proposed architectures for network slicing define a set of allowed slices, and the assignment of an application to a slice can be decided dynamically based on the current workload and SLA requirements~\cite{461563398f1a4f8c876135ad7e7d5bed}. The dynamic assignment of applications to slices thus results in a mixture of workloads in the slices,
and consequently calls for flexibility in allocating resources to slices.

The importance of resource management across slices has been widely accepted in the case of the radio access network (RAN)~\cite{461563398f1a4f8c876135ad7e7d5bed}. Such inter-slice resource allocation should happen at short time scales, taking into account slice-level service level agreements (SLAs) and technological constraints (e.g., available RAN technology, such as 5GNR or WiFi-Lic). Recent work in the area has focused on system aspects of virtualizing RANs~\cite{Foukas:2017:ORS:3117811.3117831}, and on the allocation of virtual resource block groups to slices so as to maximize efficiency~\cite{8407021}, but has not considered of the potential impact of inter-slice resource management on service orchestration and on the dynamic assignment of applications to slices. It is thus so far unclear how to perform joint resource management within and across slices, considering the orchestration of communication and computing resources simultaneously.

\rev{In this paper we address the problem of joint dynamic slice selection, inter-slice radio resource management and intra-slice radio and computing resource management for latency sensitive workloads, and make three important contributions.}
First, we formulate the \rev{joint slice selection and edge resource management (JSS-ERM) problem,} and show that it is NP-hard.
Second, we analyze the optimal solution structure, and we develop an efficient approximation algorithm with bounded approximation ratio inspired by a game theoretic treatment of the problem.
Third, we provide extensive numerical results to show that the resulting system performance significantly outperforms baseline resource allocation policies. 

The rest of the paper is organized as follows. Section \ref{sec::model} introduces the system model and Section~\ref{sec::_optimization_problem} the problem formulation. Sections~\ref{sec::game} and ~\ref{sec::slice_level} provide the analytical results, and Section~\ref{sec::numerical} shows numerical results. Section~\ref{sec::related} discusses related work and Section~\ref{sec::conclusion} concludes the paper. 
 
\section{System Model}
\label{sec::model}
We consider a slicing enabled mobile backhaul including mobile edge computing (MEC) resources that serves a set $\varplayersset \!\!=\!\! \{1,2,\hdots,\varplayerssetdym\}$ of wireless devices (WDs) that generate computationally intensive tasks. WDs can offload their tasks through a set $\varAPsset \!\!=\!\! \{1,2,\hdots,\varAPssetdym\}$ of access points (APs) to a set $\varcloudsset \!\!=\! \{1,2,\hdots,\varcloudssetdym\}$ of edge clouds (ECs). \rev{APs and ECs form the set $\varedgeresourcesset \triangleq \varAPsset \cup \varcloudsset$ of edge resources}. We denote by $\varslicesset \!=\!\! \{1,2,\hdots,\varslicessetdym\}$ the set of slices in the network, which include certain combinations of computing resources (e.g., CPUs, GPUs, NPUs and/or FPGAs), optimized for executing some types of tasks.
  
  We characterize a task generated by WD $\varplayer$ by the size $\vardatasize_\varplayer$ of the input data and by its complexity, which we define as the \emph{expected} number of instructions required to perform the computation. 
 Since the WDs and the slices may have different instruction set architectures, the number of instructions required to execute the same task may also differ. Hence, for a task generated by WD $\varplayer$ we denote by $\vartaskcomplexity_{\varplayer}$ and $\vartaskcomplexity_{\varplayer,\varslice}$ the expected number of instructions required to perform the computation locally and in slice $\varslice$, respectively. 
 Similar to other works~\cite{zheng2018dynamic,chen2014decentralized,jovsilo2018decentralized}, we consider that $\vardatasize_\varplayer$, $\vartaskcomplexity_{\varplayer}$ and  $\vartaskcomplexity_{\varplayer,\varslice}$ can be estimated from measurements by applying the methods described in~\cite{neto2018uloof,chun2011clonecloud,cuervo2010maui}. 

\rev{We consider that each WD $\varplayer$ generates a computational task at a time; each task is atomic and can be either offloaded for computation or performed locally on the WD it was generated at. In the case of offloading, the WD will be assigned to exactly one slice $\varslice \in \varslicesset$ and within the slice to exactly one AP $\varAP \in \varAPsset$ and to exactly one EC $\varcloud \in \varcloudsset$. Therefore, we define the set of feasible decisions for WD $\varplayer$ as  $\varactivedecisionsset_\varplayer \triangleq  \{\varplayer\} \cup \{(\varAP,\varcloud,\varslice)| \varAP \in \varAPsset, \varcloud \in \varcloudsset, \varslice \in \varslicesset\}$ and we use variable $\vardecision_\varplayer \in \varactivedecisionsset_\varplayer$ to indicate the decision for WD $\varplayer$'s task (i.e., $\vardecision_\varplayer =\varplayer$ indicates that WD $\varplayer$ performs the task locally and $ \vardecision_\varplayer = (\varAP,\varcloud,\varslice)$ indicates that WD $\varplayer$ should offload its task through AP $\varAP$ to  EC $\varcloud$ in slice $\varslice$). Furthermore, we define a decision vector $\vardecisionsvector \triangleq (\vardecision_\varplayer)_{\varplayer \in \varactiveplayersset}$ as the collection of the decisions of all WDs and we define the set $\varactivedecisionsset \triangleq \times_{\varplayer \in \varplayersset} \varactivedecisionsset_\varplayer$, i.e., the set of all possible decision vectors.}

\rev{For a decision vector $\vardecisionsvector \in \vardecisionsset$ we define the set $\varoffloaders_{(\varAP,\varslice)}(\vardecisionsvector) \triangleq \{\varplayer \in \varplayersset| \vardecision_\varplayer = (\varAP,\cdot,\varslice)\}$ of all WDs that use AP $\varAP$ in slice $\varslice$ and the set $\varoffloaders_{\varAP}(\vardecisionsvector) = \cup_{\varslice \in \varslicesset}\varoffloaders_{(\varAP,\varslice)}(\vardecisionsvector)$ of all WDs that use AP $\varAP$. Similarly, we define the set $\varoffloaders_{(\varcloud,\varslice)}(\vardecisionsvector) \triangleq \{\varplayer \in \varplayersset| \vardecision_\varplayer = (\cdot,\varcloud,\varslice)\}$ of all WDs that use EC $\varcloud$ in slice $\varslice$ and the set $\varoffloaders_{\varcloud}(\vardecisionsvector) = \cup_{\varslice \in \varslicesset}\varoffloaders_{(\varcloud,\varslice)}(\vardecisionsvector)$ of all WDs that use EC $\varcloud$. Finally, we define the local computing singleton set $\varoffloaders_{\varplayer}(\vardecisionsvector) \subset \{\varplayer, \emptyset\}$ for WD $\varplayer$ (i.e., $\varoffloaders_\varplayer(\vardecisionsvector) = \{\varplayer\}$ when WD $\varplayer$ performs the computation locally and $\varoffloaders_\varplayer(\vardecisionsvector) = \emptyset$ otherwise) and the set $\varoffloaders_\varlocal(\vardecisionsvector) = \cup_{\varplayer \in \varplayersset}\varoffloaders_\varplayer(\vardecisionsvector)$ of all WDs that perform the computation locally.}

\begin{figure}[t]
	\begin{center}
		\includegraphics[width=0.8\columnwidth]{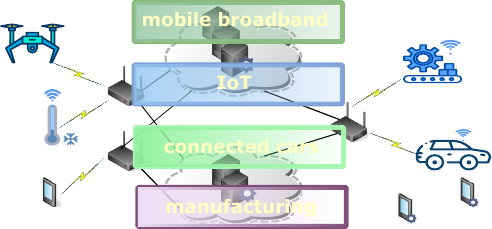}
		\caption{An example of a slicing enabled MEC system that consists of $\varplayerssetdym = 7$ WDs, $ \varcloudssetdym= 2$ ECs and $\varAPssetdym = 3$ APs and $\varslicessetdym = 4$ slices.}
		\label{fig::model}
	\end{center}
	\vspace{-0.2cm}
\end{figure}
Figure~\ref{fig::model} shows an example of a slicing enabled MEC system that consists of $\varplayerssetdym = 7$ WDs, $ \varcloudssetdym= 2$ ECs and $\varAPssetdym = 3$ APs and $\varslicessetdym = 4$ slices. In this example we have that $2$ out of $7$ WDs perform the computation locally and $5$ out of $7$ WDs offload their tasks. In what follows we discuss our models of communication and computing resources.

\subsection{Communication Resources} 
Communication resources in the system are managed at two levels: at the network level and at the slice level. 

\rev{At the network level, the radio resources of each AP $\varAP \in \varAPsset$ are shared across the slices according to the \emph{inter-slice radio resource allocation policy} $\varpolicy_{\varbandwidthcoef}\!:\!\varactivedecisionsset \!\rightarrow \mathbb{R}_{[0,1]}^{|\varAPsset|\times|\varslicesset|}$, which determines the inter-slice radio resource provisioning coefficients $\varbandwidthslicecoef_{\varAP}^{\varslice} \!\in\! [0,1]$, $\forall (\varAP,\varslice) \!\in\! \varAPsset \times \varslicesset$ such that $\sum_{\varslice \in \varslicesset}\varbandwidthslicecoef_{\varAP}^{\varslice} \leq 1$, $\forall \varAP \in \varAPsset$. 
}

\rev{At the slice level, the radio resources assigned to each slice $\varslice \in \varslicesset$ are shared among the WDs according to an \emph{intra-slice radio resource allocation policy} $\varpolicy_{\varuplinkratecoef_{\varAP}}^{\varslice}:\varactivedecisionsset \rightarrow \mathbb{R}_{[0,1]}^{|\varAPsset|\times|\varactiveplayersset|}$, which determines the intra-slice radio resource provisioning coefficients $\varuplinkratecoef_{\varplayer,\varAP}^{\varslice} \in [0,1]$, $\forall \varAP \in \varAPsset$ and $\forall \varplayer \in \varoffloaders_{(\varAP,\varslice)}(\vardecisionsvector)$ such that $\sum_{\varplayer \in \varoffloaders_{(\varAP,\varslice)}(\vardecisionsvector)}\varuplinkratecoef_{\varplayer,\varAP}^{\varslice} \leq 1$, $\forall (\varAP,\varslice) \in \varAPsset \times \varslicesset$. 
}

We denote by $\varphywirelessrate_{\varplayer,\varAP}$ the achievable PHY rate of WD $\varplayer$ at AP $\varAP$. $\varphywirelessrate_{\varplayer,\varAP}$  depends on physical signal characteristics, such as path loss and fading, and on the modulation-coding scheme. Given $\varphywirelessrate_{\varplayer,\varAP}$ we can express the actual uplink rate of WD $\varplayer$ at AP $\varAP$ in slice $\varslice$ as 
\begin{equation}\label{eq::uplink_rate}
\varuplinkrate_{\varplayer,\varAP}^{\varslice}(\vardecisionsvector,
\varpolicy_\varbandwidthcoef,\varpolicy_{\varuplinkratecoef_{\varAP}}^{\varslice}) = \varbandwidthcoef_{\varAP}^{\varslice} \varuplinkratecoef_{\varplayer,\varAP}^{\varslice} \varphywirelessrate_{\varplayer,\varAP}.
\end{equation}
\rev{The uplink rate~(\ref{eq::uplink_rate}) together with the input data size $\vardatasize_\varplayer$ determines the transmission time of WD $\varplayer \in \varoffloaders_{(\varAP,\varslice)}(\vardecisionsvector)$},
\begin{equation}\label{eq::transmission_time}
    \vartime_{\varplayer,\varAP}^{tx,\varslice}(\vardecisionsvector,
\varpolicy_\varbandwidthcoef,\varpolicy_{\varuplinkratecoef_{\varAP}}^{\varslice}) = \frac{\vardatasize_\varplayer}{\varuplinkrate_{\varplayer,\varAP}^{\varslice}(\vardecisionsvector,
\varpolicy_\varbandwidthcoef,\varpolicy_{\varuplinkratecoef_{\varAP}}^{\varslice})}.
\end{equation}
Similar to previous works~\cite{chen2014decentralized,josilo2018selfish,huang2012dynamic,josilo2019tcc} we make the assumption that the time needed to transmit the results of the computation from the EC to the WD can be neglected \rev{because for many applications (e.g., face recognition and tracking) the size of the output data is significantly
smaller than the size $\vardatasize_\varplayer$ of the input data.}

\subsection{Computing Resources} 
Our system model distinguishes between edge cloud resources and local computing resources.

\subsubsection{Edge Cloud Resources}
\rev{We consider that each slice $\varslice \in \varslicesset$ is equipped with a certain combination of computing resources optimized for executing specific types of tasks (e.g, CPUs, GPUs, NPUs, FPGAs), and we denote by $\varcomppower_{\varcloud}^{\varslice}$ the computing capability of EC $\varcloud$ in slice $\varslice$. The computing resources within a slice are shared among the WDs according to the \emph{intra-slice computing power allocation policy} $\varpolicy_{\varcomppowercoef_{\varcloud}}^{\varslice}:\varactivedecisionsset \rightarrow \mathbb{R}_{[0,1]}^{|\varcloudsset|\times|\varactiveplayersset|}$, which determines the intra-slice computing power provisioning coefficients $\varcomppowercoef_{\varplayer,\varcloud}^{\varslice} \in [0,1]$, $\forall \varcloud \in \varcloudsset$ and $\forall \varplayer \in \varoffloaders_{(\varcloud,\varslice)}(\vardecisionsvector)$ such that $\sum_{\varplayer \in \varoffloaders_{(\varcloud,\varslice)}(\vardecisionsvector)}\varcomppowercoef_{\varplayer,\varcloud}^{\varslice} = 1$, $\forall (\varcloud,\varslice) \in \varcloudsset \times \varslicesset$. 
}

Given the computing capability $\varcomppower_{\varcloud}^{\varslice}$ we can express the computing capability allocated to WD $\varplayer$ in EC $\varcloud$ in slice $\varslice$ as 
\begin{equation}\label{eq::cloud_to_user_power}
\varcomppower_{\varplayer,\varcloud}^{\varslice}(\vardecisionsvector,\varpolicy_{\varcomppowercoef_{\varcloud}}^{\varslice}) = \varcomppowercoef_{\varplayer,\varcloud}^{\varslice}\varcomppower_{\varcloud}^{\varslice}.
\end{equation}
In order to account for the diversity of computing resources provided by different slices we use the coefficient $\vardeviceslicecoef_{\varplayer,\varslice} \in \mathbb{R}_{\geq 0}$ to capture how well a slice $\varslice$ is tailored for executing a task generated by WD $\varplayer$ and we express the expected number of instructions $\vartaskcomplexity_{\varplayer,\varslice}$ required to execute a task generated by WD $\varplayer$ in slice $\varslice$ as \rev{$\vartaskcomplexity_{\varplayer,\varslice} = \vartaskcomplexity_\varplayer/\vardeviceslicecoef_{\varplayer,\varslice}$ (i.e., a high $\vardeviceslicecoef_{\varplayer,\varslice}$ indicates that a task generated by WD $\varplayer$ is a good match for the computing resources in slice $\varslice$).
Thus, in our model the computing capability (\ref{eq::cloud_to_user_power}) together with the expected task complexity $\vartaskcomplexity_{\varplayer,\varslice}$  determines the task execution time of WD $\varplayer \in \varoffloaders_{(\varcloud,\varslice)}(\vardecisionsvector)$ as}
\begin{equation}\label{eq::cloud_execution_time}
    \vartime_{\varplayer,\varcloud}^{ex,\varslice}(\vardecisionsvector,
\varpolicy_\varbandwidthcoef,\varpolicy_{\varuplinkratecoef_{\varAP}}^{\varslice}) = \frac{\vartaskcomplexity_{\varplayer,\varslice}}{\varcomppower_{\varplayer,\varcloud}^{\varslice}(\vardecisionsvector,\varpolicy_{\varcomppowercoef_{\varcloud}}^{\varslice})}.
\end{equation}

\subsubsection{Local Computing Resources}
We denote by $\varcomppower_{\varplayer}^{\varlocal}$ the computing capability of WD $\varplayer$ and we express the local execution time $\vartime_{\varplayer}^{ex}$ of WD $\varplayer$ as 
\begin{equation}\label{eq::local_execution_time}
\vartime_{\varplayer}^{ex} = \frac{\vartaskcomplexity_\varplayer}{\varcomppower_{\varplayer}^{\varlocal}}.
\end{equation}

\subsection{Cost Model}
\rev{We define the system cost as the aggregate completion time of all WDs. Before providing a formal definition, we introduce the shorthand notation
\begin{eqnarray}\label{eq::edge_resource_constant}
  &\hspace{-0.2cm}\varedgeconstant_{\varplayer,\varedgeresource}^\varslice =
	\left\{\!\!\! \begin{array}{ll}
	\frac{\vardatasize_\varplayer}{\varphywirelessrate_{\varplayer,\varedgeresource}}& \text{ if }
\varplayer \in \varplayersset, \varedgeresource \in \varedgeresourcesset \cap \varAPsset, \varslice \in \varslicesset\\
	\frac{\vartaskcomplexity_{\varplayer,\varslice}}{\varcomppower_{\varedgeresource}^\varslice}& \text{ if }
\varplayer \in \varplayersset, \varedgeresource \in \varedgeresourcesset \cap \varcloudsset, \varslice \in \varslicesset,\\	
	\end{array} \right.
\end{eqnarray}
\begin{eqnarray}\label{eq::edge_resource_variable}
  &\hspace{-0.2cm}
	\varbandwidthcoef_{\varedgeresource}^\varslice =
	\left\{\!\!\! \begin{array}{ll}
	\varbandwidthcoef_{\varedgeresource}^\varslice& \text{ if }
\varedgeresource \in \varedgeresourcesset \cap \varAPsset, \varslice \in \varslicesset\\
	1& \text{ if }
\varedgeresource \in \varedgeresourcesset \cap \varcloudsset, \varslice \in \varslicesset.\\	
	\end{array} \right.
\end{eqnarray}}

{\bf Cost of WD $\varplayer$:} When offloading, the task completion time consists of two parts: the time needed to transmit the data pertaining to a task through an AP and the time needed to execute a task in an EC. In the case of local computing, the task completion time depends only on the local execution time. Therefore, the cost of WD $\varplayer$ can be expressed as
\begin{equation}\label{eq::user_i_cost}
	\varcost_\varplayer(\vardecisionsvector,\varpolicy_\varbandwidthcoef,\varpolicy_{\varuplinkratecoef_{\varAP}}^{\varslice},\varpolicy_{\varcomppowercoef_{\varcloud}}^{\varslice}) \!=\!
	\left\{\!\!\! \begin{array}{ll}
	\frac{\varedgeconstant_{\varplayer,\varAP}^\varslice}{\varbandwidthcoef_\varAP^\varslice\varuplinkratecoef_{\varplayer,\varAP}^\varslice} \!+\! \frac{\varedgeconstant_{\varplayer,\varcloud}^\varslice}{\varcomppowercoef_{\varplayer,\varcloud}^\varslice},\!&
\!\!\!\!\varindicatorfunction_{\{\vardecision_\varplayer \!=\! (\varAP,\varcloud,\varslice) \}} \!=\! 1,\\
	\vartime_{\varplayer}^{ex},&\!\!\!\!\varindicatorfunction_{\{\vardecision_\varplayer = \varplayer \}} = 1.\\	
	\end{array} \right.\hspace{-0.4cm}	
\end{equation}
where $\varindicatorfunction_{\{\vardecision_\varplayer = \vardecision \}} = 1$ if $\vardecision_\varplayer = \vardecision$ and $\varindicatorfunction_{\{\vardecision_\varplayer=\vardecision\}} = 0$ otherwise.

 {\bf Cost per slice:} We express the cost in slice $\varslice$ as
\begin{equation}\label{eq::cost_per_slice}
  \varcost^{\varslice}(\vardecisionsvector,\varpolicy_\varbandwidthcoef,\varpolicy_{\varuplinkratecoef_{\varAP}}^{\varslice},\varpolicy_{\varcomppowercoef_{\varcloud}}^{\varslice}) \!=\!
  \sum \limits_{\varedgeresource \in \varedgeresourcesset}\sum \limits_{\varplayer \in \varoffloaders_{(\varedgeresource,\varslice)}(\vardecisionsvector)}
   \frac{\varedgeconstant_{\varplayer,\varedgeresource}^\varslice}{\varbandwidthcoef_\varedgeresource^\varslice \varuplinkratecoef_{\varplayer,\varedgeresource}^\varslice}.
\end{equation}

{\bf System cost:} Finally, we express the system cost as
\begin{equation}\label{eq::system_cost}
  \varcost(\vardecisionsvector,\varpolicy_{\varbandwidthcoef},\!\varpolicy_{\!\varuplinkratecoef_\varAP}\!,\!\varpolicy_{\!\varcomppowercoef_\varcloud}\!) \!=\!\!
  \sum \limits_{\varslice \in \varslicesset}\!\varcost^{\varslice}\!(\vardecisionsvector,\varpolicy_\varbandwidthcoef,\varpolicy_{\!\varuplinkratecoef_{\varAP}}^{\varslice}\!,\varpolicy_{\!\varcomppowercoef_{\varcloud}}^{\varslice}\!) \!+\! \!\!\!\!\sum \limits_{\varplayer \in \varoffloaders_\varlocal(\vardecisionsvector)}\!\!\!\!\!\varcost_{\varplayer}^{\varlocal},\hspace{-0.63cm}
\end{equation}
where $(\varpolicy_{\varuplinkratecoef_\varAP},\varpolicy_{\varcomppowercoef_\varcloud}) = ((\varpolicy_{\varuplinkratecoef_\varAP}^{\varslice},\varpolicy_{\varcomppowercoef_\varcloud}^{\varslice}))_{\varslice \in \varslicesset}$ denotes the collection of slices' policies.


\section{Problem Formulation}
\label{sec::_optimization_problem} 
We consider that the network operator aims at minimizing the system cost $\varcost(\vardecisionsvector,\varpolicy_{\varbandwidthcoef},\varpolicy_{\varuplinkratecoef_\varAP},\varpolicy_{\varcomppowercoef_\varcloud})$ by finding an optimal vector $\hat{\vardecisionsvector}$ of offloading decisions, and an optimal collection $(\hat{\varpolicy}_\varbandwidthcoef,\hat{\varpolicy}_{\varuplinkratecoef_{\varAP}},\hat{\varpolicy}_{\varcomppowercoef_{\varcloud}})$ of policies for sharing the edge resources across slices and within slices. 
We refer to the problem as the \emph{Joint Slice Selection and Edge Resource Management} (JSS-ERM) problem. 
Since the WDs generate atomic tasks that cannot be further split, the \ac{JSS-ERM} is a mixed-integer optimization problem, and can be formulated as 
\begin{eqnarray}
&\min \limits_{\vardecisionsvector,\varpolicy_{\varbandwidthcoef},\varpolicy_{\varuplinkratecoef_\varAP},\varpolicy_{\varcomppowercoef_\varcloud}}\varcost(\vardecisionsvector,\varpolicy_{\varbandwidthcoef},\varpolicy_{\varuplinkratecoef_\varAP},\varpolicy_{\varcomppowercoef_\varcloud})\label{obj::func} \\ 
& \textrm{s.t.}\sum \limits_{\vardecision \in \vardecisionsset_\varplayer} \varindicatorfunction_{ \{\vardecision_\varplayer = \vardecision \}} = 1, \forall \varplayer \in \varplayersset,  \label{cons::off_decisions_feasible} \\ 
&\varcost_\varplayer(\vardecisionsvector,\varpolicy_{\varbandwidthcoef},\varpolicy_{\varuplinkratecoef_{\varAP}},\varpolicy_{\varcomppowercoef_{\varcloud}}) \leq \vartimecost_\varplayer^{ex},  \forall \varplayer \in \varplayersset, \label{cons::delay_cons} \\
&\sum \limits_{\varslice \in \varslicesset}\varbandwidthcoef_{\varAP}^\varslice \leq 1,  \forall \varAP \in \varAPsset, \label{cons::bandwidth_coeff_feasible} \\
& \sum \limits_{\varoplayer \in \varoffloaders_{(\varedgeresource,\varslice)}(\vardecisionsvector)}\varuplinkratecoef_{\varoplayer,\varedgeresource}^\varslice \leq 1,  \forall \varedgeresource \in \varedgeresourcesset, \forall \varslice \in \varslicesset, \label{cons::in_slice_coeff_feasible} \\
&\varbandwidthcoef_\varAP^\varslice \geq 0, \forall \varAP \in \varAPsset, \forall \varslice \in \varslicesset, \label{cons::inter_coeff_continuous}\\
&\varuplinkratecoef_{\varplayer,\varedgeresource}^\varslice \geq 0, \forall \varplayer \in \varplayersset, \forall \varedgeresource \in \varedgeresourcesset, \forall \varslice \in \varslicesset. \label{cons::intra_coeff_continuous}
\end{eqnarray}

Constraint~(\ref{cons::off_decisions_feasible}) enforces that each WD either performs the computation locally or offloads its task to exactly one logical resource $(\varAP,\varcloud,\varslice) \in \varAPsset \times \varcloudsset \times \varslicesset$; constraint~(\ref{cons::delay_cons}) ensures that the task completion time in the case of offloading is not greater than the task completion time in the case of local computing; constraint~(\ref{cons::bandwidth_coeff_feasible}) enforces a limitation on the amount of communication resources of an AP that can be provided to each slice; constraint~(\ref{cons::in_slice_coeff_feasible}) enforces a limitation on the amount of communication resources of an AP and the amount of computing resources of an EC that can be provided to each WD in each slice.

\begin{theorem}\label{theo::NP_hard_general}
	The	\ac{JSS-ERM} defined by (\ref{obj::func})-(\ref{cons::intra_coeff_continuous}) is NP-hard.
\end{theorem}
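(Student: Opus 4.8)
The plan is to prove NP-hardness by a polynomial reduction from the \textsc{Partition} problem: given positive integers $a_1,\dots,a_\varplayerssetdym$ with $\sum_\varplayer a_\varplayer = 2B$, decide whether some subset sums to exactly $B$. I would construct an instance of \ac{JSS-ERM} in which the combinatorial difficulty is concentrated entirely in the assignment of WDs to ECs, so that the continuous resource-sharing variables can be eliminated in closed form and the problem collapses onto \textsc{Partition}.

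First I would strip the instance down to a single slice ($\varslicessetdym=1$) and a single AP $\varAP$, taking the data sizes $\vardatasize_\varplayer$ negligible so that the transmission terms $\varedgeconstant_{\varplayer,\varAP}^\varslice$ vanish and the AP together with the inter-slice coefficients $\varbandwidthcoef_\varAP^\varslice$ drop out of the cost; I would make local execution prohibitive by choosing $\varcomppower_\varplayer^\varlocal$ tiny so that $\vartime_\varplayer^{ex}$ dominates every offloading cost, which forces all WDs to offload and keeps constraint~(\ref{cons::delay_cons}) slack. I would then use exactly two ECs $\varcloud_1,\varcloud_2$, each of unit capability, and set the complexities so that $\varedgeconstant_{\varplayer,\varcloud_1}^\varslice=\varedgeconstant_{\varplayer,\varcloud_2}^\varslice=a_\varplayer^2$ (e.g.\ $\vardeviceslicecoef_{\varplayer,\varslice}=1$, $\vartaskcomplexity_\varplayer=a_\varplayer^2$). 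With these choices the system cost~(\ref{eq::system_cost}) reduces to $\sum_{\varcloud}\sum_{\varplayer \in \varoffloaders_{(\varcloud,\varslice)}(\vardecisionsvector)} \varedgeconstant_{\varplayer,\varcloud}^\varslice/\varcomppowercoef_{\varplayer,\varcloud}^\varslice$ subject to the per-EC budget $\sum_{\varplayer \in \varoffloaders_{(\varcloud,\varslice)}}\varcomppowercoef_{\varplayer,\varcloud}^\varslice=1$.

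The key step is solving the intra-slice computing allocation in closed form for a fixed assignment. Minimizing $\sum_{\varplayer}\varedgeconstant_{\varplayer,\varcloud}^\varslice/\varcomppowercoef_{\varplayer,\varcloud}^\varslice$ over the simplex yields, by a Lagrangian (square-root ``water-filling'') argument, the optimum $\varcomppowercoef_{\varplayer,\varcloud}^\varslice \propto \sqrt{\varedgeconstant_{\varplayer,\varcloud}^\varslice}$ and the optimal per-EC cost $\big(\sum_{\varplayer \in \varoffloaders_{(\varcloud,\varslice)}}\sqrt{\varedgeconstant_{\varplayer,\varcloud}^\varslice}\big)^2 = \big(\sum_{\varplayer \in \varoffloaders_{(\varcloud,\varslice)}} a_\varplayer\big)^2$. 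Hence, after optimizing the continuous variables, \ac{JSS-ERM} reduces to choosing a two-way partition $(\varoffloaders_{\varcloud_1},\varoffloaders_{\varcloud_2})$ of the WDs minimizing $\big(\sum_{\varplayer \in \varoffloaders_{\varcloud_1}} a_\varplayer\big)^2 + \big(\sum_{\varplayer \in \varoffloaders_{\varcloud_2}} a_\varplayer\big)^2$. Writing $x=\sum_{\varplayer \in \varoffloaders_{\varcloud_1}} a_\varplayer$, this objective equals $2B^2 + 2(x-B)^2$, so its minimum $2B^2$ is attained if and only if some subset sums to $B$. Setting the decision threshold to $2B^2$, the \textsc{Partition} instance is a YES-instance iff the optimal \ac{JSS-ERM} cost is at most $2B^2$; as the construction is polynomial, NP-hardness follows.

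The step I expect to carry the real weight is recognizing that optimizing the continuous variables turns the objective into a sum of squares $\sum_\varcloud \big(\sum_{\varplayer\in\varoffloaders_\varcloud} a_\varplayer\big)^2$, whose minimization over partitions is precisely \textsc{Partition}; once this is seen the gadget is immediate. The remaining care is in \emph{exactness} and \emph{feasibility}: I must verify that zeroing out the transmission contribution indeed removes the AP and the coefficients $\varbandwidthcoef_\varAP^\varslice$ from the comparison, that constraints~(\ref{cons::off_decisions_feasible})--(\ref{cons::intra_coeff_continuous}) are jointly satisfiable, and in particular that local execution can be made strictly dominated so~(\ref{cons::delay_cons}) never binds. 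The strictly positive gap $2(x-B)^2$ separating balanced from unbalanced assignments then guarantees that the YES/NO answer is preserved exactly; should the vanishing-$\vardatasize_\varplayer$ argument feel delicate, a clean alternative is to place $\varcloud_1,\varcloud_2$ in two symmetric slices so that the transmission terms are identical across the two partition classes and cancel outright.
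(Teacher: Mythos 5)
Your proof is correct, and it rests on exactly the key insight that drives the paper's proof: once the continuous sharing coefficients are optimized via the square-root rule (\ref{eq::optimal_coef_slices})--(\ref{eq::optimal_coef_SRO}), each congested resource contributes the \emph{square} of the sum of the weights $\sqrt{\varedgeconstant_{\varplayer,\varedgeresource}^{\varslice}}$ of the WDs assigned to it, so the residual assignment problem is a minimum-sum-of-squares partition problem. The two routes differ only in the gadget and the source problem. The paper first establishes the decomposition (Theorem~\ref{theo::policies_offloading_decomposition}) and the closed-form cost (\ref{eq::system_cost_under_optimal_slices_optimal_operator}), and then reduces the \emph{Minimum Sum of Squares} problem (SP19 in Garey--Johnson) to the residual offloading problem (Theorem~\ref{theo::NP_hard_offloading}) by removing the ECs ($\varcloudssetdym=0$), setting $\varcomppower_\varplayer^\varlocal=0$ to kill local computing, and using the $K$ APs as the congested resources; you instead reduce from \textsc{Partition}, neutralize the AP and use two ECs as the congested resources, re-deriving inline only the special case of Lemma~\ref{lemm::closed_form_general} that your instance needs. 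Your version is more elementary and self-contained (\textsc{Partition} is the most familiar source problem), while the paper's choice of SP19 matches the general $K$-resource structure and, since SP19 is strongly NP-hard, in principle yields strong NP-hardness, whereas a \textsc{Partition} reduction gives only weak NP-hardness --- both suffice for the theorem as stated. Two small points of care in your writeup: for strictly positive $\vardatasize_\varplayer$ the transmission term does not literally vanish; rather, with a single AP and a single slice it equals the \emph{same constant} $\big(\sum_{\varplayer}\sqrt{\vardatasize_\varplayer/\varphywirelessrate_{\varplayer,\varAP}}\big)^2$ for every all-offloading assignment under the optimal policies, so the decision threshold should simply be shifted by this constant; and in your fallback construction with two symmetric slices the transmission terms likewise do not ``cancel'' but are again a common constant (by (\ref{eq::system_cost_under_optimal_slices_optimal_operator}) the AP's contribution under the optimal inter-slice policy depends only on the set of WDs using the AP, not on their split across slices), so the same threshold shift handles that case too.
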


\begin{proof}
We provide the proof in Section~\ref{subsec::problem_complexity}.
\end{proof}

In what follows we develop an approximation scheme for the \ac{JSS-ERM} problem based on decomposition of the problem, and by adopting a game theoretic interpretation of one of the subproblems.

\section{Network Slice Orchestration and Edge Resource Allocation}
\label{sec::game}

In what follows we show that the \ac{JSS-ERM} problem can be solved through solving a series of smaller optimization problems. To do so, we start with considering the problem of finding the collection $(\hat{\varpolicy}_{\varbandwidthcoef},\hat{\varpolicy}_{\varuplinkratecoef_{\varAP}},\hat{\varpolicy}_{\varcomppowercoef_{\varcloud}})$ of optimal resource allocation policies for a given vector $\vardecisionsvector$ of offloading decisions.
\begin{lemma}\label{lemm::closed_form_general}
Consider an offloading decision vector $\vardecisionsvector$ for which the constraint~(\ref{cons::delay_cons}) can be satisfied. Furthermore, define the problem of finding a collection $(\hat{\varpolicy}_{\varbandwidthcoef},\hat{\varpolicy}_{\varuplinkratecoef_{\varAP}},\hat{\varpolicy}_{\varcomppowercoef_{\varcloud}})$  of optimal resource allocation policies as
\begin{eqnarray}
&\min \limits_{\varpolicy_{\varbandwidthcoef},\varpolicy_{\varuplinkratecoef_\varAP},\varpolicy_{\varcomppowercoef_\varcloud}}\varcost(\vardecisionsvector,\varpolicy_{\varbandwidthcoef},\varpolicy_{\varuplinkratecoef_\varAP},\varpolicy_{\varcomppowercoef_\varcloud})\label{obj::policies} \\ 
& \textrm{s.t.} (\ref{cons::delay_cons})-(\ref{cons::intra_coeff_continuous}).\label{cons::policies}
\end{eqnarray}
Then, the collection $(\hat{\varpolicy}_{\varbandwidthcoef},\hat{\varpolicy}_{\varuplinkratecoef_{\varAP}},\hat{\varpolicy}_{\varcomppowercoef_{\varcloud}})$ of optimal resource allocation policies sets the provisioning coefficients according to 
 \begin{eqnarray}
     &\hspace{-0.15cm}\hat{\varuplinkratecoef}_{\varplayer,\varedgeresource}^{\varslice} \!\!\!=\! \frac{\sqrt{\varedgeconstant_{\varplayer,\varedgeresource}^{\varslice}}}{\sum \limits_{j \in \varoffloaders_{(\varedgeresource,\varslice)}(\vardecisionsvector)}\sqrt{\varedgeconstant_{\varoplayer,\varedgeresource}^{\varslice}}},\!\forall \varedgeresource \!\in\! \varedgeresourcesset,\! \forall \varslice \!\in\! \varslicesset,\! \forall \varplayer \!\in\! \varoffloaders_{(\varedgeresource,\varslice)}\!(\vardecisionsvector) \label{eq::optimal_coef_slices}, \\
 	&\hat{\varbandwidthslicecoef}_\varAP^{\varslice} = \frac{\sum \limits_{j \in \varoffloaders_{(\varAP,\varslice)}(\vardecisionsvector)}\sqrt{\varedgeconstant_{\varoplayer,\varAP}^{\varslice}}}{\sum \limits_{\varoslice \in \varslicesset}\sum \limits_{\varoplayer \in \varoffloaders{(\varAP,\varoslice)}(\vardecisionsvector)}\sqrt{\varedgeconstant_{\varoplayer,\varAP}^{\varoslice}}}, \forall \varAP \in \varAPsset, \forall \varslice \in \varslicesset. \label{eq::optimal_coef_SRO}
	\end{eqnarray}
 \end{lemma}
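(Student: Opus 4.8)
The plan is to treat (\ref{obj::policies})--(\ref{cons::policies}) as a convex program in the provisioning coefficients and to minimize it in closed form by a nested application of the Cauchy--Schwarz inequality. First I would observe that once $\vardecisionsvector$ is fixed, the local terms $\sum_{\varplayer \in \varoffloaders_\varlocal(\vardecisionsvector)}\varcost_\varplayer^\varlocal$ in the system cost (\ref{eq::system_cost}) are constants, and that by (\ref{eq::edge_resource_variable}) the cloud coefficients satisfy $\varbandwidthcoef_\varcloud^\varslice = 1$, so the only free variables are the access-point coefficients $\varbandwidthcoef_\varAP^\varslice$ and the intra-slice coefficients $\varuplinkratecoef_{\varplayer,\varedgeresource}^\varslice$. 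The remaining objective is a sum over $\varedgeresource \in \varedgeresourcesset$ and $\varslice \in \varslicesset$ of terms $\sum_{\varplayer \in \varoffloaders_{(\varedgeresource,\varslice)}(\vardecisionsvector)}\varedgeconstant_{\varplayer,\varedgeresource}^\varslice/(\varbandwidthcoef_\varedgeresource^\varslice \varuplinkratecoef_{\varplayer,\varedgeresource}^\varslice)$; distinct edge resources are decoupled, and the only coupling within an access point is through the budget constraints (\ref{cons::bandwidth_coeff_feasible}) and (\ref{cons::in_slice_coeff_feasible}). Since $1/(\varbandwidthcoef\,\varuplinkratecoef)$ is jointly convex on the positive orthant and all remaining constraints are linear, the program is convex, so a point satisfying the budget-only stationarity conditions is globally optimal.

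For the core computation I would first fix an edge resource--slice pair $(\varedgeresource,\varslice)$ and solve the intra-slice subproblem: minimizing $\sum_\varplayer \varedgeconstant_{\varplayer,\varedgeresource}^\varslice/\varuplinkratecoef_{\varplayer,\varedgeresource}^\varslice$ over the simplex $\{\varuplinkratecoef \geq 0,\ \sum_\varplayer \varuplinkratecoef_{\varplayer,\varedgeresource}^\varslice \leq 1\}$. By Cauchy--Schwarz one has $\sum_\varplayer \varedgeconstant_{\varplayer,\varedgeresource}^\varslice/\varuplinkratecoef_{\varplayer,\varedgeresource}^\varslice \geq (\sum_\varplayer \sqrt{\varedgeconstant_{\varplayer,\varedgeresource}^\varslice})^2$, with equality precisely at (\ref{eq::optimal_coef_slices}), which uses the full budget and, crucially, does not depend on $\varbandwidthcoef_\varedgeresource^\varslice$. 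Substituting this optimal value, the residual dependence on the access-point coefficients is $\sum_\varslice (\sum_\varplayer \sqrt{\varedgeconstant_{\varplayer,\varAP}^\varslice})^2/\varbandwidthcoef_\varAP^\varslice$ for each $\varAP$, and a second application of Cauchy--Schwarz over the simplex from (\ref{cons::bandwidth_coeff_feasible}) yields exactly (\ref{eq::optimal_coef_SRO}). Equivalently, and more cleanly handling the product, I would substitute the single variable $u_{\varplayer,\varAP}^\varslice = \varbandwidthcoef_\varAP^\varslice \varuplinkratecoef_{\varplayer,\varAP}^\varslice$; constraints (\ref{cons::bandwidth_coeff_feasible})--(\ref{cons::in_slice_coeff_feasible}) imply the combined budget $\sum_{\varslice,\varplayer} u_{\varplayer,\varAP}^\varslice \leq 1$, so one Cauchy--Schwarz step gives $u_{\varplayer,\varAP}^\varslice \propto \sqrt{\varedgeconstant_{\varplayer,\varAP}^\varslice}$, and setting $\varbandwidthcoef_\varAP^\varslice = \sum_\varplayer u_{\varplayer,\varAP}^\varslice$, $\varuplinkratecoef_{\varplayer,\varAP}^\varslice = u_{\varplayer,\varAP}^\varslice/\varbandwidthcoef_\varAP^\varslice$ recovers (\ref{eq::optimal_coef_slices})--(\ref{eq::optimal_coef_SRO}) and certifies that the recovered coefficients are feasible. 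The identities $\sum_\varplayer \hat{\varuplinkratecoef}_{\varplayer,\varedgeresource}^\varslice = 1$ and $\sum_\varslice \hat{\varbandwidthcoef}_\varAP^\varslice = 1$ are then immediate, so the proposed point satisfies (\ref{cons::bandwidth_coeff_feasible})--(\ref{cons::intra_coeff_continuous}).

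The step I expect to be the main obstacle is the delay constraint (\ref{cons::delay_cons}), which the closed form (\ref{eq::optimal_coef_slices})--(\ref{eq::optimal_coef_SRO}) does not reference. The derivation above minimizes the \emph{aggregate} cost subject to the resource budgets only, and the aggregate minimizer need not minimize any single WD's completion time, so (\ref{cons::delay_cons}) is not automatically inactive. The plan is to use the hypothesis that $\vardecisionsvector$ is such that (\ref{cons::delay_cons}) can be satisfied and to show the per-WD constraints are slack at the closed form, whence their KKT multipliers vanish and stationarity reduces to the budget-only conditions already solved; convexity then certifies global optimality. Concretely, at the closed form the effective allocation to an offloading WD on its access point is $\hat{\varbandwidthcoef}_\varAP^\varslice \hat{\varuplinkratecoef}_{\varplayer,\varAP}^\varslice = \sqrt{\varedgeconstant_{\varplayer,\varAP}^\varslice}/\sum_{\varoslice \in \varslicesset}\sum_{\varoplayer \in \varoffloaders_{(\varAP,\varoslice)}(\vardecisionsvector)}\sqrt{\varedgeconstant_{\varoplayer,\varAP}^{\varoslice}}$, yielding a completion time $\sqrt{\varedgeconstant_{\varplayer,\varAP}^\varslice}\sum_{\varoslice,\varoplayer}\sqrt{\varedgeconstant_{\varoplayer,\varAP}^{\varoslice}} + \sqrt{\varedgeconstant_{\varplayer,\varcloud}^\varslice}\sum_{\varoplayer \in \varoffloaders_{(\varcloud,\varslice)}(\vardecisionsvector)}\sqrt{\varedgeconstant_{\varoplayer,\varcloud}^{\varslice}}$, and the crux is to bound this by $\vartime_\varplayer^{ex}$. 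This is the delicate point: since an aggregate optimum can violate a tight individual delay bound even when some other feasible allocation meets it, the hypothesis must be invoked in the strong sense that feasibility of (\ref{cons::delay_cons}) forces the closed form itself to meet the bounds; I would therefore make this the explicit content of the feasibility assumption and verify the displayed inequality under it, after which optimality follows from convexity.
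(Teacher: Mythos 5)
Your core computation is correct, but it follows a genuinely different route from the paper's. The paper proves the lemma by Lagrangian duality: it checks convexity via the Hessian, writes out the full KKT system, argues from dual feasibility and complementary slackness that the multipliers of the nonnegativity constraints on the coefficients must vanish (a zero coefficient would make the objective infinite), solves the stationarity condition for the intra-slice coefficients to obtain (\ref{eq::optimal_coef_slices}), and then substitutes that expression into the stationarity condition for $\varbandwidthcoef_{\varAP}^{\varslice}$ to obtain (\ref{eq::optimal_coef_SRO}). Your nested Cauchy--Schwarz argument reaches the same two formulas with no multipliers at all, and your substitution $u_{\varplayer,\varAP}^{\varslice}=\varbandwidthcoef_{\varAP}^{\varslice}\varuplinkratecoef_{\varplayer,\varAP}^{\varslice}$ handles the bilinear coupling cleanly: the two budget constraints (\ref{cons::bandwidth_coeff_feasible})--(\ref{cons::in_slice_coeff_feasible}) collapse into a single simplex constraint on $u$, one Cauchy--Schwarz step gives $u_{\varplayer,\varAP}^{\varslice}\propto\sqrt{\varedgeconstant_{\varplayer,\varAP}^{\varslice}}$, and factoring back $\varbandwidthcoef_{\varAP}^{\varslice}=\sum_{\varplayer}u_{\varplayer,\varAP}^{\varslice}$, $\varuplinkratecoef_{\varplayer,\varAP}^{\varslice}=u_{\varplayer,\varAP}^{\varslice}/\varbandwidthcoef_{\varAP}^{\varslice}$ recovers (\ref{eq::optimal_coef_slices})--(\ref{eq::optimal_coef_SRO}) together with their feasibility. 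This is more elementary and self-contained; what the KKT route buys the authors is a reusable template, since the proofs of Propositions~\ref{lemm::closed_form_slices} and~\ref{lemm::closed_form_SRO} are stated simply as ``follow the approach of Lemma~\ref{lemm::closed_form_general}.''

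Regarding the delay constraint (\ref{cons::delay_cons}): your suspicion is well founded, and here you are being more careful than the paper, not less. Satisfiability of (\ref{cons::delay_cons}) does not imply that the budget-only aggregate minimizer satisfies it. Take one slice, one AP, a negligible cloud term, and two WDs with equal constants $\varedgeconstant$; the closed form splits the bandwidth equally, giving each WD completion time $2\varedgeconstant$, yet if one WD has local execution time $1.5\varedgeconstant$ and the other $100\varedgeconstant$, constraint (\ref{cons::delay_cons}) is satisfiable (give the tight WD a $0.7$ share) but is violated at (\ref{eq::optimal_coef_slices})--(\ref{eq::optimal_coef_SRO}). So the verification you planned --- that feasibility of (\ref{cons::delay_cons}) forces the closed form to meet the per-WD bounds --- cannot be carried out in general, and your fallback of making this the explicit content of the hypothesis is the only tenable reading. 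That is also, implicitly, what the paper does: its proof opens by declaring that (\ref{cons::delay_cons}) ``can be omitted since we assumed that the decision vector is such that constraint (\ref{cons::delay_cons}) can be satisfied,'' and never checks the closed form against it. In short, your derivation reaches the paper's conclusion by a different and arguably cleaner argument, and the one step you flagged as delicate is a genuine soft spot that the paper shares rather than resolves.
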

\begin{proof}
First, observe that constraint~(\ref{cons::delay_cons}) can be omitted since we assumed that the decision vector $\vardecisionsvector$ is such that  constraint~(\ref{cons::delay_cons}) can be satisfied. Furthermore, by inspecting the leading minors of the Hessian matrix of the objective function~(\ref{obj::policies}) it is easy to show that the matrix is positive semidefinite on the domain defined by~(\ref{cons::policies}), and thus problem~(\ref{obj::policies})-(\ref{cons::policies}) is convex. Therefore, the optimal solution of the problem must satisfy the Karush–Kuhn–Tucker (KKT) conditions and thus we can formulate the corresponding Lagrangian dual problem. To do so, let us define $\varbandwidthcoefvector \triangleq (\varbandwidthcoef_\varAP^\varslice)_{\varslice \in \varslicesset,\varAP \in \varAPsset}$ and $\varuplinkratecoefvector^{\varslice} \triangleq (\varuplinkratecoef_{\varplayer,\varedgeresource}^{\varslice})_{\varplayer \in \varactiveplayersset, \varedgeresource \in \varedgeresourcesset}$, and let us introduce non-negative Lagrange multiplier vectors $\boldsymbol{\alpha} = (\alpha_{\varAP})_{\varAP \in \varAPsset}$, $\boldsymbol{\beta} = (\beta_{\varedgeresource}^{\varslice})_{\varedgeresource \in \varedgeresourcesset, \varslice \in \varslicesset}$, $\boldsymbol{\gamma} = (\gamma_{\varAP}^{\varslice})_{\varAP \in \varAPsset, \varslice \in \varslicesset}$ and $\boldsymbol{\delta} = (\delta_{\varplayer,\varedgeresource}^{\varslice})_{\varplayer \in \varoffloaders_{(\varedgeresource,\varslice)}(\vardecisionsvector),\varedgeresource \in \varedgeresourcesset, \varslice \in \varslicesset}$ for constraints in (\ref{cons::policies}), respectively. Next, let us define the Lagrangian dual problem corresponding to problem (\ref{obj::policies})-(\ref{cons::policies}) as $\max \limits_{\boldsymbol{\alpha}, \boldsymbol{\beta}, \boldsymbol{\gamma},\boldsymbol{\delta}\succeq 0} \min \limits_{\varbandwidthcoefvector,\varuplinkratecoefvector \succeq 0} \varlagrangian(\varbandwidthcoefvector, \varuplinkratecoefvector, \boldsymbol{\alpha}, \boldsymbol{\beta}, \boldsymbol{\gamma},\boldsymbol{\delta})$, where the Lagrangian is given by
\vspace{-0.3cm}
\begin{eqnarray}\label{eq::lagrangian_function_general}\nonumber
 &\hspace{-0.2cm}\varlagrangian(\varbandwidthcoefvector, \varuplinkratecoefvector,\boldsymbol{\alpha}, \boldsymbol{\beta}, \boldsymbol{\gamma},\boldsymbol{\delta}) \!=\! \sum \limits_{\varoslice \in \varslicesset} \sum \limits_{\varoedgeresource \in \varedgeresourcesset} \frac{1}{\varbandwidthcoef_{\varoedgeresource}^{\varoslice}} \big(\sum \limits_{j \in \varoffloaders_{(\varoedgeresource\!\!,\varoslice)}(\vardecisionsvector)}\frac{\varedgeconstant_{\varoplayer,\varoedgeresource}^{\varoslice}}{\varuplinkratecoef_{\varoplayer,\varoedgeresource}^{\varoslice}}\big)\!+\\\nonumber
  &\hspace{-0.2cm}\sum \limits_{\varoAP \in \varAPsset} \alpha_{\varoAP} (\sum \limits_{\varoslice \in \varslicesset}  \varbandwidthcoef_{\varoAP}^{\varoslice} \!-\! 1) \!+\!\! \sum \limits_{\varoedgeresource \in \varedgeresourcesset} \sum \limits_{\varoslice \in \varslicesset}\! \beta_{\varoedgeresource}^{\varoslice} \big(\sum \limits_{j \in \varoffloaders_{(\varoedgeresource\!\!,\varoslice)}(\vardecisionsvector)}\!\!\!\!\!\varuplinkratecoef_{\varoplayer,\varoedgeresource}^{\varoslice} \!-\! 1) \big)\\\nonumber
  &-\!\!\sum \limits_{\varoAP \in \varAPsset} \sum \limits_{\varoslice \in \varslicesset} \!\!\gamma_{\varoAP}^{\varoslice}\varbandwidthcoef_{\varoAP}^{\varoslice}- \!\!\sum \limits_{\varoedgeresource \in \varedgeresourcesset} \sum \limits_{\varoslice \in \varslicesset} \sum \limits_{j \in \varoffloaders_{(\varoedgeresource\!\!,\varoslice)}(\vardecisionsvector)} \!\!\!\!\delta_{\varoplayer,\varoedgeresource}^{\varoslice} \varuplinkratecoef_{\varoplayer,\varoedgeresource}^{\varoslice} + \!\!\!\!\sum \limits_{\varoplayer \in \varoffloaders_\varlocal(\vardecisionsvector)}\!\!\!\!\!\varcost_{\varoplayer}^{\varlocal}.\nonumber
 \end{eqnarray}
 Now, we can express the KKT conditions as follows
 \begin{eqnarray}
 \hspace{-1cm}\textrm{\scriptsize \textbf{stationarity:}} & \hspace{-0.4cm}\sum \limits_{\varoplayer \in \varoffloaders_{(\varAP,\varslice)}(\vardecisionsvector)}\!\!\!\frac{\varedgeconstant_{\varoplayer,\varAP}^{\varslice}}{\varuplinkratecoef_{\varoplayer,\varAP}^{\varslice}}\!\!\cdot\!\!\frac{1}{(\varbandwidthcoef_{\varAP}^{\varslice})^2} \!\!=\!\! \alpha_{\varAP} \!-\! \gamma_{\varAP}^{\varslice}, \varAP \!\in\! \varAPsset, \! \varslice \!\in\! \varslicesset,  \label{KKT::stationarity_SRO}\\
  &\hspace{-0.46cm}\frac{\varedgeconstant_{\varplayer,\varedgeresource}^{\varslice}}{\varbandwidthcoef_{\varedgeresource}^{\varslice}(\varuplinkratecoef_{\varplayer,\varedgeresource}^{\varslice})^2}\!\!=\!\! \beta_{\varedgeresource}^{\varslice} \!-\! \delta_{\varplayer,\varedgeresource}^{\varslice}, 
 \varedgeresource \!\!\in\!\! \varedgeresourcesset, \!\varslice \!\!\in\! \varslicesset, \!\varplayer \!\in\!\! \varoffloaders_{\!(\varedgeresource,\varslice)}(\vardecisionsvector), \label{KKT::stationarity_slices}\\
   \hspace{-0.3cm}\textrm{\scriptsize \textbf{pr. feasibility:}} &\hspace{-5.5cm} (\ref{cons::policies}), \label{KKT::primal_feasibility}\\
  \hspace{-0.1cm}\textrm{\scriptsize \textbf{du. feasibility:}} &\hspace{-3.9cm}\boldsymbol{\alpha}, \boldsymbol{\beta}, \boldsymbol{\gamma},\boldsymbol{\delta} \succeq 0, \label{KKT::dual_feasibility}\\
  \hspace{-0.4cm}\textrm{\scriptsize \textbf{co. slackness:}} &\hspace{-2.8cm}\alpha_{\varAP}(\sum \limits_{\varoslice \in \varslicesset}  \varbandwidthcoef_{\varAP}^{\varoslice} - 1), \varAP \!\in\! \varAPsset,  \label{KKT::complementary_slackness_1}\\
  \hspace{-1cm}\textrm{\scriptsize \textbf{\color{white}slackness:\color{black}}} &\hspace{-0.5cm}\beta_{\varedgeresource}^{\varslice}(\sum \limits_{\varoplayer \in \varoffloaders_{(\varedgeresource,\varslice)}(\vardecisionsvector)}\!\!\!\!\varuplinkratecoef_{\varoplayer,\varedgeresource}^{\varslice} - 1) = 0,\! \varedgeresource \!\in\! \varedgeresourcesset,\! \varslice \in \varslicesset, \label{KKT::complementary_slackness_2}\\
  \hspace{-1cm}\textrm{\scriptsize \textbf{\color{white} complementary \color{black}}} &\hspace{-2.7cm}-\gamma_\varAP^{\varslice}\varbandwidthslicecoef_{\varAP}^{\varslice} = 0,\! \varAP \!\in\! \varAPsset, \varslice \in \varslicesset  \label{KKT::complementary_slackness_3}\\
  \hspace{-1cm}\textrm{\scriptsize \textbf{\color{white} slackness: \color{black}}} &\hspace{-0.4cm}-\delta_{\varplayer,\varedgeresource}^{\varslice} \varuplinkratecoef_{\varplayer,\varedgeresource}^{\varslice} = 0, \varedgeresource \!\in\! \varedgeresourcesset, \!\varslice \!\in\! \varslicesset, \!\varplayer\in\! \varoffloaders_{(\varedgeresource,\varslice)}(\vardecisionsvector). \label{KKT::complementary_slackness_4}
  \end{eqnarray}
  We proceed with finding $\hat{\varuplinkratecoef}_{\varplayer,\varedgeresource}^{\varslice}$. First, from the KKT dual feasibility condition $\boldsymbol{\delta} \!\succeq\! 0$ and complementary slackness condition (\ref{KKT::complementary_slackness_4}) we obtain that $\delta_{\varplayer,\varedgeresource}^{\varslice} \!=\! 0$ must hold for every $\varedgeresource \!\in\! \varedgeresourcesset$, $\varslice \!\in\! \varslicesset$ and $i \!\in\! \varoffloaders_{(\varedgeresource,\varslice)}(\vardecisionsvector)$ as otherwise $\varuplinkratecoef_{\varplayer,\varedgeresource}^{\varslice} \!=\! 0$ would lead to infinite value of the objective function. Then, from the KKT stationarity condition (\ref{KKT::stationarity_slices}) and complementary slackness condition (\ref{KKT::complementary_slackness_2}) we obtain the expression~(\ref{eq::optimal_coef_slices}) for coefficients $\hat{\varuplinkratecoef}_{\varplayer,\varedgeresource}^{\varslice}$. Finally, by substituting expression~(\ref{eq::optimal_coef_slices}) into the KKT stationarity condition (\ref{KKT::stationarity_SRO}) and by following the same approach as for finding $\hat{\varuplinkratecoef}_{\varplayer,\varedgeresource}^{\varslice}$ we obtain the expression~(\ref{eq::optimal_coef_SRO}) for coefficients $\hat{\varbandwidthslicecoef}_{\varAP}^{\varslice}$, which proves the result.
\end{proof}

As a first step in the decomposition, let us consider the problem of finding the optimal collection $(\varpolicy_{\varuplinkratecoef_{\varAP}}^{*},\varpolicy_{\varcomppowercoef_{\varcloud}}^{*}) = ((\varpolicy_{\varuplinkratecoef_{\varAP}}^{\varslice,*},\varpolicy_{\varcomppowercoef_{\varcloud}}^{\varslice,*}))_{\varslice \in \varslicesset}$ of resource allocation policies of slices for a given vector $\vardecisionsvector$ of offloading decisions and a given policy $\varpolicy_\varbandwidthcoef$. 
\begin{proposition}\label{lemm::closed_form_slices}
Consider an offloading decision vector $\vardecisionsvector$ for which  constraint~(\ref{cons::delay_cons}) can be satisfied and a policy $\varpolicy_\varbandwidthcoef$ for setting the inter-slice radio resource provisioning coefficients $\varbandwidthcoef_\varAP^\varslice, \forall \varAP \in \varAPsset, \forall \varslice \in \varslicesset$. Then the solution to the problem
\begin{align}
\min \limits_{\varpolicy_{\varuplinkratecoef_{\varAP}}^{\varslice},\varpolicy_{\varcomppowercoef_{\varcloud}}^\varslice}& \varcost^\varslice(\vardecisionsvector,\varpolicy_\varbandwidthcoef,\varpolicy_{\varuplinkratecoef_{\varAP}}^{\varslice},\varpolicy_{\varcomppowercoef_{\varcloud}}^\varslice) \label{eq::slices_BR}\\
&\text{s.t.} (\ref{cons::delay_cons}), (\ref{cons::in_slice_coeff_feasible}), (\ref{cons::intra_coeff_continuous}) \label{cons::slices_BR}.
\end{align}
is given by ~(\ref{eq::optimal_coef_slices}), i.e., $(\varpolicy_{\varuplinkratecoef_{\varAP}}^{\varslice,*},\varpolicy_{\varcomppowercoef_{\varcloud}}^{\varslice,*}) = (\hat{\varpolicy}_{\varuplinkratecoef_{\varAP}}^\varslice,\hat{\varpolicy}_{\varcomppowercoef_{\varcloud}}^\varslice), \forall \varslice \in \varslicesset$.
\end{proposition}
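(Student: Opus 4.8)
The plan is to recognize that this proposition is essentially a restriction of Lemma~\ref{lemm::closed_form_general}. The per-slice cost $\varcost^\varslice$ in (\ref{eq::cost_per_slice}) depends on the intra-slice coefficients $\varuplinkratecoef_{\varplayer,\varedgeresource}^\varslice$ of slice $\varslice$ only, and these coefficients do not appear in the cost of any other slice; moreover the objective separates across the edge resources $\varedgeresource \in \varedgeresourcesset$, since the simplex constraint (\ref{cons::in_slice_coeff_feasible}) couples only the coefficients sharing a given pair $(\varedgeresource,\varslice)$. Consequently, for a fixed inter-slice policy $\varpolicy_\varbandwidthcoef$, minimizing the per-slice cost over $(\varpolicy_{\varuplinkratecoef_{\varAP}}^{\varslice},\varpolicy_{\varcomppowercoef_{\varcloud}}^\varslice)$ is exactly the subproblem obtained by holding the scalars $\varbandwidthcoef_\varedgeresource^\varslice$ fixed in the joint problem of Lemma~\ref{lemm::closed_form_general}. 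First I would note that, since $\vardecisionsvector$ is assumed to make (\ref{cons::delay_cons}) satisfiable, that constraint can again be dropped, leaving only (\ref{cons::in_slice_coeff_feasible}) and the non-negativity constraints (\ref{cons::intra_coeff_continuous}).

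Second I would verify convexity exactly as in the proof of Lemma~\ref{lemm::closed_form_general}: with $\varbandwidthcoef_\varedgeresource^\varslice > 0$ treated as a fixed constant, the objective (\ref{eq::cost_per_slice}) is a sum of terms $\varedgeconstant_{\varplayer,\varedgeresource}^\varslice/(\varbandwidthcoef_\varedgeresource^\varslice \varuplinkratecoef_{\varplayer,\varedgeresource}^\varslice)$, each convex in $\varuplinkratecoef_{\varplayer,\varedgeresource}^\varslice$ on the positive orthant, while the constraints are linear; hence the KKT conditions are necessary and sufficient. Introducing multipliers $\beta_\varedgeresource^\varslice$ for (\ref{cons::in_slice_coeff_feasible}) and $\delta_{\varplayer,\varedgeresource}^\varslice$ for (\ref{cons::intra_coeff_continuous}), the stationarity condition with respect to $\varuplinkratecoef_{\varplayer,\varedgeresource}^\varslice$ is precisely (\ref{KKT::stationarity_slices}), because $\varbandwidthcoef_\varedgeresource^\varslice$ enters only as a fixed scalar and no inter-slice multiplier appears.

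Third I would reuse the portion of the proof of Lemma~\ref{lemm::closed_form_general} that pins down $\hat{\varuplinkratecoef}_{\varplayer,\varedgeresource}^\varslice$: dual feasibility $\delta_{\varplayer,\varedgeresource}^\varslice \geq 0$ together with complementary slackness (\ref{KKT::complementary_slackness_4}) forces $\delta_{\varplayer,\varedgeresource}^\varslice = 0$, since otherwise $\varuplinkratecoef_{\varplayer,\varedgeresource}^\varslice = 0$ would make the objective infinite; the simplex constraint is binding by (\ref{KKT::complementary_slackness_2}), so $\sum_\varplayer \varuplinkratecoef_{\varplayer,\varedgeresource}^\varslice = 1$; and combining these with (\ref{KKT::stationarity_slices}) gives $\varuplinkratecoef_{\varplayer,\varedgeresource}^\varslice \propto \sqrt{\varedgeconstant_{\varplayer,\varedgeresource}^\varslice}$, i.e.\ exactly (\ref{eq::optimal_coef_slices}). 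The key structural point to emphasize is that this derivation never invokes the inter-slice stationarity condition (\ref{KKT::stationarity_SRO}) nor any property of how $\varpolicy_\varbandwidthcoef$ is chosen, so the optimal intra-slice coefficients coincide with those of Lemma~\ref{lemm::closed_form_general} for every feasible $\varpolicy_\varbandwidthcoef$.

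Since there is no genuinely hard step, the only two facts I would be careful to justify are that the simplex constraints (\ref{cons::in_slice_coeff_feasible}) are active at the optimum — which pins down the normalization in (\ref{eq::optimal_coef_slices}) and follows from strict monotonic decrease of the objective in each $\varuplinkratecoef_{\varplayer,\varedgeresource}^\varslice$ — and that the objective decouples across edge resources $\varedgeresource \in \varedgeresourcesset$, so each $(\varedgeresource,\varslice)$ problem is solved independently. Establishing these completes the identification $(\varpolicy_{\varuplinkratecoef_{\varAP}}^{\varslice,*},\varpolicy_{\varcomppowercoef_{\varcloud}}^{\varslice,*}) = (\hat{\varpolicy}_{\varuplinkratecoef_{\varAP}}^\varslice,\hat{\varpolicy}_{\varcomppowercoef_{\varcloud}}^\varslice)$ for all $\varslice \in \varslicesset$.
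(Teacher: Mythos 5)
Your proposal is correct and is essentially the paper's own proof: the paper proves Proposition~\ref{lemm::closed_form_slices} by simply noting that "the result can be proved by following the approach presented in the proof of Lemma~\ref{lemm::closed_form_general}," which is exactly what you carry out — dropping constraint~(\ref{cons::delay_cons}), verifying convexity, and reusing the KKT argument (with multipliers $\beta_\varedgeresource^\varslice$, $\delta_{\varplayer,\varedgeresource}^\varslice$, forced $\delta_{\varplayer,\varedgeresource}^\varslice=0$, and binding simplex constraint) to recover~(\ref{eq::optimal_coef_slices}). Your additional observation that the derivation never touches the inter-slice stationarity condition~(\ref{KKT::stationarity_SRO}), so the result holds for \emph{any} feasible $\varpolicy_\varbandwidthcoef$, is a useful explicit justification of why the restriction to fixed $\varbandwidthcoef_\varAP^\varslice$ changes nothing.
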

\begin{proof}
	The result can be proved by following the approach presented in the proof of Lemma~\ref{lemm::closed_form_general}.
\end{proof}
As a second step, let us consider the problem of finding an optimal policy $\varpolicy_\varbandwidthcoef^*$ for a given vector $\vardecisionsvector$ of offloading decisions $\vardecisionsvector$ and the optimal collection $(\varpolicy_{\varuplinkratecoef_{\varAP}}^{*},\varpolicy_{\varcomppowercoef_{\varcloud}}^{*}) = (\hat{\varpolicy}_{\varuplinkratecoef_{\varAP}},\hat{\varpolicy}_{\varcomppowercoef_{\varcloud}})$ of the slices' policies.
\begin{proposition}
\label{lemm::closed_form_SRO}
	Consider an offloading decision vector $\vardecisionsvector$ for which the constraint~(\ref{cons::delay_cons}) can be satisfied. Furthermore, let us substitute (\ref{eq::optimal_coef_slices}) into (\ref{obj::func})-(\ref{cons::intra_coeff_continuous}) and define the problem of finding an optimal inter-slice radio resource allocation policy $\varpolicy_\varbandwidthcoef^*$, i.e., a solution to 
\begin{eqnarray}
&\min \limits_{\varpolicy_{\varbandwidthcoef}} \sum \limits_{\varoslice \in \varslicesset} \sum \limits_{\varoAP \in \varAPsset} \frac{1}{\varbandwidthcoef_{\varoAP}^{\varoslice}} \big(\sum \limits_{\varoplayer \in \varoffloaders_{(\varoAP\!\!,\varoslice)}(\vardecisionsvector)}\sqrt{\varedgeconstant_{\varoplayer,\varoAP}^{\varoslice}}\big)^2\label{obj::inter_policies}\\
&\textrm{s.t.} (\ref{cons::delay_cons}),(\ref{cons::bandwidth_coeff_feasible}) \text{ and } (\ref{cons::inter_coeff_continuous}). \label{cons::inter_policies}
\end{eqnarray}
	Then, the optimal inter-slice radio resource allocation policy $\varpolicy_\varbandwidthcoef^*$ sets the inter-slice provisioning coefficients according to~(\ref{eq::optimal_coef_SRO}), i.e., $\varpolicy_\varbandwidthcoef^* = \hat{\varpolicy}_{\varbandwidthcoef}$.
\end{proposition}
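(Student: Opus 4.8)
The plan is to establish this proposition as the second stage of the decomposition, complementing Proposition~\ref{lemm::closed_form_slices}. Having fixed the intra-slice coefficients to their slice-optimal values $\hat{\varuplinkratecoef}_{\varplayer,\varedgeresource}^{\varslice}$ from~(\ref{eq::optimal_coef_slices}), I would minimize the system cost over the inter-slice policy $\varpolicy_\varbandwidthcoef$ alone and recover~(\ref{eq::optimal_coef_SRO}). A useful preliminary observation is that the $\hat{\varuplinkratecoef}_{\varplayer,\varedgeresource}^{\varslice}$ in~(\ref{eq::optimal_coef_slices}) do not depend on $\varpolicy_\varbandwidthcoef$: in~(\ref{eq::cost_per_slice}) the factor $1/\varbandwidthcoef_\varedgeresource^\varslice$ separates out, so the slice-optimal intra-slice coefficients are the same for every $\varpolicy_\varbandwidthcoef$. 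This is precisely what makes the sequential minimization exact and legitimizes substituting~(\ref{eq::optimal_coef_slices}) before optimizing over $\varpolicy_\varbandwidthcoef$.

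First I would confirm the reduction implicit in the statement. Substituting~(\ref{eq::optimal_coef_slices}) into~(\ref{eq::cost_per_slice}) and using the identity
$$\sum_{\varplayer \in \varoffloaders_{(\varedgeresource,\varslice)}(\vardecisionsvector)} \frac{\varedgeconstant_{\varplayer,\varedgeresource}^{\varslice}}{\hat{\varuplinkratecoef}_{\varplayer,\varedgeresource}^{\varslice}} = \Big(\sum_{\varplayer \in \varoffloaders_{(\varedgeresource,\varslice)}(\vardecisionsvector)} \sqrt{\varedgeconstant_{\varplayer,\varedgeresource}^{\varslice}}\Big)^{2},$$
the per-slice cost collapses to $\sum_{\varedgeresource \in \varedgeresourcesset}\frac{1}{\varbandwidthcoef_\varedgeresource^\varslice}\big(\sum_{\varplayer}\sqrt{\varedgeconstant_{\varplayer,\varedgeresource}^{\varslice}}\big)^2$. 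By definition~(\ref{eq::edge_resource_variable}) the edge clouds $\varedgeresource \in \varcloudsset$ carry $\varbandwidthcoef_\varedgeresource^\varslice = 1$, so their contribution, together with the local-computing term $\sum_{\varplayer \in \varoffloaders_\varlocal(\vardecisionsvector)}\varcost_\varplayer^\varlocal$ of~(\ref{eq::system_cost}), is constant in $\varpolicy_\varbandwidthcoef$. The only remaining policy-dependent terms are those over $\varedgeresource \in \varAPsset$, which are exactly the objective~(\ref{obj::inter_policies}); hence minimizing $\varcost$ over $\varpolicy_\varbandwidthcoef$ at $\varuplinkratecoef = \hat{\varuplinkratecoef}$ is equivalent to~(\ref{obj::inter_policies})--(\ref{cons::inter_policies}).

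Second I would note that~(\ref{obj::inter_policies})--(\ref{cons::inter_policies}) is convex, so its KKT system is necessary and sufficient for global optimality: since $\vardecisionsvector$ admits a feasible policy, constraint~(\ref{cons::delay_cons}) can be dropped as in Lemma~\ref{lemm::closed_form_general}; the objective is a nonnegative-weighted sum of the convex maps $\varbandwidthcoef_\varAP^\varslice \mapsto 1/\varbandwidthcoef_\varAP^\varslice$ on $\varbandwidthcoef_\varAP^\varslice > 0$; and~(\ref{cons::bandwidth_coeff_feasible}), (\ref{cons::inter_coeff_continuous}) are linear. Solving the KKT system is then verbatim the inter-slice block of the proof of Lemma~\ref{lemm::closed_form_general}: dual feasibility together with~(\ref{KKT::complementary_slackness_3}) forces $\gamma_\varAP^\varslice = 0$ (else $\varbandwidthcoef_\varAP^\varslice = 0$ makes the objective infinite); the stationarity condition~(\ref{KKT::stationarity_SRO}) gives $\varbandwidthcoef_\varAP^\varslice \propto \sum_{\varplayer}\sqrt{\varedgeconstant_{\varplayer,\varAP}^\varslice}$ with a proportionality factor independent of $\varslice$; and because the objective strictly decreases in each $\varbandwidthcoef_\varAP^\varslice$, the budget constraint~(\ref{cons::bandwidth_coeff_feasible}) binds ($\alpha_\varAP > 0$), so normalizing over $\varslice$ yields precisely~(\ref{eq::optimal_coef_SRO}).

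I expect no substantial new obstacle here, since the optimality computation simply reuses the inter-slice argument already carried out in Lemma~\ref{lemm::closed_form_general}. The only points requiring care are bookkeeping: verifying that the substitution isolates exactly the access-point terms as the $\varpolicy_\varbandwidthcoef$-dependent part while relegating the edge-cloud and local terms to an additive constant, and justifying that the budget constraint binds so that the normalization in~(\ref{eq::optimal_coef_SRO}) is the correct one.
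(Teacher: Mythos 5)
Your proposal is correct and follows essentially the same route as the paper: the paper's proof of this proposition is a one-line reference to the KKT argument in Lemma~\ref{lemm::closed_form_general}, and your write-up simply fills in that argument explicitly (the substitution identity collapsing the per-slice cost to the objective~(\ref{obj::inter_policies}), convexity, $\gamma_\varAP^\varslice=0$ via complementary slackness, stationarity giving proportionality to $\sum_{\varoplayer}\sqrt{\varedgeconstant_{\varoplayer,\varAP}^{\varslice}}$, and normalization via the binding budget constraint~(\ref{cons::bandwidth_coeff_feasible})). Your preliminary observation that the coefficients in~(\ref{eq::optimal_coef_slices}) are independent of $\varpolicy_\varbandwidthcoef$, which legitimizes the sequential minimization, is a worthwhile detail the paper leaves implicit.
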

\begin{proof}
	The result can be proved by following the approach presented in the proof of Lemma~\ref{lemm::closed_form_general}.
\end{proof}

By combining the above two results, we are now ready to show that the \ac{JSS-ERM} problem can be decomposed into a sequence of optimization problems.
\begin{theorem}\label{theo::policies_decomposition}
The problem (\ref{obj::policies})-(\ref{cons::policies}) can be solved optimally by finding the optimal policies $(\hat{\varpolicy}_{\varuplinkratecoef_\varAP},\hat{\varpolicy}_{\varcomppowercoef_\varcloud})$ first, and finding the optimal policy $\hat{\varpolicy}_\varbandwidthcoef$ second, i.e.,
\begin{eqnarray}
&&\min \limits_{\varpolicy_{\varbandwidthcoef},\varpolicy_{\varuplinkratecoef_\varAP},\varpolicy_{\varcomppowercoef_\varcloud}}\varcost(\vardecisionsvector,\varpolicy_{\varbandwidthcoef},\varpolicy_{\varuplinkratecoef_\varAP},\varpolicy_{\varcomppowercoef_\varcloud})=\nonumber\\
&&\min \limits_{\varpolicy_{\varbandwidthcoef}}\min \limits_{\varpolicy_{\varuplinkratecoef_\varAP},\varpolicy_{\varcomppowercoef_\varcloud}}\varcost(\vardecisionsvector,\varpolicy_{\varbandwidthcoef},\varpolicy_{\varuplinkratecoef_\varAP},\varpolicy_{\varcomppowercoef_\varcloud})\nonumber\\ 
\end{eqnarray}
\end{theorem}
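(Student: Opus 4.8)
The plan is to recognize the claim as an instance of the general fact that a joint minimization over two blocks of variables equals the iterated minimization, provided the feasible set factorizes as a Cartesian product. Concretely, I would establish three things in order: first, that after the (assumed satisfiable) delay constraint is removed, the feasible set of (\ref{obj::policies})-(\ref{cons::policies}) is a product of a set constraining only $\varpolicy_{\varbandwidthcoef}$ and a set constraining only $(\varpolicy_{\varuplinkratecoef_{\varAP}},\varpolicy_{\varcomppowercoef_{\varcloud}})$; second, that the nested-minimization identity then applies verbatim; and third, that the inner and outer stages coincide with Propositions \ref{lemm::closed_form_slices} and \ref{lemm::closed_form_SRO}, so that the two-stage procedure recovers the joint optimum of Lemma \ref{lemm::closed_form_general}.

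For the first step I would argue exactly as in the proof of Lemma \ref{lemm::closed_form_general}: because $\vardecisionsvector$ renders constraint (\ref{cons::delay_cons}) satisfiable, that constraint can be omitted. The remaining constraints then split cleanly --- (\ref{cons::bandwidth_coeff_feasible}) and (\ref{cons::inter_coeff_continuous}) restrict only the inter-slice policy $\varpolicy_{\varbandwidthcoef}$, while (\ref{cons::in_slice_coeff_feasible}) and (\ref{cons::intra_coeff_continuous}) restrict only the intra-slice policies $(\varpolicy_{\varuplinkratecoef_{\varAP}},\varpolicy_{\varcomppowercoef_{\varcloud}})$. Hence the feasible region is a product $\mathcal{F}_b\times\mathcal{F}_w$, and for any objective $f$ the elementary identity
\[
\min_{(u,v)\in\mathcal{F}_b\times\mathcal{F}_w} f(u,v)=\min_{u\in\mathcal{F}_b}\,\min_{v\in\mathcal{F}_w} f(u,v)
\]
holds. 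Taking $u=\varpolicy_{\varbandwidthcoef}$, $v=(\varpolicy_{\varuplinkratecoef_{\varAP}},\varpolicy_{\varcomppowercoef_{\varcloud}})$ and $f=\varcost$ yields the asserted equality of optimal values at once.

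To turn this value identity into the stated two-stage procedure I would identify each stage with a result already proved. For a fixed $\varpolicy_{\varbandwidthcoef}$, the inner minimization over $(\varpolicy_{\varuplinkratecoef_{\varAP}},\varpolicy_{\varcomppowercoef_{\varcloud}})$ decouples across slices and is precisely the slice-level problem (\ref{eq::slices_BR})-(\ref{cons::slices_BR}); by Proposition \ref{lemm::closed_form_slices} its minimizer is the closed form (\ref{eq::optimal_coef_slices}), which --- crucially --- does not depend on $\varpolicy_{\varbandwidthcoef}$. Substituting (\ref{eq::optimal_coef_slices}) into $\varcost$ collapses the outer problem to (\ref{obj::inter_policies})-(\ref{cons::inter_policies}), whose solution by Proposition \ref{lemm::closed_form_SRO} is (\ref{eq::optimal_coef_SRO}). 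The pair so produced is exactly $(\hat{\varpolicy}_{\varbandwidthcoef},\hat{\varpolicy}_{\varuplinkratecoef_{\varAP}},\hat{\varpolicy}_{\varcomppowercoef_{\varcloud}})$ from Lemma \ref{lemm::closed_form_general}, confirming that the iterated procedure attains the joint optimum.

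The one point needing care is that the objective $\varcost$ is \emph{not} separable across the two variable blocks: the inter- and intra-slice coefficients enter each denominator as the product $\varbandwidthcoef_{\varedgeresource}^{\varslice}\varuplinkratecoef_{\varplayer,\varedgeresource}^{\varslice}$, so one cannot justify the decomposition by splitting the sum. The resolution, and the main conceptual step, is that the nested-minimization identity relies only on the product structure of the feasible set and is indifferent to whether the objective separates. Separability is replaced by the stronger structural fact supplied by Proposition \ref{lemm::closed_form_slices} --- that the inner optimizer (\ref{eq::optimal_coef_slices}) is independent of the outer variable $\varpolicy_{\varbandwidthcoef}$ --- which is what makes the ``solve the slices first, then the SRO'' ordering well defined as an algorithm and not merely a rewriting of the optimal value.
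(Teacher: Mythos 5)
Your proof is correct and takes essentially the same approach as the paper: the paper's own proof is a one-line appeal to Lemma~\ref{lemm::closed_form_general}, Proposition~\ref{lemm::closed_form_slices} and Proposition~\ref{lemm::closed_form_SRO}, which is exactly the identification you make in your second and third steps. The details you supply---the product structure of the feasible set once (\ref{cons::delay_cons}) is dropped, the iterated-minimization identity, and the observation that the inner optimizer (\ref{eq::optimal_coef_slices}) does not depend on $\varpolicy_{\varbandwidthcoef}$---are precisely what the paper leaves implicit.
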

\begin{proof}
The result follows from the proofs of Lemma~\ref{lemm::closed_form_general}, Proposition~\ref{lemm::closed_form_slices} and Proposition~\ref{lemm::closed_form_SRO}. 
\end{proof}
Furthermore, as the next theorem shows, we can use this decomposition structure also for computing the optimal offloading decision vector.
\begin{theorem}\label{theo::policies_offloading_decomposition}
The problem (\ref{obj::func})-(\ref{cons::intra_coeff_continuous}) can be solved optimally by finding the optimal collection $(\hat{\varpolicy}_\varbandwidthcoef,\hat{\varpolicy}_{\varuplinkratecoef_\varAP},\hat{\varpolicy}_{\varcomppowercoef_\varcloud})$ of resource allocation policies first, and finding an optimal offloading decision vector $\hat{\vardecisionsvector}$ second, i.e.,
\begin{eqnarray}\label{eq::finding_opt_offloading_decisions}
&&\min \limits_{\vardecisionsvector, \varpolicy_{\varbandwidthcoef},\varpolicy_{\varuplinkratecoef_\varAP},\varpolicy_{\varcomppowercoef_\varcloud}}\varcost(\vardecisionsvector,\varpolicy_{\varbandwidthcoef},\varpolicy_{\varuplinkratecoef_\varAP},\varpolicy_{\varcomppowercoef_\varcloud})=\nonumber\\
&&\min \limits_{\vardecisionsvector} \min \limits_{\varpolicy_{\varbandwidthcoef}}\min \limits_{\varpolicy_{\varuplinkratecoef_\varAP},\varpolicy_{\varcomppowercoef_\varcloud}}\varcost(\vardecisionsvector,\varpolicy_{\varbandwidthcoef},\varpolicy_{\varuplinkratecoef_\varAP},\varpolicy_{\varcomppowercoef_\varcloud})\nonumber\\ 
\end{eqnarray}
\end{theorem}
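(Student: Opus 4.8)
The plan is to treat this as a routine separation of a joint minimization into nested minimizations, with care taken for the coupling between the discrete offloading decisions and the continuous policies through constraints~(\ref{cons::delay_cons}) and~(\ref{cons::in_slice_coeff_feasible}). The substantive work of decomposing the policy optimization has already been done in Theorem~\ref{theo::policies_decomposition}; all that remains is to pull the minimization over $\vardecisionsvector$ to the outermost layer.

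First I would observe that the full feasible set of~(\ref{obj::func})-(\ref{cons::intra_coeff_continuous}) partitions naturally over the choice of $\vardecisionsvector$: for each fixed decision vector $\vardecisionsvector$ satisfying~(\ref{cons::off_decisions_feasible}), the remaining constraints~(\ref{cons::delay_cons})-(\ref{cons::intra_coeff_continuous}) define the set of admissible policies, where the offloader sets $\varoffloaders_{(\varedgeresource,\varslice)}(\vardecisionsvector)$ appearing in~(\ref{cons::in_slice_coeff_feasible}) are now fixed data. This lets me invoke the elementary identity $\min_{x,y} g(x,y) = \min_{x} \min_{y} g(x,y)$, valid for any objective over a feasible set sliced by the outer variable, to pull the minimization over $\vardecisionsvector$ outside the minimization over the policies.

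Second, for each fixed $\vardecisionsvector$ I would distinguish two cases. If $\vardecisionsvector$ is such that~(\ref{cons::delay_cons}) cannot be satisfied by any policy, the inner minimization is over an empty feasible region and contributes $+\infty$, so such decision vectors are irrelevant to the outer minimum; moreover the outer feasible set is nonempty, since the all-local decision $\vardecision_\varplayer = \varplayer$ for all $\varplayer \in \varplayersset$ satisfies~(\ref{cons::delay_cons}) trivially. If instead $\vardecisionsvector$ admits a feasible policy, the inner problem is exactly problem~(\ref{obj::policies})-(\ref{cons::policies}), and Theorem~\ref{theo::policies_decomposition} gives its optimum as the nested minimization $\min_{\varpolicy_{\varbandwidthcoef}} \min_{\varpolicy_{\varuplinkratecoef_\varAP},\varpolicy_{\varcomppowercoef_\varcloud}} \varcost(\vardecisionsvector,\varpolicy_{\varbandwidthcoef},\varpolicy_{\varuplinkratecoef_\varAP},\varpolicy_{\varcomppowercoef_\varcloud})$. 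Substituting this into the outer layer yields the claimed three-level nested form.

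The main obstacle is not analytic but bookkeeping: I must verify that interchanging the decision minimization with the policy minimizations does not silently drop the coupling constraint~(\ref{cons::delay_cons}) or mishandle the $\vardecisionsvector$-dependence of the offloader sets in~(\ref{cons::in_slice_coeff_feasible}). Making the empty-feasible-region convention explicit and restricting the outer minimization to those $\vardecisionsvector$ for which~(\ref{cons::delay_cons}) is satisfiable --- precisely the hypothesis under which Theorem~\ref{theo::policies_decomposition} applies --- resolves this cleanly, and no further computation is needed.
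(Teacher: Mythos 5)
Your proposal correctly establishes the displayed identity, but only under its trivial reading, and that reading is not the content the theorem carries in this paper. The equation $\min_{\vardecisionsvector,\varpolicy_{\varbandwidthcoef},\varpolicy_{\varuplinkratecoef_\varAP},\varpolicy_{\varcomppowercoef_\varcloud}}\varcost=\min_{\vardecisionsvector}\min_{\varpolicy_{\varbandwidthcoef}}\min_{\varpolicy_{\varuplinkratecoef_\varAP},\varpolicy_{\varcomppowercoef_\varcloud}}\varcost$ is indeed the elementary joint-versus-nested minimization identity, and your bookkeeping (the $+\infty$ convention for decision vectors for which~(\ref{cons::delay_cons}) is infeasible, and nonemptiness via the all-local vector) handles that part correctly --- in fact more carefully than the paper does. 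But in your argument the inner policy problem is re-solved separately for each fixed $\vardecisionsvector$, so a priori the minimizing policies are different for different $\vardecisionsvector$. That establishes ``for each candidate decision vector, optimize the policies,'' which is the opposite of what the theorem asserts in words: that one can find the optimal collection $(\hat{\varpolicy}_\varbandwidthcoef,\hat{\varpolicy}_{\varuplinkratecoef_\varAP},\hat{\varpolicy}_{\varcomppowercoef_\varcloud})$ \emph{first}, before $\hat{\vardecisionsvector}$ is known, and only afterwards optimize over decisions.

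The missing idea --- and the entire substance of the paper's own proof --- is uniformity: by Lemma~\ref{lemm::closed_form_general}, the optimal provisioning coefficients for \emph{every} decision vector $\vardecisionsvector\in\vardecisionsset$ are given by the same closed-form expressions (\ref{eq::optimal_coef_slices}) and (\ref{eq::optimal_coef_SRO}). Since a policy here is a mapping from decision vectors to coefficients, this says a single fixed policy collection is simultaneously optimal for all $\vardecisionsvector$, which is what turns the nested minimization into a genuinely sequential procedure: write down the policies once, substitute them into the cost to obtain $\bar{\varcost}(\vardecisionsvector)$ as in (\ref{eq::system_cost_under_optimal_slices_optimal_operator}), then minimize over $\vardecisionsvector$. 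This is exactly what the rest of the paper depends on --- the COS algorithm and Theorems~\ref{theo::COS_convergence_all_optimal}, \ref{theo::approx_offloading_decisions_slices_optimal_operator_optimal} and~\ref{theo::PoA_ES_COG} operate on a cost in which the policies have been fixed in advance, which would be ill-defined if the policies to substitute depended on the unknown optimizer $\hat{\vardecisionsvector}$. The repair is short: append to your case analysis the observation that the inner minimizers delivered by Lemma~\ref{lemm::closed_form_general} have the same functional form for all feasible $\vardecisionsvector$, hence can be fixed before the outer minimization is performed.
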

\begin{proof}
It is easy to see that the exact values of the provisioning coefficients are functions of $\hat{\vardecisionsvector}$. 
However, the optimal policies according to which the resources are shared are the same for every offloading decision vector $\vardecisionsvector \in \vardecisionsset$, as defined by (\ref{eq::optimal_coef_slices}) and (\ref{eq::optimal_coef_SRO}). Therefore, one can solve the problem~(\ref{obj::policies})-(\ref{cons::policies}) first, assuming an arbitrary offloading decision vector $\vardecisionsvector$, and then given the solution $(\hat{\varpolicy}_\varbandwidthcoef,\hat{\varpolicy}_{\varuplinkratecoef_\varAP},\hat{\varpolicy}_{\varcomppowercoef_\varcloud})$ of~(\ref{obj::policies})-(\ref{cons::policies}) find the optimal offloading decision vector $\hat{\vardecisionsvector}$ that will determine the exact values of the provisioning coefficients. This proves the result.  
\end{proof}

\subsection{Discussion and Practical Implications}
\begin{figure}[t]
	\begin{center}
		\includegraphics[width=0.8\columnwidth]{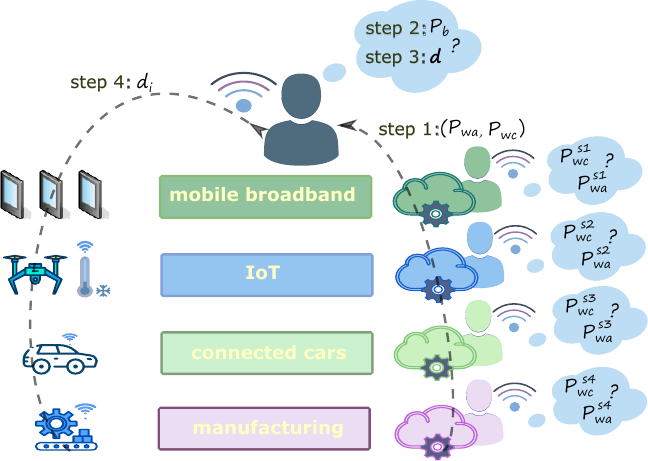}
			\caption{An example of the potential implementation of a resource allocation and orchestration framework.}
		\label{fig::implementation}
	\end{center}
	\vspace{-0.2cm}
\end{figure}
So far we have shown that the \ac{JSS-ERM} problem can be decomposed into a  $S+2$ coupled resource allocation problems that can be solved sequentially. It is of interest to discuss the relationship between the decomposition and the potential implementation of a resource allocation and orchestration framework. 

The proposed decomposition results in an optimization problem to be solved at the network level (eqns. (\ref{obj::inter_policies})-(\ref{cons::inter_policies})) and one in each slice ((eqns. (\ref{eq::slices_BR})-(\ref{cons::slices_BR}))), followed by the problem of finding an optimal offloading decision vector.
This structure is aligned with the slice-based network architecture proposed in~\cite{461563398f1a4f8c876135ad7e7d5bed}, where inter-slice radio resource allocation and service orchestration are performed by a centralized entity, the \ac{SRO}, while intra-slice radio and computing resource management is performed by the slices themselves, i.e., each slice manages its own radio and computing resources.

Figure~\ref{fig::implementation} illustrates the interaction between the \ac{SRO} and slices in the potential implementation of a resource allocation and orchestration framework.

\subsection{Problem Complexity}
\label{subsec::problem_complexity}
In what follows we provide a result concerning the complexity of the \ac{JSS-ERM} problem. 
For notational convenience let us first define the set of resources $\Tilde{\varresourcesset} \triangleq \{\{\varAPsset \times \varslicesset\} \cup \{\varcloudsset \times \varslicesset\} \cup \varactiveplayersset\}$ and let us introduce the following shorthand notation
\begin{eqnarray}\label{eq::weights}\nonumber
  &\hspace{-0.2cm}\varcongestionweight_{\varplayer,(\varAP,\varslice)} \triangleq \sqrt{\frac{\vardatasize_\varplayer}{\varphywirelessrate_{\varplayer,\varAP}}},  \text{        }\varcongestionweight_{\varplayer,(\varcloud,\varslice)}\triangleq \sqrt{\rev{\frac{\vartaskcomplexity_\varplayer}{\vardeviceslicecoef_{\varplayer,\varslice}}}}, \text{        }\varcongestionweight_{\varplayer,\varplayer} \triangleq \sqrt{\vartaskcomplexity_\varplayer},\\
   &\varcongestionweight_{\varresource}(\vardecisionsvector) \triangleq \sum \limits_{\varoplayer \in \varoffloaders_{\varresource}(\vardecisionsvector)}\varcongestionweight_{\varoplayer,\varresource}, \forall \varresource \in \Tilde{\varresourcesset}.
\end{eqnarray}

First, by substituting (\ref{eq::optimal_coef_slices}) into (\ref{eq::user_i_cost}) and by using the notation introduced in (\ref{eq::weights}), we can express the cost of WD $\varplayer$ under a policy $\varpolicy_\varbandwidthcoef$ and the collection $(\hat{\varpolicy}_{\varuplinkratecoef_\varAP},\hat{\varpolicy}_{\varcomppowercoef_\varcloud})$ of optimal allocation policies of slices as 
\begin{equation}\label{eq::WD_cost_under_opt_slices}
    \Tilde{\varcost}_\varplayer(\vardecisionsvector) = \sum \limits_{\varresource \in \Tilde{\varresourcesset}_{\vardecision_\varplayer}} m_\varresource \varcongestionweight_{\varplayer,\varresource}\varcongestionweight_\varresource(\vardecisionsvector),
\end{equation}
where $\Tilde{\varresourcesset}_{\vardecision_\varplayer}$ is the set of resources that WD $\varplayer$ uses for performing its task in $\vardecisionsvector$ (i.e., $\Tilde{\varresourcesset}_{\vardecision_\varplayer} \subset \Tilde{\varresourcesset}$) and $m_{(\varAP,\varslice)} = 1/\varbandwidthcoef_\varAP^\varslice$, $m_{(\varcloud,\varslice)} = 1/\varcomppower_\varcloud^\varslice$ and $m_{\varplayer} = 1/\varcomppower_\varplayer^\varlocal$.

Second, by summing the expressions (\ref{eq::WD_cost_under_opt_slices}) over all WDs $\varplayer \in \varactiveplayersset$ and by reordering the summations we can express the system cost~(\ref{eq::system_cost}) under a policy $\varpolicy_\varbandwidthcoef$ and the collection $(\hat{\varpolicy}_{\varuplinkratecoef_\varAP},\hat{\varpolicy}_{\varcomppowercoef_\varcloud})$ of optimal allocation policies of slices as 
\begin{eqnarray}\label{eq::system_cost_under_optimal_slices}
    &\hspace{-0.6cm}\Tilde{\varcost}(\vardecisionsvector) = \sum \limits_{\varresource \in \Tilde{\varresourcesset}} m_\varresource\varcongestionweight_\varresource^2(\vardecisionsvector).
\end{eqnarray}

Next, let us define the set of resources $\bar{\varresourcesset} \triangleq \{\varAPsset \cup \{\varcloudsset \times \varslicesset\} \cup \varactiveplayersset\}$ and a coefficient $\varcongestionweight_{\varplayer,\varAP} \triangleq \varcongestionweight_{\varplayer,(\varAP,\varslice)} = \sqrt{\vardatasize_\varplayer/\varphywirelessrate_{\varplayer,\varAP}}$. By substituting (\ref{eq::optimal_coef_SRO}) into (\ref{eq::WD_cost_under_opt_slices}), we can express the cost of WD $\varplayer$ under the collection $(\hat{\varpolicy}_\varbandwidthcoef,\hat{\varpolicy}_{\varuplinkratecoef_\varAP},\hat{\varpolicy}_{\varcomppowercoef_\varcloud})$ of optimal allocation policies as 
\begin{equation}\label{eq::WD_cost_under_opt_slices_opt_operato}
    \bar{\varcost}_\varplayer(\vardecisionsvector) = \sum \limits_{\varresource \in \bar{\varresourcesset}_{\vardecision_\varplayer}} m_\varresource \varcongestionweight_{\varplayer,\varresource}\varcongestionweight_\varresource(\vardecisionsvector),
\end{equation}
where $\bar{\varresourcesset}_{\vardecision_\varplayer}$ is the set of resources that WD $\varplayer$ uses for performing its task in $\vardecisionsvector$ (i.e., $\bar{\varresourcesset}_{\vardecision_\varplayer} \subset \bar{\varresourcesset}$) and $m_{\varAP} = 1$.

Finally, by summing the expressions (\ref{eq::WD_cost_under_opt_slices_opt_operato}) over all WDs $\varplayer \in \varactiveplayersset$ and by reordering the summations we can express the system cost (\ref{eq::system_cost_under_optimal_slices}) under the collection $(\hat{\varpolicy}_\varbandwidthcoef,\hat{\varpolicy}_{\varuplinkratecoef_\varAP},\hat{\varpolicy}_{\varcomppowercoef_\varcloud})$ of optimal allocation policies as 
\begin{eqnarray}\label{eq::system_cost_under_optimal_slices_optimal_operator}
    &\hspace{-0.6cm}\bar{\varcost}(\vardecisionsvector) = \sum \limits_{\varresource \in \bar{\varresourcesset}} m_\varresource\varcongestionweight_\varresource^2(\vardecisionsvector).
\end{eqnarray}
\color{black}
\begin{theorem}\label{theo::NP_hard_offloading}
    Consider the problem of finding the optimal vector $\hat{\vardecisionsvector}$ of offloading decisions of WDs under the collection $(\hat{\varpolicy}_\varbandwidthcoef,\hat{\varpolicy}_{\varuplinkratecoef_\varAP},\hat{\varpolicy}_{\varcomppowercoef_\varcloud})$ of optimal allocation policies that set provisioning coefficients according to (\ref{eq::optimal_coef_slices}) and (\ref{eq::optimal_coef_SRO})
    \begin{eqnarray}
&\min \limits_{\vardecisionsvector} \bar{\varcost}(\vardecisionsvector)\label{obj::offloading}\\
&\textrm{s.t.} (\ref{cons::off_decisions_feasible}). \label{cons::offloading}
\end{eqnarray}
	Problem (\ref{obj::offloading})-(\ref{cons::offloading}) is NP-hard.
\end{theorem}

\begin{proof}
	\rev{We prove the NP-hardness of the problem by reduction from the \emph{Minimum Sum of Squares} problem (SP19 problem in~\cite{garey2002computers}):  given a finite set $\mathcal{B}$, a size $s(b) \in \mathbb{Z}^{+}, \forall b \in \mathcal{B}$ and positive integers $K \leq |\mathcal{B}|$ and $J$, the question is whether $\mathcal{B}$ can be partitioned into $K$ disjoint subsets $\mathcal{B}_1,\mathcal{B}_2,\hdots,\mathcal{B}_K$ such that $\sum \limits_{k = 1}^{K} \Big(\sum \limits_{b \in \mathcal{B}_k}\!\!s(b)\Big)^2 \!\!\leq\!\! J$. }
	
	For the reduction we set $\varslicessetdym = 1$, $\varcloudssetdym = 0$ and $\varcomppower_\varplayer^\varlocal = 0$, $\forall \varplayer \in \varplayersset$, i.e., in this simplified version of the problem $\Tilde{\varresourcesset} = \varAPsset$. Next, we let $\varplayersset = \mathcal{B}$, $|\varAPsset| = K$, $\varphywirelessrate_{\varplayer,\varAP} = \varphywirelessrate_\varplayer$, $\forall \varplayer \in \varplayersset$, $\forall \varAP \in \varAPsset$ and $\sqrt{\vardatasize_\varplayer/\varphywirelessrate_\varplayer} = s(b)$. Then, it follows from (\ref{eq::system_cost_under_optimal_slices}) that the optimal solution of (\ref{obj::offloading})-(\ref{cons::offloading}) provides the solution to the SP19 problem. As SP19 is NP-hard, problem (\ref{obj::offloading})-(\ref{cons::offloading}) is also NP-hard, which proves the theorem.
\end{proof}
\begin{proof}[Proof of Theorem~\ref{theo::NP_hard_general}]
The result follows from Theorem~\ref{theo::policies_offloading_decomposition} and Theorem~\ref{theo::NP_hard_offloading}. 
\end{proof}
\color{black}
\section{Approximation Scheme for the JSS-ERM Problem}
\label{sec::slice_level}
In what follows we propose the \ac{COS} algorithm for computing an approximation to the optimal solution of the \ac{JSS-ERM} problem. In particular, the algorithm serves as an approximation scheme to the problem of finding an optimal offloading decision vector. The algorithm starts from an offloading decision vector $\vardecisionsvector^0$ in which all WDs perform  computation locally and it lets WDs update their offloading decisions one at a time, based on their local cost function $\tilde{\varcost}_\varplayer(\vardecisionsvector)$. We show the pseudo code of the algorithm in Figure~\ref{fig::COS}.
\begin{figure}[t]
	{\ruleline{$\vardecisionsvector{}^{*} = \ac{COS}(\vardecisionsvector^0,\varpolicy_\varbandwidthcoef,\varpolicy_\varuplinkratecoef^*,\varpolicy_\varcomppowercoef^*)$}
	\begin{algorithm}[H]
	$\vardecisionsvector \leftarrow \vardecisionsvector^0$\\
		\While{\!$\exists$\!\! WD \!$\varoplayer \!\in\! \varactiveplayersset$ s.t. $\vardecision_\varoplayer \neq \argmin \limits_{\vardecision{}^{\prime}_\varoplayer \in \varactivedecisionsset_\varoplayer}\! \Tilde{\varcost}_\varoplayer(\vardecision{}^{\prime}_\varoplayer,\! \vardecision_{-\varoplayer})$}
		 {
		 $\vardecision{}^{*}_\varoplayer=\argmin \limits_{\vardecision{}^{\prime}_\varoplayer \in \varactivedecisionsset_\varoplayer} \Tilde{\varcost}_\varoplayer(\vardecision{}^{\prime}_\varoplayer, \vardecision_{-\varoplayer})$,
		 $\vardecisionsvector = (\vardecision{}^{*}_\varoplayer, \vardecision_{-\varoplayer})$
 	    }
 	    $\vardecisionsvector{}^{*} = \vardecisionsvector$
	\end{algorithm}
	\hrule
	\caption{Pseudo code of the \ac{COS} algorithm.}\label{fig::COS}
	\vspace{-0.2cm}
	}
\end{figure}
\begin{theorem}\label{theo::COS_convergence}
The \ac{COS} algorithm terminates after a finite number of the iterations for any allocation policy $\varpolicy_\varbandwidthcoef$ and the collection $(\hat{\varpolicy}_{\varuplinkratecoef_{\varAP}},\hat{\varpolicy}_{\varuplinkratecoef_{\varcloud}})$ of optimal allocation policies of slices.
\end{theorem}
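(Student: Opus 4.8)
The plan is to read off from~(\ref{eq::WD_cost_under_opt_slices}) that, once the slices' policies are fixed to the optimal ones, the offloading subproblem solved by \ac{COS} is a \emph{weighted congestion game} on the resource set $\Tilde{\varresourcesset}$. Each WD $\varplayer$ contributes a fixed weight $\varcongestionweight_{\varplayer,\varresource}$ to every resource $\varresource$ it uses; every resource carries a positive linear cost coefficient $m_\varresource$ (for a fixed inter-slice policy $\varpolicy_\varbandwidthcoef$ the coefficients $m_{(\varAP,\varslice)} = 1/\varbandwidthcoef_\varAP^\varslice$, $m_{(\varcloud,\varslice)} = 1/\varcomppower_\varcloud^\varslice$ and $m_\varplayer = 1/\varcomppower_\varplayer^\varlocal$ are positive constants); and the cost charged to $\varplayer$ on $\varresource$ is its weight times the aggregate congestion, $m_\varresource\varcongestionweight_{\varplayer,\varresource}\varcongestionweight_\varresource(\vardecisionsvector)$. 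Because each WD chooses from the finite decision set $\varactivedecisionsset_\varplayer$, the joint strategy space $\varactivedecisionsset$ is finite, so it suffices to exhibit a function that strictly decreases on every improving best-response step: such a function then takes finitely many values and no decision vector can recur, forcing termination.

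The key step is to construct an \emph{exact} potential. I would define
\begin{equation*}
\varpotentialfunction(\vardecisionsvector) = \frac{1}{2}\sum \limits_{\varresource \in \Tilde{\varresourcesset}} m_\varresource\Big(\varcongestionweight_\varresource^2(\vardecisionsvector) + \sum \limits_{\varoplayer \in \varoffloaders_\varresource(\vardecisionsvector)}\varcongestionweight_{\varoplayer,\varresource}^2\Big),
\end{equation*}
and show that whenever a single WD $\varplayer$ switches from $\vardecision_\varplayer$ to $\vardecision_\varplayer^\prime$ one has $\varpotentialfunction(\vardecision_\varplayer^\prime,\vardecisionsvector_{-\varplayer}) - \varpotentialfunction(\vardecisionsvector) = \Tilde{\varcost}_\varplayer(\vardecision_\varplayer^\prime,\vardecisionsvector_{-\varplayer}) - \Tilde{\varcost}_\varplayer(\vardecisionsvector)$. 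This I would verify resource by resource over the symmetric difference $\Tilde{\varresourcesset}_{\vardecision_\varplayer} \triangle \Tilde{\varresourcesset}_{\vardecision_\varplayer^\prime}$: resources kept by $\varplayer$ leave $\varcongestionweight_\varresource$ unchanged and cancel on both sides; a resource dropped by $\varplayer$ changes its bracketed term by $(\varcongestionweight_\varresource - \varcongestionweight_{\varplayer,\varresource})^2 - \varcongestionweight_{\varplayer,\varresource}^2 - \varcongestionweight_\varresource^2 = -2\varcongestionweight_\varresource\varcongestionweight_{\varplayer,\varresource}$, so after the prefactor $\tfrac12 m_\varresource$ it contributes exactly $-m_\varresource\varcongestionweight_{\varplayer,\varresource}\varcongestionweight_\varresource(\vardecisionsvector)$, minus $\varplayer$'s former cost share on it; symmetrically, a resource newly taken contributes $+m_\varresource\varcongestionweight_{\varplayer,\varresource}\varcongestionweight_\varresource(\vardecision_\varplayer^\prime,\vardecisionsvector_{-\varplayer})$, exactly $\varplayer$'s new cost share. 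Summing these matches the change in $\Tilde{\varcost}_\varplayer$ term by term.

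With the exact potential in hand the conclusion is immediate. By construction of the \ac{COS} loop, each iteration is triggered by a WD $\varplayer$ for which $\vardecision_\varplayer \neq \argmin_{\vardecision_\varplayer^\prime} \Tilde{\varcost}_\varplayer(\vardecision_\varplayer^\prime,\vardecisionsvector_{-\varplayer})$, so the move strictly decreases $\Tilde{\varcost}_\varplayer$ and hence, by the identity above, strictly decreases $\varpotentialfunction$. Since $\varactivedecisionsset$ is finite, $\varpotentialfunction$ ranges over finitely many values, and a strictly decreasing sequence over a finite value set can have only finitely many terms; therefore \ac{COS} halts after finitely many iterations.

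I expect the only genuine obstacle to be the construction of $\varpotentialfunction$, and specifically the observation that the natural candidate --- the system cost $\Tilde{\varcost}(\vardecisionsvector) = \sum_\varresource m_\varresource\varcongestionweight_\varresource^2(\vardecisionsvector)$ of~(\ref{eq::system_cost_under_optimal_slices}) --- is \emph{not} an exact potential here, because the WDs carry resource-dependent weights $\varcongestionweight_{\varplayer,\varresource}$ rather than unit demands. A unilateral move changes $\Tilde{\varcost}$ by $\varplayer$'s cost change plus a stray term $m_\varresource\varcongestionweight_{\varplayer,\varresource}^2$ on each altered resource, and the extra quadratic correction $\sum_{\varoplayer}\varcongestionweight_{\varoplayer,\varresource}^2$ in $\varpotentialfunction$ is precisely what cancels those stray terms (the standard device for weighted congestion games with linear latencies). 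Everything beyond identifying this correction is routine algebra.
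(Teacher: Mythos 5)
Your proof is correct and takes essentially the same route as the paper: fix $\varpolicy_\varbandwidthcoef$ and the optimal intra-slice policies, view the offloading subproblem as a weighted congestion game with resource-dependent weights and affine resource costs, and conclude termination from the existence of an exact potential via the finite improvement property. The only difference is that you verify exactness by direct algebra, whereas the paper cites Theorem 4.2 of Harks et al.\ and Monderer--Shapley; in fact your potential $\frac{1}{2}\sum_{\varresource \in \Tilde{\varresourcesset}} m_\varresource\big(\varcongestionweight_\varresource^2(\vardecisionsvector) + \sum_{\varoplayer \in \varoffloaders_\varresource(\vardecisionsvector)}\varcongestionweight_{\varoplayer,\varresource}^2\big)$ coincides with the paper's ordered-sum potential~(\ref{eq::potential_function_1}) upon symmetrizing the double sum, so the two arguments are interchangeable.
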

\begin{proof}
\revo{The proof is based on a game theoretic treatment of the problem 
\begin{eqnarray}
&\min \limits_{\vardecisionsvector} \tilde{\varcost}(\vardecisionsvector)\label{obj::offloading_and_Pb}\\
&\textrm{s.t.} (\ref{cons::off_decisions_feasible}), \label{cons::offloading_and_Pb}
\end{eqnarray}
in which the inter-slice radio resource provisioning coefficients are set according to an arbitrary policy $\varpolicy_\varbandwidthcoef$ and the intra-slice radio and computing power provisioning coefficients are set according to the optimal policies $\hat{\varpolicy}_{\varuplinkratecoef_{\varAP}}$ and $\hat{\varpolicy}_{\varuplinkratecoef_{\varcloud}}$, respectively.}

In what follows we show that the problem (\ref{obj::offloading_and_Pb})-(\ref{cons::offloading_and_Pb}) can be interpreted as a congestion game $\varthegame(\varpolicy_\varbandwidthcoef,\hat{\varpolicy}_{\varuplinkratecoef_{\varAP}},\hat{\varpolicy}_{\varcomppowercoef_{\varcloud}}) = <\varactiveplayersset,(\varactivedecisionsset_\varplayer)_{\varplayer \in \varactiveplayersset},(\Tilde{\varcost}_\varplayer)_{\varplayer \in \varactiveplayersset}>$ with resource-dependent weights $\varcongestionweight_{\varplayer,\varresource}$, $\varplayer \in \varactiveplayersset$, $\varresource \in \Tilde{\varresourcesset}$, and the cost of WD $\varplayer$ in the resulting game is given by~(\ref{eq::WD_cost_under_opt_slices}).
First, observe that $\varcongestionweight_{\varplayer,\varresource}$ can be interpreted as the weight that WD $\varplayer$ contributes to the congestion when using resource $\varresource \in \Tilde{\varresourcesset}$ and thus $\varcongestionweight_{\varresource}(\vardecisionsvector)$ can be interpreted as the total congestion on resource $\varresource$ in strategy profile $\vardecisionsvector$. This in fact implies that the cost (\ref{eq::WD_cost_under_opt_slices}) of WD $\varplayer$ in strategy profile $\vardecisionsvector$ depends on its own resource-dependent weights $\varcongestionweight_{\varplayer,\varresource}$ and on the total congestion $\varcongestionweight_{\varresource}(\vardecisionsvector)$ on the resources it uses. \revo{Therefore, it follows from~\cite{harks2011characterizing} that the problem (\ref{obj::offloading_and_Pb})-(\ref{cons::offloading_and_Pb}) can be interpreted as a congestion game $\varthegame(\varpolicy_\varbandwidthcoef,\hat{\varpolicy}_{\varuplinkratecoef_{\varAP}},\hat{\varpolicy}_{\varcomppowercoef_{\varcloud}})$ with resource dependent weights. 
\revo{Consequently, the \ac{COS} algorithm terminates after a finite number of iterations iff the game $\varthegame(\varpolicy_\varbandwidthcoef,\hat{\varpolicy}_{\varuplinkratecoef_{\varAP}},\hat{\varpolicy}_{\varcomppowercoef_{\varcloud}})$ has a pure strategy Nash equilibrium. \footnote{A pure strategy Nash equilibrium of a strategic game is a collection $\vardecisionsvector^*$ of decisions (called a strategy profile) for which $\tilde{\varcost}_\varplayer(\vardecision^*_\varplayer, \vardecision^*_{-\varplayer})\leq \tilde{\varcost}_\varplayer(\vardecision_\varplayer, \vardecision^*_{-\varplayer})$, $\forall \vardecision_\varplayer$, where $\vardecision^*_{\varplayer}$ and $\vardecision^*_{-\varplayer}$ are standard game theoretical notations for an improvement step of player $\varplayer$ and for the collection of decisions (strategies) of all players other than $\varplayer$, respectively.}}}

Since the cost $\varresourcecost_{\varresource}(\vardecisionsvector) \triangleq m_{\varresource}\varcongestionweight_{\varresource}(\vardecisionsvector)$ of sharing every resource $\varresource \in \Tilde{\varresourcesset}$ is an affine function of the congestion $\varcongestionweight_{\varresource}(\vardecisionsvector)$ on resource $\varresource$, it follows from Theorem $4.2$ in~\cite{harks2011characterizing} that the game $\varthegame(\varpolicy_\varbandwidthcoef,\hat{\varpolicy}_{\varuplinkratecoef_{\varAP}},\hat{\varpolicy}_{\varcomppowercoef_{\varcloud}})$ has the exact potential function\footnote{A function $\varpotentialfunction: \times_\varplayer (\varactivedecisionsset_\varplayer) \rightarrow \mathbb{R}$ is an exact potential for a finite strategic game if for an arbitrary strategy profile $(\vardecision_\varplayer,\vardecision_{-\varplayer})$ and for any improvement step $\vardecision{}^*_\varplayer$ the following holds:
\begin{equation}\label{eq::potental_def}
     \varpotentialfunction(\vardecision_\varplayer,\vardecision_{-\varplayer})\!-\!\varpotentialfunction(\vardecision{}^*_\varplayer,\vardecision_{-\varplayer}) \!=\!   \Tilde{\varcost}_\varplayer(\vardecision_\varplayer,\vardecision_{-\varplayer}) \!-\! \Tilde{\varcost}_\varplayer(\vardecision{}^*_\varplayer,\vardecision_{-\varplayer}). 
\end{equation}
} given by 
\begin{equation}\label{eq::potential_function_1}
  \varpotentialfunction(\vardecisionsvector) = \sum \limits_{\varplayer \in \varactiveplayersset} \sum \limits_{\varresource \in \tilde{\varresourcesset}_{\vardecision_\varplayer}}\varcongestionweight_{\varoplayer,\varresource}\varresourcecost_\varresource^{\leq \varplayer}(\vardecisionsvector),  
\end{equation}
where $\varresourcecost_\varresource^{\leq \varplayer}(\vardecisionsvector) = m_{\varresource}\varcongestionweight_{\varresource}^{\leq \varplayer}(\vardecisionsvector)$ and $\varcongestionweight_{\varresource}^{\leq \varplayer}(\vardecisionsvector) = \sum \limits_{\{\varoplayer \in \varoffloaders_\varresource(\vardecisionsvector)| \varoplayer \leq \varplayer\}}\varcongestionweight_{\varplayer,\varresource}$.

It is well known that in a finite strategic game that admits an exact potential all improvement paths\footnote{An improvement path is a sequence of strategy profiles in which one player at a time changes its strategy through performing an improvement step.} are finite~\cite{monderer1996potential} and thus the existence of the exact potential function~(\ref{eq::potential_function_1}) allows us to use the \ac{COS} algorithm for computing a pure strategy Nash equilibrium $\vardecisionsvector^*$ of the game $\varthegame(\varpolicy_\varbandwidthcoef,\hat{\varpolicy}_{\varuplinkratecoef_{\varAP}},\hat{\varpolicy}_{\varcomppowercoef_{\varcloud}})$, which proves the result.
\end{proof}

\begin{theorem}\label{theo::COS_convergence_all_optimal}
The \ac{COS} algorithm terminates after a finite number of the iterations for the collection $(\hat{\varpolicy}_{\varbandwidthslicecoef},\hat{\varpolicy}_{\varuplinkratecoef_{\varAP}},\hat{\varpolicy}_{\varuplinkratecoef_{\varcloud}})$ of optimal allocation policies.
\end{theorem}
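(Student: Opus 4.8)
The plan is to follow the proof of Theorem~\ref{theo::COS_convergence} almost verbatim, with the single new observation that absorbing the \emph{optimal} inter-slice radio resource allocation policy $\hat{\varpolicy}_\varbandwidthcoef$ into the cost merely replaces one weighted congestion game by another one on a coarser resource set. When \ac{COS} is executed with all policies set optimally, every improvement step re-optimizes the inter-slice coefficients according to~(\ref{eq::optimal_coef_SRO}), so the cost actually faced by WD $\varplayer$ is no longer $\tilde{\varcost}_\varplayer$ but $\bar{\varcost}_\varplayer(\vardecisionsvector)$ from~(\ref{eq::WD_cost_under_opt_slices_opt_operato}), and the system cost is $\bar{\varcost}(\vardecisionsvector)$ from~(\ref{eq::system_cost_under_optimal_slices_optimal_operator}). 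The first step is therefore to record that $\bar{\varcost}_\varplayer$ has exactly the form $\sum_{\varresource \in \bar{\varresourcesset}_{\vardecision_\varplayer}} m_\varresource \varcongestionweight_{\varplayer,\varresource}\varcongestionweight_\varresource(\vardecisionsvector)$, i.e.\ it depends only on WD $\varplayer$'s resource-dependent weights $\varcongestionweight_{\varplayer,\varresource}$ and on the aggregate congestion $\varcongestionweight_\varresource(\vardecisionsvector)$ on the resources in $\bar{\varresourcesset}$ that WD $\varplayer$ uses.

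The step I expect to do the real work is establishing that, on the coarser resource set $\bar{\varresourcesset} = \varAPsset \cup \{\varcloudsset \times \varslicesset\} \cup \varactiveplayersset$, the multiplicative weights $m_\varresource$ are \emph{constants} independent of $\vardecisionsvector$. After substituting~(\ref{eq::optimal_coef_SRO}), each AP is merged across slices into a single shared resource with $m_\varAP = 1$, while $m_{(\varcloud,\varslice)} = 1/\varcomppower_\varcloud^\varslice$ and $m_\varplayer = 1/\varcomppower_\varplayer^\varlocal$ are unchanged. This is precisely the content of the reduction carried out in Section~\ref{subsec::problem_complexity}, and it is what distinguishes this result from Theorem~\ref{theo::COS_convergence}: although re-optimizing $\hat{\varpolicy}_\varbandwidthcoef$ couples the decisions of \emph{all} WDs sharing an AP (even across different slices), this global coupling is entirely absorbed into the congestion term $\varcongestionweight_\varAP(\vardecisionsvector)$ rather than into a decision-dependent weight, so the weighted-congestion-game structure is preserved. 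I would make explicit that, because the $m_\varresource$ are constant, the per-resource cost $\varresourcecost_\varresource(\vardecisionsvector) \triangleq m_\varresource \varcongestionweight_\varresource(\vardecisionsvector)$ remains an affine function of the congestion $\varcongestionweight_\varresource(\vardecisionsvector)$.

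With this in place, the remaining steps mirror Theorem~\ref{theo::COS_convergence}. By~\cite{harks2011characterizing}, the problem is a congestion game $\varthegame(\hat{\varpolicy}_\varbandwidthcoef,\hat{\varpolicy}_{\varuplinkratecoef_{\varAP}},\hat{\varpolicy}_{\varcomppowercoef_{\varcloud}})$ with resource-dependent weights on $\bar{\varresourcesset}$, and since the per-resource costs are affine, Theorem~$4.2$ of~\cite{harks2011characterizing} yields the exact potential
\begin{equation}\nonumber
\bar{\varpotentialfunction}(\vardecisionsvector) = \sum \limits_{\varplayer \in \varactiveplayersset} \sum \limits_{\varresource \in \bar{\varresourcesset}_{\vardecision_\varplayer}} \varcongestionweight_{\varplayer,\varresource}\, \varresourcecost_\varresource^{\leq \varplayer}(\vardecisionsvector),
\end{equation}
obtained from~(\ref{eq::potential_function_1}) by replacing $\tilde{\varresourcesset}$ with $\bar{\varresourcesset}$. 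Since a finite strategic game admitting an exact potential has only finite improvement paths~\cite{monderer1996potential}, and each \ac{COS} iteration is an improvement step with respect to $\bar{\varcost}_\varplayer$, the algorithm reaches a pure strategy Nash equilibrium of $\varthegame(\hat{\varpolicy}_\varbandwidthcoef,\hat{\varpolicy}_{\varuplinkratecoef_{\varAP}},\hat{\varpolicy}_{\varcomppowercoef_{\varcloud}})$ after finitely many iterations. The one point requiring care is to confirm that a \ac{COS} improvement step computed against $\bar{\varcost}_\varplayer$ genuinely coincides with minimizing WD $\varplayer$'s offloading cost under re-optimized policies; this follows from Proposition~\ref{lemm::closed_form_SRO}, which guarantees that the optimal inter-slice policy is a well-defined function of the decision vector, so that $\bar{\varcost}_\varplayer$ is the correct effective cost.
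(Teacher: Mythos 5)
Your proposal is correct and follows essentially the same route as the paper: the paper's own (very brief) proof likewise observes that under $\hat{\varpolicy}_\varbandwidthcoef$ the relevant game becomes $\varthegame(\hat{\varpolicy}_\varbandwidthcoef,\hat{\varpolicy}_{\varuplinkratecoef_{\varAP}},\hat{\varpolicy}_{\varcomppowercoef_{\varcloud}})$ on the coarser resource set $\bar{\varresourcesset}$ with costs $\bar{\varcost}_\varplayer$ from~(\ref{eq::WD_cost_under_opt_slices_opt_operato}), that $m_\varAP = 1$ keeps every per-resource cost affine in the congestion, and hence by~\cite{harks2011characterizing} the game admits an exact potential so that~\cite{monderer1996potential} gives finite termination. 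Your additional remarks — that the inter-slice coupling is absorbed into the congestion term $\varcongestionweight_\varAP(\vardecisionsvector)$ rather than a decision-dependent weight, and that Proposition~\ref{lemm::closed_form_SRO} justifies $\bar{\varcost}_\varplayer$ as the effective cost — are correct elaborations of details the paper leaves implicit.
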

\begin{proof}
By following the same approach as in the proof of Theorem~\ref{theo::COS_convergence}, it is easy to show that given the collection $(\hat{\varpolicy}_{\varbandwidthslicecoef},\hat{\varpolicy}_{\varuplinkratecoef_{\varAP}},\hat{\varpolicy}_{\varuplinkratecoef_{\varcloud}})$ of optimal allocation policies, the problem~(\ref{obj::offloading})-(\ref{cons::offloading}) can be interpreted as a congestion game $\varthegame(\hat{\varpolicy}_\varbandwidthcoef,\hat{\varpolicy}_{\varuplinkratecoef_{\varAP}},\hat{\varpolicy}_{\varcomppowercoef_{\varcloud}}) = <\varactiveplayersset,(\varactivedecisionsset_\varplayer)_{\varplayer \in \varactiveplayersset},(\bar{\varcost}_\varplayer)_{\varplayer \in \varactiveplayersset}>$ with resource-dependent weights $\varcongestionweight_{\varplayer,\varresource}$, $\varplayer \in \varactiveplayersset$, $\varresource \in \bar{\varresourcesset}$, and the cost of WD $\varplayer$ in the resulting game is given by~(\ref{eq::WD_cost_under_opt_slices_opt_operato}). 

\revo{Since $m_\varAP = 1, \forall \varAP \in \varAPsset$, the cost $\varresourcecost_{\varresource}(\vardecisionsvector) \triangleq m_{\varresource}\varcongestionweight_{\varresource}(\vardecisionsvector)$ of sharing every resource $\varresource \in \bar{\varresourcesset}$ is an affine function of the congestion on resource $\varresource$. 
Therefore, the game $\varthegame(\hat{\varpolicy}_\varbandwidthcoef,\hat{\varpolicy}_{\varuplinkratecoef_{\varAP}},\hat{\varpolicy}_{\varcomppowercoef_{\varcloud}})$ is also an exact potential game, and thus the \ac{COS} algorithm computes a pure strategy Nash equilibrium $\vardecisionsvector^*$ of the game $\varthegame(\hat{\varpolicy}_\varbandwidthcoef,\hat{\varpolicy}_{\varuplinkratecoef_{\varAP}},\hat{\varpolicy}_{\varcomppowercoef_{\varcloud}})$, which proves the result.}
\end{proof}

In general, the number of improvement steps can be exponential in a potential game, but as we show next the \ac{COS} algorithm can compute an equilibrium $\vardecisionsvector^*$ of offloading decisions efficiently. 

\begin{theorem}
The \ac{COS} algorithm terminates in $\mathcal{O}(n\frac{\varcost^{min}}{\varcost^{max}}\log \frac{\sum_{\varplayer \in \varplayersset}\vartime_\varplayer^{ex}}{\varpotentialfunction^{min}})$ iterations, where $n \geq 1$, $\varcost^{min}$ and $\varcost^{max}$ are system parameter dependent constants and $\varpotentialfunction^{min}$ is the minimum value of the potential function.
\end{theorem}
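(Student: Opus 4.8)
The plan is to convert the monotone decrease of the potential $\varpotentialfunction$ established for $\varthegame(\varpolicy_\varbandwidthcoef,\hat{\varpolicy}_{\varuplinkratecoef_{\varAP}},\hat{\varpolicy}_{\varcomppowercoef_{\varcloud}})$ in Theorems~\ref{theo::COS_convergence} and~\ref{theo::COS_convergence_all_optimal} into a \emph{multiplicative} contraction, so that the step count becomes logarithmic in the ratio of the initial to the minimal potential. First I would fix the range over which $\varpotentialfunction$ can decrease. Every iteration of \ac{COS} is an improvement step, so by the exact potential property~(\ref{eq::potental_def}) the decrease of $\varpotentialfunction$ at a step in which WD $\varplayer$ moves equals that WD's cost reduction $\Tilde{\varcost}_\varplayer(\vardecisionsvector)-\Tilde{\varcost}_\varplayer(\vardecision^*_\varplayer,\vardecision_{-\varplayer})>0$; hence $\varpotentialfunction$ is strictly and monotonically decreasing along the run and bounded below by $\varpotentialfunction^{min}$. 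Evaluating~(\ref{eq::potential_function_1}) at the all-local start $\vardecisionsvector^0$, where every WD occupies only its own local resource with weight $\varcongestionweight_{\varplayer,\varplayer}=\sqrt{\vartaskcomplexity_\varplayer}$ and $m_\varplayer=1/\varcomppower_\varplayer^\varlocal$, gives $\varpotentialfunction(\vardecisionsvector^0)=\sum_{\varplayer\in\varplayersset}\vartime_\varplayer^{ex}$, which is exactly the numerator inside the logarithm.

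The heart of the argument is a per-step contraction lemma. Using the rewriting $\varpotentialfunction(\vardecisionsvector)=\tfrac12\sum_{\varresource\in\Tilde{\varresourcesset}}m_\varresource\big(\varcongestionweight_\varresource^2(\vardecisionsvector)+\sum_{\varplayer\in\varoffloaders_\varresource(\vardecisionsvector)}\varcongestionweight_{\varplayer,\varresource}^2\big)$ of the potential~(\ref{eq::potential_function_1}), one obtains the sandwich $\tfrac12\Tilde{\varcost}(\vardecisionsvector)\le\varpotentialfunction(\vardecisionsvector)\le\Tilde{\varcost}(\vardecisionsvector)$ relating the potential to the aggregate cost~(\ref{eq::system_cost_under_optimal_slices}). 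I would then lower bound the progress at a single step relative to the current gap $\varpotentialfunction(\vardecisionsvector)-\varpotentialfunction^{min}$: the sum over WDs of the best available improvements is at least this gap (each unilateral best response lowers $\varpotentialfunction$ by the same amount it lowers the mover's cost), so the largest single improvement is at least $1/n$ of it with $n=|\varplayersset|$, and the improvement actually realized by the WD that \ac{COS} selects is a fixed fraction of this largest improvement, the fraction being governed by the extremal per-WD cost constants $\varcost^{min}$ and $\varcost^{max}$. Carrying these estimates through yields a geometric contraction $\varpotentialfunction(\vardecisionsvector^{k+1})-\varpotentialfunction^{min}\le(1-\eta)\big(\varpotentialfunction(\vardecisionsvector^{k})-\varpotentialfunction^{min}\big)$ whose rate satisfies $1/\eta=\mathcal{O}\!\big(n\,\varcost^{min}/\varcost^{max}\big)$.

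Telescoping this contraction over the run and taking logarithms then shows that the number of steps needed to drive $\varpotentialfunction$ to its terminating (equilibrium) value is $\mathcal{O}\!\big(\tfrac1\eta\log\frac{\varpotentialfunction(\vardecisionsvector^0)}{\varpotentialfunction^{min}}\big)=\mathcal{O}\!\big(n\tfrac{\varcost^{min}}{\varcost^{max}}\log\frac{\sum_{\varplayer\in\varplayersset}\vartime_\varplayer^{ex}}{\varpotentialfunction^{min}}\big)$, as claimed; since each step is a single best-response computation over $\varactivedecisionsset_\varplayer$, this also bounds the number of iterations of the while-loop. The main obstacle I anticipate is precisely this contraction lemma: unlike an idealized best-improvement dynamics, \ac{COS} may move a WD whose improvement is small, so the delicate point is to certify that the realized improvement is still a fixed fraction (captured by $\varcost^{min}/\varcost^{max}$ and $1/n$) of the remaining gap $\varpotentialfunction(\vardecisionsvector)-\varpotentialfunction^{min}$. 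Proving that the aggregate available improvement dominates the potential gap, and that the selected WD's improvement cannot be arbitrarily smaller than the best one, are the two estimates that carry the whole bound and whose precise constants must be matched to $\varcost^{min}$ and $\varcost^{max}$.
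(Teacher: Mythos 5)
Your proposal has a genuine gap, and it sits exactly where you located it: the contraction lemma. The claim that the sum over WDs of the best available unilateral improvements dominates the gap $\varpotentialfunction(\vardecisionsvector)-\varpotentialfunction^{min}$ is false for this class of games. A pure Nash equilibrium of $\varthegame$ is only a \emph{local} minimizer of $\varpotentialfunction$ with respect to unilateral deviations, and weighted congestion games of this kind admit equilibria whose potential is strictly larger than $\varpotentialfunction^{min}$ (e.g., with two identical resources and weights $3,3,2,2,2$, both the split $\{3,3\}\,|\,\{2,2,2\}$ and the split $\{3,2,2\}\,|\,\{3,2\}$ are equilibria, with different potential values). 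At any such equilibrium every unilateral improvement is zero while the gap is positive, so your inequality fails; worse, the \ac{COS} run can terminate precisely at such a point, so the asserted geometric contraction $\varpotentialfunction(\vardecisionsvector^{k+1})-\varpotentialfunction^{min}\le(1-\eta)\bigl(\varpotentialfunction(\vardecisionsvector^{k})-\varpotentialfunction^{min}\bigr)$ cannot hold uniformly along the run. Your second estimate --- that the improvement realized by the arbitrarily chosen WD is within a fixed fraction (tied to $\varcost^{min}/\varcost^{max}$ and $1/n$) of the largest available improvement --- is likewise unsupported: the while-loop of \ac{COS} lets \emph{any} improving WD move, and nothing relates its gain to the best gain. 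So both load-bearing steps are missing, and the first one is not repairable as stated.

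The paper's proof avoids the gap-to-global-minimum comparison entirely; $\varpotentialfunction^{min}$ enters only as a \emph{floor}, not as a value the dynamics approaches. Concretely, the paper defines $\varcost^{min}$ as the smallest cost any WD can ever attain and $\varcost^{max}$ as the largest cost a WD can attain when it is alone in the system, uses the exact-potential identity~(\ref{eq::potental_def}) so that the per-iteration decrease of $\varpotentialfunction$ equals the mover's cost decrease, and relates that decrease to the ratio $\varcost^{max}/\varcost^{min}$ together with the bound $\varpotentialfunction(\vardecisionsvector)\geq\varcost^{min}$; from this it argues that every $n\frac{\varcost^{min}}{\varcost^{max}}$ iterations the potential \emph{itself} drops by a constant factor. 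Since $\varpotentialfunction(\vardecisionsvector^0)=\sum_{\varplayer\in\varplayersset}\vartime_\varplayer^{ex}$ (which you computed correctly) and $\varpotentialfunction$ can never fall below $\varpotentialfunction^{min}$, only $\mathcal{O}\bigl(\log\frac{\sum_{\varplayer\in\varplayersset}\vartime_\varplayer^{ex}}{\varpotentialfunction^{min}}\bigr)$ such constant-factor drops can occur, which yields the stated iteration count; termination at an equilibrium is already guaranteed by the finite-improvement property of Theorem~\ref{theo::COS_convergence}. Your preliminary observations --- monotone strict decrease of $\varpotentialfunction$, the initial value, and the sandwich $\tfrac12\Tilde{\varcost}(\vardecisionsvector)\le\varpotentialfunction(\vardecisionsvector)\le\Tilde{\varcost}(\vardecisionsvector)$ --- are all correct (the sandwich is not even needed in the paper's argument), but they cannot substitute for the missing contraction, because the quantity that contracts in the paper is the potential itself, not its distance to the global minimum.
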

\begin{proof}
First, let us define the minimum cost that WD $\varplayer$ can achieve as $\varcost_\varplayer^{min} \triangleq \min \{\varcost_\varplayer^\varlocal, \min_{(\varAP,\varcloud,\varslice) \in \varAPsset \times \varcloudsset \times \varslicesset} (\vardatasize_\varplayer/\varphywirelessrate_{\varplayer,\varAP} + \vartaskcomplexity_{\varplayer,\varslice}/\varcomppower_\varcloud^\varslice)\}$ and let $\varcost^{min} \triangleq \min_{\varplayer \in \varplayersset}\varcost_\varplayer^{min}$. Furthermore, let us define the maximum cost that WD $\varplayer$ can achieve if it was the only WD in the system as $\varcost_\varplayer^{max} \triangleq \max \{\varcost_\varplayer^\varlocal, \min_{(\varAP,\varcloud,\varslice) \in \varAPsset \times \varcloudsset \times \varslicesset} (\vardatasize_\varplayer/\varphywirelessrate_{\varplayer,\varAP} + \vartaskcomplexity_{\varplayer,\varslice}/\varcomppower_\varcloud^\varslice)\}$, and let $\varcost^{max} = \max_{\varplayer \in \varplayersset}\varcost_\varplayer^{max}$.   

Consider now an iteration of the \ac{COS} algorithm where the offloading decision of WD $\varplayer$ is updated from $\vardecision_\varplayer$ to $\vardecision_\varplayer^*$.  We can then write
\begin{eqnarray}\label{eg::potential_decreas_bound}
   &\varpotentialfunction(\vardecision_\varplayer,\vardecision_{-\varplayer})\!-\!\varpotentialfunction(\vardecision{}^*_\varplayer,\vardecision_{-\varplayer}) \!=\! \Tilde{\varcost}_\varplayer(\vardecision_\varplayer,\vardecision_{-\varplayer}) \!-\! \Tilde{\varcost}_\varplayer(\vardecision{}^*_\varplayer,\vardecision_{-\varplayer})\nonumber \\
   &\geq -\varcost^{max} \geq -\frac{\varcost^{max}}{\varcost^{min}}\varpotentialfunction(\vardecision_\varplayer,\vardecision_{-\varplayer}),
\end{eqnarray}
where the equality follows from the definition of the exact potential function~(\ref{eq::potental_def}), the first inequality follows from the fact that $\Tilde{\varcost}_\varplayer(\vardecision_\varplayer,\vardecision_{-\varplayer}) \!-\! \Tilde{\varcost}_\varplayer(\vardecision{}^*_\varplayer,\vardecision_{-\varplayer}) > 0$ since $\vardecision_\varplayer^*$ is an improvement step of WD $\varplayer$ and the last inequality follows from the fact that $\varpotentialfunction(\vardecision_\varplayer,\vardecision_{-\varplayer}) \geq \varcost^{min}$ for any vector $\vardecisionsvector$ of offloading decisions.

Therefore, from~(\ref{eg::potential_decreas_bound}) we obtain $\varpotentialfunction(\vardecision_\varplayer^*,\vardecision_{-\varplayer}) \leq (1 + \frac{\varcost^{max}}{{\varcost^{min}}})$, i.e., the \ac{COS} algorithm decreases the potential function by at least a factor of $(1+\frac{\varcost^{max}}{\varcost^{min}})$. Next, observe that from the definition of the constants $\varcost^{max}$ and $\varcost^{min}$ we have $\frac{\varcost^{max}}{{\varcost^{min}}} \geq 1$. Hence, since $(1 + x)^{\frac{n}{x}} \leq e^n$ holds for $x, n \geq 1$, we obtain that after every $n\frac{\varcost^{min}}{\varcost^{max}}$ iterations of the \ac{COS} algorithm $(1 + \frac{\varcost^{max}}{{\varcost^{min}}})^{n\frac{\varcost^{min}}{\varcost^{max}}} \leq e^n$, and thus every $n\frac{\varcost^{min}}{\varcost^{max}}$ iteration decreases the potential function by a constant factor ($n$ can be chosen as a smallest positive constant for which $n\frac{\varcost^{min}}{\varcost^{max}} \geq 1$). Furthermore, since the \ac{COS} algorithm starts from an offloading decision vector $\vardecisionsvector^0$ in which all WDs perform computation locally, the potential function begins at the value $\varpotentialfunction(\vardecisionsvector^0) = \sum_{\varplayer \in \varplayersset}\vartime_\varplayer^{ex}$ and cannot drop lower than $\varpotentialfunction^{min}$. Therefore, the \ac{COS} algorithm converges in $\mathcal{O}(n\frac{\varcost^{min}}{\varcost^{max}}\log \frac{\sum_{\varplayer \in \varplayersset}\vartime_\varplayer^{ex}}{\varpotentialfunction^{min}})$ iterations, which proves the result.
\end{proof}

In what follows we address the efficiency of the \ac{COS} algorithm in terms of the cost approximation ratio. 

\begin{theorem}\label{theo::approx_offloading_decisions_slices_optimal}
The \ac{COS} algorithm is a $2.62$-approximation algorithm for the  optimization problem~(\ref{obj::offloading_and_Pb})-(\ref{cons::offloading_and_Pb}) in terms of the system cost, i.e., $\frac{\Tilde{\varcost}(\vardecisionsvector^*)}{\Tilde{\varcost}(\hat{\vardecisionsvector})} \leq 2.62$.
\end{theorem}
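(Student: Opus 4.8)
The plan is to recognize the problem~(\ref{obj::offloading_and_Pb})-(\ref{cons::offloading_and_Pb}) as a weighted atomic congestion game with linear resource cost functions $\varresourcecost_\varresource(\vardecisionsvector) = m_\varresource \varcongestionweight_\varresource(\vardecisionsvector)$, and to derive the bound as a price-of-anarchy estimate via the classical golden-ratio argument. The starting point is the identity $\Tilde{\varcost}(\vardecisionsvector) = \sum_{\varplayer \in \varactiveplayersset}\Tilde{\varcost}_\varplayer(\vardecisionsvector)$, obtained by applying the double-counting relation $\sum_{\varplayer: \varresource \in \Tilde{\varresourcesset}_{\vardecision_\varplayer}}\varcongestionweight_{\varplayer,\varresource} = \varcongestionweight_\varresource(\vardecisionsvector)$ to~(\ref{eq::WD_cost_under_opt_slices}) and comparing with~(\ref{eq::system_cost_under_optimal_slices}). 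Since by Theorem~\ref{theo::COS_convergence} the \ac{COS} algorithm returns a pure Nash equilibrium $\vardecisionsvector^*$ of the game $\varthegame$, each WD $\varplayer$ satisfies $\Tilde{\varcost}_\varplayer(\vardecisionsvector^*) \leq \Tilde{\varcost}_\varplayer(\hat{\vardecision}_\varplayer, \vardecision^*_{-\varplayer})$, where $\hat{\vardecisionsvector}$ denotes an optimal decision vector. Summing this inequality over all WDs yields $\Tilde{\varcost}(\vardecisionsvector^*) \leq \sum_{\varplayer \in \varactiveplayersset}\Tilde{\varcost}_\varplayer(\hat{\vardecision}_\varplayer, \vardecision^*_{-\varplayer})$.

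Next I would bound the right-hand side. For a unilateral deviation of WD $\varplayer$ to $\hat{\vardecision}_\varplayer$, the congestion on each resource $\varresource \in \Tilde{\varresourcesset}_{\hat{\vardecision}_\varplayer}$ is at most $\varcongestionweight_\varresource(\vardecisionsvector^*) + \varcongestionweight_{\varplayer,\varresource}$, because only WD $\varplayer$'s own weight is added while the weights of all other WDs are inherited from $\vardecisionsvector^*$. Hence $\Tilde{\varcost}_\varplayer(\hat{\vardecision}_\varplayer, \vardecision^*_{-\varplayer}) \leq \sum_{\varresource \in \Tilde{\varresourcesset}_{\hat{\vardecision}_\varplayer}} m_\varresource \varcongestionweight_{\varplayer,\varresource}\big(\varcongestionweight_\varresource(\vardecisionsvector^*) + \varcongestionweight_{\varplayer,\varresource}\big)$. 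Writing $\hat{\varcongestionweight}_\varresource = \varcongestionweight_\varresource(\hat{\vardecisionsvector})$, summing over $\varplayer$ and interchanging summations with the double-counting identity turns the cross term $\sum_\varplayer \sum_{\varresource \in \Tilde{\varresourcesset}_{\hat{\vardecision}_\varplayer}} m_\varresource \varcongestionweight_{\varplayer,\varresource}\varcongestionweight_\varresource(\vardecisionsvector^*)$ into $\sum_{\varresource \in \Tilde{\varresourcesset}} m_\varresource \varcongestionweight_\varresource(\vardecisionsvector^*)\hat{\varcongestionweight}_\varresource$. For the quadratic term I would invoke $\varcongestionweight_{\varplayer,\varresource} \leq \hat{\varcongestionweight}_\varresource$ whenever $\varresource \in \Tilde{\varresourcesset}_{\hat{\vardecision}_\varplayer}$ (as $\varcongestionweight_{\varplayer,\varresource}$ is one summand of $\hat{\varcongestionweight}_\varresource$), which bounds $\sum_\varplayer \sum_{\varresource \in \Tilde{\varresourcesset}_{\hat{\vardecision}_\varplayer}} m_\varresource \varcongestionweight_{\varplayer,\varresource}^2$ by $\sum_{\varresource \in \Tilde{\varresourcesset}} m_\varresource \hat{\varcongestionweight}_\varresource^2 = \Tilde{\varcost}(\hat{\vardecisionsvector})$. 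Combining these gives $\Tilde{\varcost}(\vardecisionsvector^*) \leq \sum_{\varresource \in \Tilde{\varresourcesset}} m_\varresource \varcongestionweight_\varresource(\vardecisionsvector^*)\hat{\varcongestionweight}_\varresource + \Tilde{\varcost}(\hat{\vardecisionsvector})$.

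Finally I would close the argument with the Cauchy--Schwarz inequality applied to the vectors $(\sqrt{m_\varresource}\,\varcongestionweight_\varresource(\vardecisionsvector^*))_\varresource$ and $(\sqrt{m_\varresource}\,\hat{\varcongestionweight}_\varresource)_\varresource$, which bounds the cross term by $\sqrt{\Tilde{\varcost}(\vardecisionsvector^*)}\sqrt{\Tilde{\varcost}(\hat{\vardecisionsvector})}$. Writing $\rho = \Tilde{\varcost}(\vardecisionsvector^*)/\Tilde{\varcost}(\hat{\vardecisionsvector})$ and dividing by $\Tilde{\varcost}(\hat{\vardecisionsvector})$ reduces everything to the scalar inequality $\rho \leq \sqrt{\rho} + 1$; setting $u = \sqrt{\rho}$ gives $u^2 - u - 1 \leq 0$, whose relevant root is the golden ratio, so that $\rho \leq \big(\tfrac{1+\sqrt{5}}{2}\big)^2 = \tfrac{3+\sqrt{5}}{2} \approx 2.618 < 2.62$. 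The double-counting rearrangements are routine; the step demanding the most care is the deviation bound, where one must check that passing from $\vardecision^*_{-\varplayer}$ to $\vardecisionsvector^*$ only inflates the congestion, i.e.\ that $\varcongestionweight_\varresource(\hat{\vardecision}_\varplayer,\vardecision^*_{-\varplayer}) \leq \varcongestionweight_\varresource(\vardecisionsvector^*) + \varcongestionweight_{\varplayer,\varresource}$ holds irrespective of whether $\varplayer$ already used $\varresource$ in $\vardecisionsvector^*$, since this is exactly the point at which the linearity of the resource costs is exploited.
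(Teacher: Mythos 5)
Your proposal is correct and follows essentially the same route as the paper's proof: the Nash-equilibrium deviation inequality $\Tilde{\varcost}_\varplayer(\vardecisionsvector^*) \leq \sum_{\varresource \in \Tilde{\varresourcesset}_{\hat{\vardecision}_\varplayer}} m_\varresource\big(\varcongestionweight_\varresource(\vardecisionsvector^*) + \varcongestionweight_{\varplayer,\varresource}\big)\varcongestionweight_{\varplayer,\varresource}$, summation over WDs with reordering, the bound $\sum_{\varplayer \in \varoffloaders_\varresource(\hat{\vardecisionsvector})}\varcongestionweight_{\varplayer,\varresource}^2 \leq \varcongestionweight_\varresource^2(\hat{\vardecisionsvector})$, Cauchy--Schwarz, and the scalar inequality $\rho \leq \sqrt{\rho}+1$ yielding $(3+\sqrt{5})/2 \approx 2.62$. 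The only cosmetic difference is that you route the deviation bound explicitly through $\Tilde{\varcost}_\varplayer(\hat{\vardecision}_\varplayer,\vardecision^*_{-\varplayer})$ with a case check on whether $\varplayer$ already used the resource, whereas the paper writes the same bound directly via the intersection/difference split of the resource sets.
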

\begin{proof}
Let us denote by $\varactivedecisionsset^* \subseteq \varactivedecisionsset$ the set of all vectors of offloading decisions that can be computed using the \ac{COS} algorithm given any policy $\varpolicy_\varbandwidthcoef$ and the collection $(\hat{\varpolicy}_{\varuplinkratecoef_{\varAP}},\hat{\varpolicy}_{\varcomppowercoef_{\varcloud}})$ of the optimal resource allocation policies of slices. Furthermore, let us consider a vector $\vardecisionsvector{}^{*} \in \varactivedecisionsset^*$ and an arbitrary vector $\hat{\vardecisionsvector} \in \varactivedecisionsset$ of offloading decisions. Since there is no WD $\varplayer$ for which the cost $\Tilde{\varcost}_\varplayer(\vardecisionsvector{}^{*})$ can be decreased by unilaterally changing its offloading decision we have the following
 \begin{eqnarray}\label{eq::PoA_per_WD}
  &\hspace{-1cm}\Tilde{\varcost}_\varplayer(\vardecisionsvector{}^{*}) \leq \sum \limits_{\varresource \in \Tilde{\varresourcesset}_{\vardecision{}^{*}_\varplayer} \cap \Tilde{\varresourcesset}_{\hat{\vardecision}_\varplayer}}\!\!\!\!m_\varresource\varcongestionweight_{\varplayer,\varresource}\varcongestionweight_\varresource(\vardecisionsvector{}^{*}) + \!\!\!\!\!\\ \nonumber
  &\hspace{-0.4cm}\sum \limits_{\varresource \in \Tilde{\varresourcesset}_{\vardecision{}^{*}_\varplayer} \setminus \Tilde{\varresourcesset}_{\hat{\vardecision}_\varplayer}} \!\!\!\!\!\!\!\! m_\varresource\big(\varcongestionweight_{\varresource}(\vardecisionsvector{}^{*}) + \varcongestionweight_{\varplayer,\varresource}\big)\varcongestionweight_{\varplayer,\varresource} \leq \!\!\sum \limits_{\varresource \in \Tilde{\varresourcesset}_{\hat{\vardecision}_\varplayer}}\!\!\!\!m_\varresource\big(\varcongestionweight_\varresource(\vardecisionsvector{}^{*}) +\varcongestionweight_{\varplayer,\varresource}\big) \varcongestionweight_{\varplayer,\varresource},\!\!\!\!\!\!\!\!
 \end{eqnarray}
 where $\Tilde{\varresourcesset}_{\vardecision{}^{*}_\varplayer} \subset \Tilde{\varresourcesset}$ and $\Tilde{\varresourcesset}_{\hat{\vardecision}_\varplayer} \subset \Tilde{\varresourcesset}$ denote the the set of resources that WD $\varplayer$ uses in $\vardecisionsvector^*$ and $\hat{\vardecisionsvector}$, respectively.
 By summing (\ref{eq::PoA_per_WD}) over all WDs $\varplayer \in \varactiveplayersset$ and by reordering the summations we obtain 
  \begin{eqnarray}\label{eq::PoA_all_WD_1}
  \Tilde{\varcost}(\vardecisionsvector{}^{*}) \!\leq\!\! \sum \limits_{\varresource \in \varactiveresourcesset} \sum \limits_{\varplayer \in \varoffloaders_\varresource(\hat{\vardecisionsvector})} \!\!\!\!\!\!m_\varresource\big(\varcongestionweight_\varresource(\vardecisionsvector{}^{*})\varcongestionweight_{\varplayer,\varresource} \!+\!\varcongestionweight_{\varplayer,\varresource}^2 \big).\!\!\!\!
  \end{eqnarray}
   From the definition (\ref{eq::weights}) of the total weight $\varcongestionweight_\varresource(\vardecisionsvector)$ on resource $\varresource \!\in\! \Tilde{\varresourcesset}$ and from $\sum \limits_{\varplayer \in \varoffloaders_\varresource(\vardecisionsvector)}\!\varcongestionweight_{\varplayer,\varresource}^2 \leq \varcongestionweight_{\varresource}^{2}(\vardecisionsvector)$ we obtain
   $$\Tilde{\varcost}(\vardecisionsvector{}^{*}) \!\leq\! \sum \limits_{\varresource \in \varactiveresourcesset}\!\!m_\varresource \varcongestionweight_\varresource(\vardecisionsvector{}^{*})\varcongestionweight_\varresource(\hat{\vardecisionsvector}) + \sum \limits_{\varresource \in \varactiveresourcesset}\!\!m_\varresource \varcongestionweight_\varresource^2(\hat{\vardecisionsvector}).
  $$
   Next, let us recall the Cauchy-Schwartz inequality $\sum \limits_{\varresource \in \varactiveresourcesset} a_\varresource b_\varresource \leq \sqrt{\sum \limits_{\varresource \in \varactiveresourcesset} a_\varresource^2 \sum \limits_{\varresource \in \varactiveresourcesset} b_\varresource^2}$. By defining $a_\varresource \triangleq \sqrt{m_\varresource}\varcongestionweight_\varresource(\vardecisionsvector{}^{*})$ and $b_\varresource \triangleq \sqrt{m_\varresource}\varcongestionweight_\varresource(\hat{\vardecisionsvector})$ we obtain the following
  \begin{eqnarray}\label{eq::PoA_all_WD_3}
   &\!\!\!\!\Tilde{\varcost}(\vardecisionsvector{}^{*}) \!\leq\!\! \sqrt{\sum \limits_{\varresource \in \varactiveresourcesset}m_\varresource\varcongestionweight_\varresource^2(\vardecisionsvector{}^{*}) \sum \limits_{\varresource \in \varactiveresourcesset} m_\varresource\varcongestionweight_\varresource^2(\hat{\vardecisionsvector})}\nonumber \\ &\!\!\!\!+ \sum \limits_{\varresource \in \varactiveresourcesset} m_\varresource\varcongestionweight_\varresource^2(\hat{\vardecisionsvector}).
   \end{eqnarray}
   By dividing the right and the left side of (\ref{eq::PoA_all_WD_3}) by $\sum \limits_{\varresource \in \varactiveresourcesset} \varcongestionweight_\varresource^2(\hat{\vardecisionsvector})>0$ and by using (\ref{eq::system_cost_under_optimal_slices}) we obtain
    \begin{eqnarray}\label{eq::PoA_all_WD_4}
   \frac{\Tilde{\varcost}(\vardecisionsvector{}^{*})}{\Tilde{\varcost}(\hat{\vardecisionsvector})} \leq \sqrt{\frac{\Tilde{\varcost}(\vardecisionsvector{}^{*})}{\Tilde{\varcost}(\hat{\vardecisionsvector})}} + 1.
   \end{eqnarray}
   Since (\ref{eq::PoA_all_WD_4}) holds for any vector $\vardecisionsvector{}^{*} \in \varactivedecisionsset^*$ of offloading decisions computed by the \ac{COS} algorithm and for any vector $\hat{\vardecisionsvector} \in \varactivedecisionsset$ of offloading decisions of the WDs, it holds for the worst vector $\vardecisionsvector^* = \argmax \nolimits_{\vardecisionsvector \in \varactivedecisionsset^*}\Tilde{\varcost}(\vardecisionsvector)$ of offloading decisions that can be computed using the \ac{COS} algorithm and for the optimal $\hat{\vardecisionsvector} = \argmin \nolimits_{\vardecisionsvector \in \varactivedecisionsset}\Tilde{\varcost}(\vardecisionsvector)$ solution too. Therefore, by solving (\ref{eq::PoA_all_WD_4}) we obtain that the cost approximation ratio $\frac{\Tilde{\varcost}(\vardecisionsvector^*)}{\Tilde{\varcost}(\hat{\vardecisionsvector})}$ of the \ac{COS} algorithm is upper bounded by $ (3 + \sqrt{5})/2 \cong 2.62$, which proves the theorem.
   \end{proof}

\revo{\begin{theorem}\label{theo::approx_offloading_decisions_slices_optimal_operator_optimal}
The \ac{COS} algorithm is a $2.62$-approximation algorithm for the  optimization problem~(\ref{obj::offloading})-(\ref{cons::offloading}) in terms of the system cost, i.e., $\frac{\bar{\varcost}(\vardecisionsvector^*)}{\bar{\varcost}(\hat{\vardecisionsvector})} \leq 2.62$.
\end{theorem}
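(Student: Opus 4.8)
The plan is to reuse, essentially verbatim, the equilibrium-based argument from the proof of Theorem~\ref{theo::approx_offloading_decisions_slices_optimal}, now applied to the cost $\bar{\varcost}$ and the resource set $\bar{\varresourcesset}$ rather than to $\Tilde{\varcost}$ and $\Tilde{\varresourcesset}$. The single fact that must be in place before that argument can be transported is that the game $\varthegame(\hat{\varpolicy}_\varbandwidthcoef,\hat{\varpolicy}_{\varuplinkratecoef_{\varAP}},\hat{\varpolicy}_{\varcomppowercoef_{\varcloud}})$ under the \emph{optimal} inter-slice policy still has resource costs that are affine in the congestion. This is exactly what Theorem~\ref{theo::COS_convergence_all_optimal} establishes: substituting the optimal coefficients~(\ref{eq::optimal_coef_SRO}) collapses each slice-specific AP resource $(\varAP,\varslice)$ into a single AP resource $\varAP$ shared across slices with multiplier $m_\varAP = 1$, so the per-WD cost takes the form~(\ref{eq::WD_cost_under_opt_slices_opt_operato}) and each resource cost $\varresourcecost_\varresource(\vardecisionsvector) = m_\varresource\varcongestionweight_\varresource(\vardecisionsvector)$ is affine. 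Consequently the vector $\vardecisionsvector^*$ returned by \ac{COS} is a pure strategy Nash equilibrium of this game, so no WD can lower its cost by unilaterally deviating.

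Granted that, I would proceed in four steps. First, fix $\vardecisionsvector^* \in \varactivedecisionsset^*$ and any $\hat{\vardecisionsvector} \in \varactivedecisionsset$, and write the equilibrium inequality per WD: since deviating to $\hat{\vardecision}_\varplayer$ cannot help, bounding the post-deviation congestion on each newly used resource by $\varcongestionweight_\varresource(\vardecisionsvector^*) + \varcongestionweight_{\varplayer,\varresource}$ yields $\bar{\varcost}_\varplayer(\vardecisionsvector^*) \leq \sum_{\varresource \in \bar{\varresourcesset}_{\hat{\vardecision}_\varplayer}} m_\varresource(\varcongestionweight_\varresource(\vardecisionsvector^*) + \varcongestionweight_{\varplayer,\varresource})\varcongestionweight_{\varplayer,\varresource}$, exactly as in~(\ref{eq::PoA_per_WD}). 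Second, I would sum over $\varplayer \in \varactiveplayersset$ and reorder the summation to obtain the analogue of~(\ref{eq::PoA_all_WD_1}), then apply $\sum_{\varplayer \in \varoffloaders_\varresource(\hat{\vardecisionsvector})} \varcongestionweight_{\varplayer,\varresource}^2 \leq \varcongestionweight_\varresource^2(\hat{\vardecisionsvector})$ together with the definition of $\varcongestionweight_\varresource$ to reach $\bar{\varcost}(\vardecisionsvector^*) \leq \sum_{\varresource \in \bar{\varresourcesset}} m_\varresource\varcongestionweight_\varresource(\vardecisionsvector^*)\varcongestionweight_\varresource(\hat{\vardecisionsvector}) + \sum_{\varresource \in \bar{\varresourcesset}} m_\varresource \varcongestionweight_\varresource^2(\hat{\vardecisionsvector})$.

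Third, I would apply Cauchy--Schwarz with $a_\varresource = \sqrt{m_\varresource}\varcongestionweight_\varresource(\vardecisionsvector^*)$ and $b_\varresource = \sqrt{m_\varresource}\varcongestionweight_\varresource(\hat{\vardecisionsvector})$, divide through by $\bar{\varcost}(\hat{\vardecisionsvector}) > 0$, and invoke the closed form~(\ref{eq::system_cost_under_optimal_slices_optimal_operator}) to get $\bar{\varcost}(\vardecisionsvector^*)/\bar{\varcost}(\hat{\vardecisionsvector}) \leq \sqrt{\bar{\varcost}(\vardecisionsvector^*)/\bar{\varcost}(\hat{\vardecisionsvector})} + 1$. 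Finally, setting $y = \sqrt{\bar{\varcost}(\vardecisionsvector^*)/\bar{\varcost}(\hat{\vardecisionsvector})}$ reduces this to $y^2 - y - 1 \leq 0$, whose positive root is the golden ratio, so $\bar{\varcost}(\vardecisionsvector^*)/\bar{\varcost}(\hat{\vardecisionsvector}) = y^2 \leq (3 + \sqrt{5})/2 \cong 2.62$; taking $\vardecisionsvector^*$ as the worst \ac{COS} output and $\hat{\vardecisionsvector}$ as the minimizer then finishes the bound. The main obstacle here is conceptual rather than computational: I must verify that aggregating APs across slices under $\hat{\varpolicy}_\varbandwidthcoef$ does not spoil the affine resource-cost structure (the $m_\varAP = 1$ observation), since that is the sole ingredient distinguishing this case from Theorem~\ref{theo::approx_offloading_decisions_slices_optimal}. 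Once it is granted via Theorem~\ref{theo::COS_convergence_all_optimal}, every remaining step is identical to the earlier proof.
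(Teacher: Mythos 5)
Your proposal is correct and takes essentially the same route as the paper: the paper's own proof of this theorem is a one-line remark that the result follows by the approach of Theorem~\ref{theo::approx_offloading_decisions_slices_optimal}, and the details you supply (the affine resource-cost structure with $m_\varAP = 1$ inherited from Theorem~\ref{theo::COS_convergence_all_optimal}, then the per-WD equilibrium inequality, summation and reordering, Cauchy--Schwarz, and the quadratic inequality yielding the $(3+\sqrt{5})/2 \cong 2.62$ bound) are exactly what that remark entails.
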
 
\begin{proof}
The result can be easily obtained by following the approach used to prove Theorem~\ref{theo::approx_offloading_decisions_slices_optimal}.
\end{proof}}
Finally, from Theorem~\ref{theo::policies_offloading_decomposition} and Theorem~\ref{theo::approx_offloading_decisions_slices_optimal_operator_optimal} we obtain  the approximation ratio bound for the proposed decomposition-based algorithm.
\begin{theorem}\label{theo::PoA_ES_COG}
Given the collection $(\hat{\varpolicy}_{\varbandwidthslicecoef},\hat{\varpolicy}_{\varuplinkratecoef_{\varAP}},\hat{\varpolicy}_{\varuplinkratecoef_{\varcloud}})$ of optimal allocation policies, the proposed decomposition-based algorithm computes a $2.62$-approximation solution to the \ac{JSS-ERM} problem.
\end{theorem}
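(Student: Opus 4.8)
The plan is to chain the exact decomposition of Theorem~\ref{theo::policies_offloading_decomposition} with the approximation guarantee of Theorem~\ref{theo::approx_offloading_decisions_slices_optimal_operator_optimal}. The conceptual pivot is that the decomposition is \emph{optimality preserving} at the policy level: by Lemma~\ref{lemm::closed_form_general} together with Propositions~\ref{lemm::closed_form_slices} and~\ref{lemm::closed_form_SRO}, the functional form of the optimal provisioning coefficients (\ref{eq::optimal_coef_slices})--(\ref{eq::optimal_coef_SRO}) is independent of which offloading vector $\vardecisionsvector$ is chosen. Substituting these closed forms into (\ref{eq::system_cost}) therefore collapses the full JSS-ERM objective to the single-variable offloading cost $\bar{\varcost}(\vardecisionsvector)$ of (\ref{eq::system_cost_under_optimal_slices_optimal_operator}) with no loss of optimality.

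First I would invoke Theorem~\ref{theo::policies_offloading_decomposition} to write the optimal JSS-ERM value as
\[
\min_{\vardecisionsvector,\varpolicy_{\varbandwidthcoef},\varpolicy_{\varuplinkratecoef_\varAP},\varpolicy_{\varcomppowercoef_\varcloud}} \varcost = \min_{\vardecisionsvector} \bar{\varcost}(\vardecisionsvector) = \bar{\varcost}(\hat{\vardecisionsvector}),
\]
where $\hat{\vardecisionsvector}$ is an optimal offloading vector under the optimal policies. Next I would observe that the decomposition-based algorithm realizes exactly this two-stage structure: it fixes the provisioning coefficients to their optimal closed forms (\ref{eq::optimal_coef_slices})--(\ref{eq::optimal_coef_SRO}) and then runs the \ac{COS} algorithm, which by Theorem~\ref{theo::COS_convergence_all_optimal} terminates at a pure-strategy Nash equilibrium $\vardecisionsvector^*$. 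The system cost delivered by the algorithm is thus precisely $\bar{\varcost}(\vardecisionsvector^*)$, since the algorithm evaluates the true objective at $\vardecisionsvector^*$ with the optimal policies substituted in.

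It then remains only to bound the gap between $\bar{\varcost}(\vardecisionsvector^*)$ and $\bar{\varcost}(\hat{\vardecisionsvector})$, which is supplied directly by Theorem~\ref{theo::approx_offloading_decisions_slices_optimal_operator_optimal}. Combining it with the previous display yields
\[
\frac{\bar{\varcost}(\vardecisionsvector^*)}{\min_{\vardecisionsvector,\varpolicy_{\varbandwidthcoef},\varpolicy_{\varuplinkratecoef_\varAP},\varpolicy_{\varcomppowercoef_\varcloud}} \varcost} = \frac{\bar{\varcost}(\vardecisionsvector^*)}{\bar{\varcost}(\hat{\vardecisionsvector})} \le \frac{3+\sqrt{5}}{2} \cong 2.62,
\]
which is the claimed approximation ratio for the full JSS-ERM problem.

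Since the argument is a composition of two already-established results, there is no heavy technical obstacle; the one point requiring care is the bookkeeping identification. One must confirm that the quantity $\bar{\varcost}(\vardecisionsvector^*)$ bounded in Theorem~\ref{theo::approx_offloading_decisions_slices_optimal_operator_optimal} is genuinely the JSS-ERM cost $\varcost(\vardecisionsvector^*,\hat{\varpolicy}_{\varbandwidthcoef},\hat{\varpolicy}_{\varuplinkratecoef_\varAP},\hat{\varpolicy}_{\varcomppowercoef_\varcloud})$ attained by the algorithm's output, and that $\bar{\varcost}(\hat{\vardecisionsvector})$ is the \emph{global} optimum of (\ref{obj::func})--(\ref{cons::intra_coeff_continuous}) rather than a conditional one. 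Both facts follow because (\ref{eq::system_cost_under_optimal_slices_optimal_operator}) was obtained precisely by substituting (\ref{eq::optimal_coef_slices})--(\ref{eq::optimal_coef_SRO}) into (\ref{eq::system_cost}), and because the decomposition in Theorem~\ref{theo::policies_offloading_decomposition} guarantees that minimizing $\bar{\varcost}(\vardecisionsvector)$ over $\vardecisionsvector$ recovers the joint minimum over decisions and policies.
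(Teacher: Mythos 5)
Your proposal is correct and follows essentially the same route as the paper, which proves this theorem in one sentence by combining Theorem~\ref{theo::policies_offloading_decomposition} (the optimality-preserving decomposition, so that $\min_{\vardecisionsvector}\bar{\varcost}(\vardecisionsvector)$ equals the global JSS-ERM optimum) with Theorem~\ref{theo::approx_offloading_decisions_slices_optimal_operator_optimal} (the bound $\bar{\varcost}(\vardecisionsvector^*)/\bar{\varcost}(\hat{\vardecisionsvector}) \leq 2.62$ for the COS output under the optimal policies). Your added bookkeeping — identifying $\bar{\varcost}(\vardecisionsvector^*)$ with the algorithm's actual JSS-ERM cost and $\bar{\varcost}(\hat{\vardecisionsvector})$ with the unconditional optimum — simply makes explicit what the paper leaves implicit.
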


\section{Numerical Results}
\color{black}
\label{sec::numerical}
\begin{figure*}[tb]
	\begin{minipage}{0.5\textwidth}
		\begin{center}
			\includegraphics[width=\columnwidth]{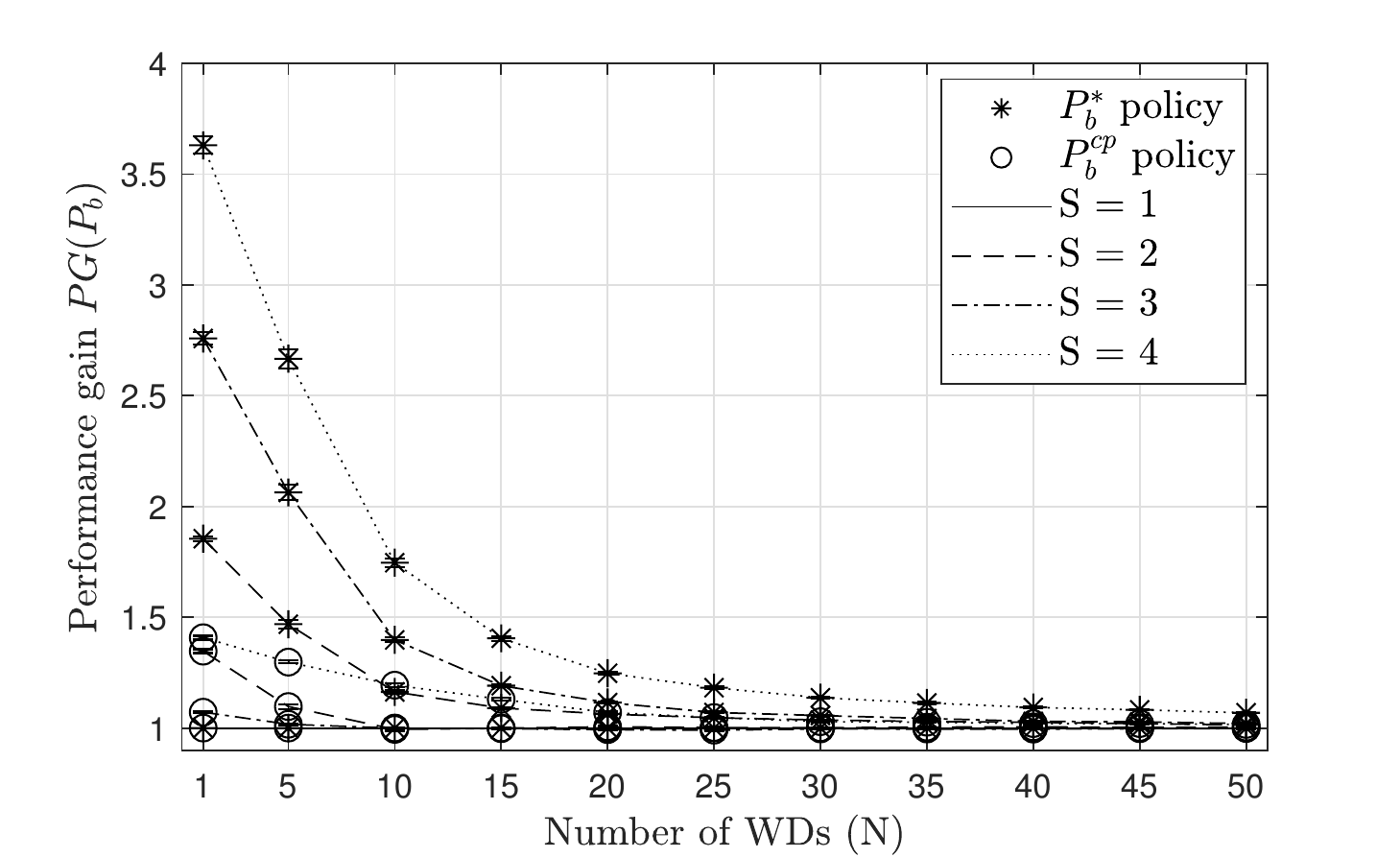}
			\caption{\emph{Performance gain} vs. number of WDs $\varplayerssetdym$.}
			\label{fig::perf_gain}
		\end{center}
	\end{minipage}
	\hspace{0.015\textwidth}
	\begin{minipage}{0.5\textwidth}
		\begin{center}
			\includegraphics[width=\columnwidth]{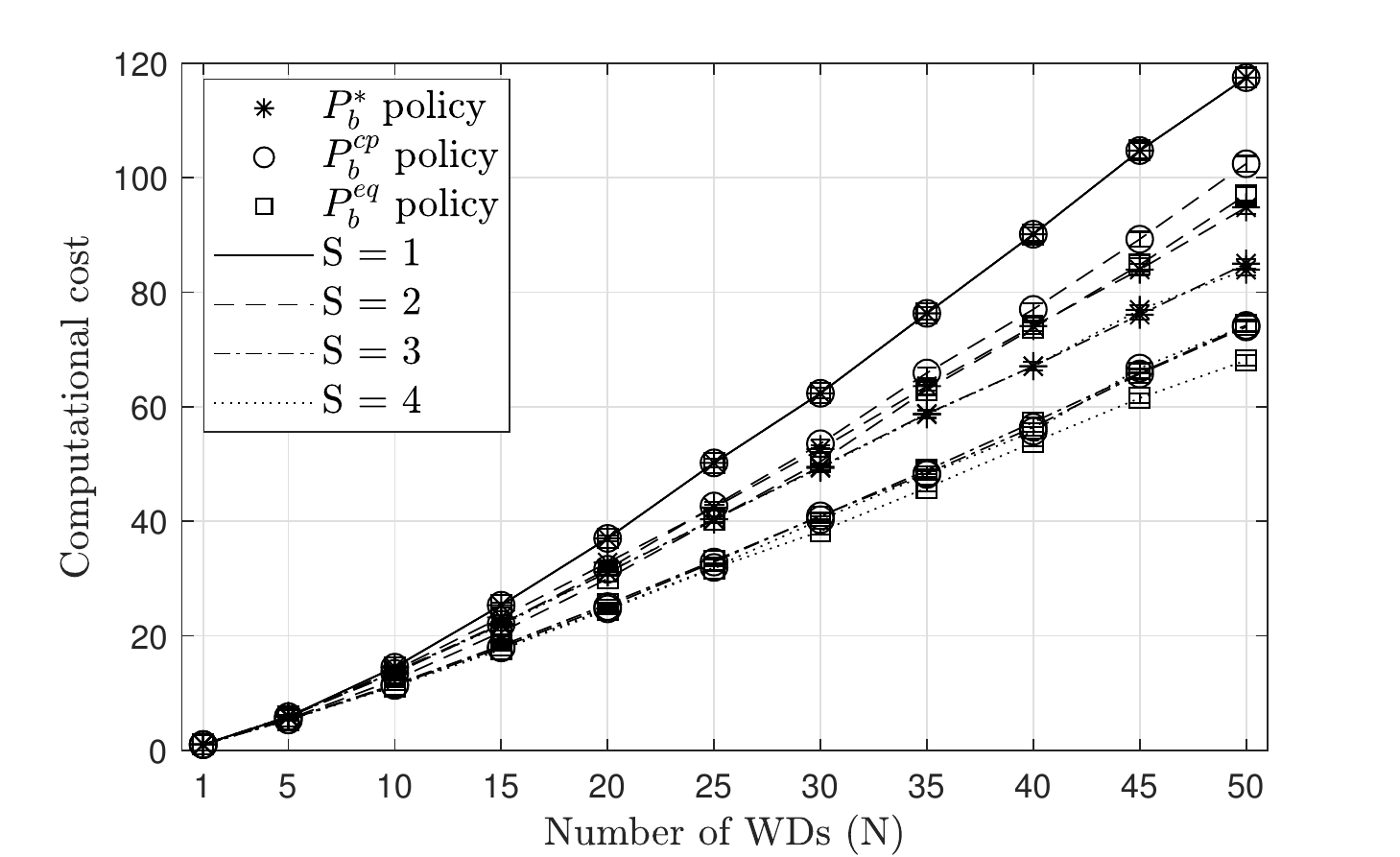}
			\caption{\emph{Computational complexity} vs. number of WDs $\varplayerssetdym$.}
			\label{fig::comp_complexity}
		\end{center}
	\end{minipage}
	\vspace{-0.1cm}
\end{figure*}
We used extensive simulations to evaluate the performance of the proposed resource allocation algorithm. 
To capture the potentially uneven spatial
distribution of ECs, WDs and APs in a dense urban area, we consider a square area of $1 km \times 1 km$ in which WDs and $3$ ECs are placed uniformly at random and $5$ APs are placed at random on a \emph{regular grid} with $25$ points. The channel gain of WD $\varplayer$ to AP $\varAP$ depends on their Euclidean distance $\varplayerdistance_{\varplayer,\varAP}$ and on the path loss exponent $\varpathlossexponent$, which we set to $4$ according to the path loss model in urban and suburban areas~\cite{saunders2007antennas}.
We set the bandwidth $\varbandwidth_{\varAP}$ of $2$ APs to $18$\textit{MHz} and the bandwidth of $3$ APs to $27$\textit{MHz}, corresponding to $25$ and $75$ resource blocks that are $12\times60$\textit{KHz} and $12\times30$\textit{KHz} subcarriers wide~\cite{tsgr2011lte,zaidi20185g}, respectively. We consider that the transmit power $\vartransmissionpower_{\varplayer,\varAP}$ of every WD $\varplayer$ is uniformly distributed on $[10^{-6},0.1]$\textit{W} according to~\cite{lauridsen2014empirical}. We calculate the total thermal noise in
a $\varbandwidth_{\varAP}$\textit{MHz} channel as $\varnoisepower(\text{dBm}) = - 174 + 10 log(\varbandwidth_{\varAP})$ according to~\cite{da2018understanding} and the transmission rate $\varphywirelessrate_{\varplayer,\varAP}$ achievable to WD $\varplayer$ at AP $\varAP$ as $\varphywirelessrate_{\varplayer,\varAP} \!=\! \varbandwidth_{\varAP} log(1 \!+\! \varplayerdistance_{\varplayer,\varAP}^{-\varpathlossexponent}\frac{\vartransmissionpower_{\varplayer,\varAP}}{\varnoisepower})$.

 To set the values for the computational capabilities of the WDs, we consider a line of Samsung Galaxy phones, from the oldest version with $1$ core operating at $1$\textit{GHz} to the one of the newest versions with $8$ cores operating at $2.84$\textit{GHz}. We consider that EC $c_1$ is equipped with $36$ vCPUs operating at $2.3$\textit{GHz} and $96$ vCPUs operating at $3.6$\textit{GHz}. 
 We consider that EC $c_2$ and EC $c_3$ are equipped with $1$ GPU each (with 2048 parallel processing cores operating at $557$\textit{MHz} and 2496 parallel processing cores operating at $560$\textit{MHz}, respectively). 
 Given the measurements reported in~\cite{codrescu2014hexagon,hackenberg2015energy,Takefuji:2017:GPC:3161341} we assume that a WD, a CPU and a GPU can execute on average $2,3$ and $1$ instructions per cycle (\textit{IPC}), respectively. Based on this, we consider that the computational capability $\varcomppower_\varplayer^\varlocal$ of every WD $\varplayer$ is uniformly distributed on [2, 45.4]\textit{GIPS}, where the lower and the upper bound correspond to the oldest and the newest version of the phone, respectively. Similarly, we calculate the computational capabilities of ECs, and set them to $\varcomppower_{c1} \!=\! 1285.2$\textit{GIPS}, $\varcomppower_{c_2} \!=\! 1140.7$\textit{GIPS} and  $\varcomppower_{c_3} \!=\! 1397.8$\textit{GIPS}.
 
The input data size $\vardatasize_\varplayer$ is drawn from a uniform distribution on $[1.7, 10]$\textit{Mb} according to measurements in~\cite{fletcher2005road}. The number $\varcomplexityperbit$ of instructions per data bit follows a Gamma distribution~\cite{lorch2004pace} with shape parameter $k = 75$ and scale $\theta = 50$.  Given $\vardatasize_\varplayer$ and $\varcomplexityperbit$, we calculate the complexity of a task as $\vartaskcomplexity_\varplayer \!=\! \vardatasize_\varplayer \varcomplexityperbit$.

Motivated by Amazon EC2 instances~\cite{amazonEC2} designed to support different kinds of applications (e.g., G3 and P2 instances for graphics-intensive and general-purpose GPU applications, and C5 and I3 instances for compute-intensive and non-virtualized workloads), we evaluate the system performance for the following four cases.\\
{\bf \varslicessetdym = 1:}  The slice $\varslice_1$ contains all ECs, and thus is able to support all of the above applications.\\
{\bf \varslicessetdym = 2:} The ECs are sliced such that slice $\varslice_1$ supports the G3.4 instance and slice $\varslice_2$ supports instances C5 and I3.\\
{\bf  \varslicessetdym = 3:} The ECs are sliced such that slices $\varslice_1$ and $\varslice_2$ support P2 and G3s instances, respectively and slice $\varslice_3$ supports instances C5 and I3.\\
{\bf \varslicessetdym = 4:} The ECs are sliced such that slices $\varslice_1, \varslice_2, \varslice_3$ and $\varslice_4$ support P2, G3s, C5 and I3 instances, respectively.

The coefficients \revo{$\frac{1}{\vardeviceslicecoef_{\varplayer,\varslice}}$} were drawn from a continuous uniform distribution on $[0,1]$ and unless otherwise noted, the results are shown for all of the above scenarios.
 
We use two bandwidth allocation policies $\varpolicy_\varbandwidthcoef$ of the slice orchestrator as a basis for comparison. The first policy $\varpolicy_\varbandwidthcoef^{cp}$ shares the bandwidth of each AP $\varAP$ among slices proportionally to the ECs' resources that slices have. The second policy $\varpolicy_\varbandwidthcoef^{eq}$ gives an equal share of the bandwidth of each AP $\varAP$ to each slice $\varslice$. Observe that the \ac{COS} algorithm computes an \revo{ approximation vector $\vardecisionsvector^*$ of offloading decisions for} both policies (c.f. \revo{Theorem~\ref{theo::COS_convergence} and Theorem~\ref{theo::approx_offloading_decisions_slices_optimal})}. The results
shown are the averages of $300$ simulations, together with $95\%$ confidence intervals.
\begin{figure*}[tb]
    	\begin{minipage}{0.5\textwidth}
		\begin{center}
			\includegraphics[width=\columnwidth]{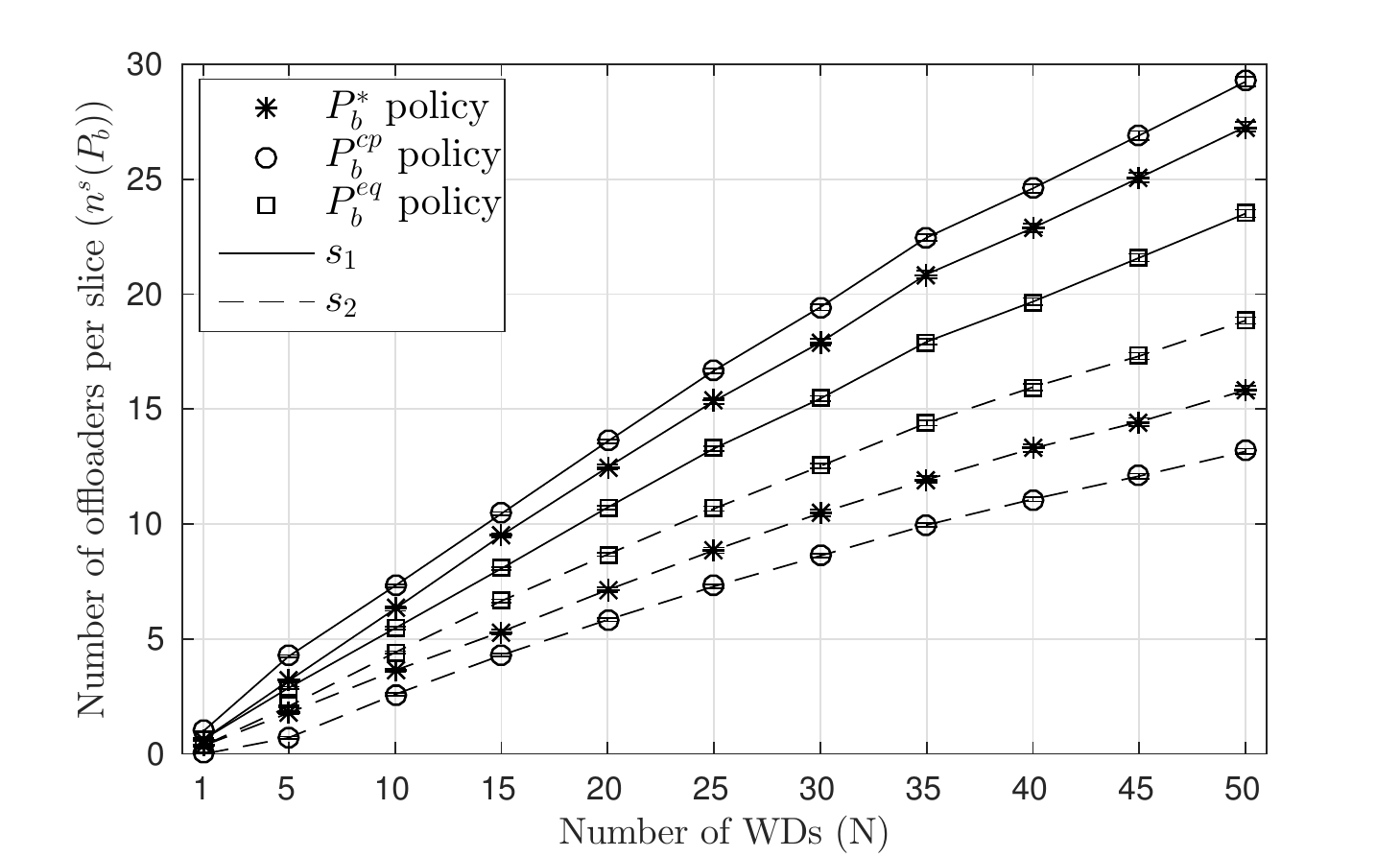}
			\caption{\emph{Number of offloaders per slice} vs. number of WDs $\varplayerssetdym$.}
			\label{fig::number_of_offloaders}
		\end{center}
	\end{minipage}
	\hspace{0.015\textwidth}
		\begin{minipage}{0.5\textwidth}
		\begin{center}
			\includegraphics[width=\columnwidth]{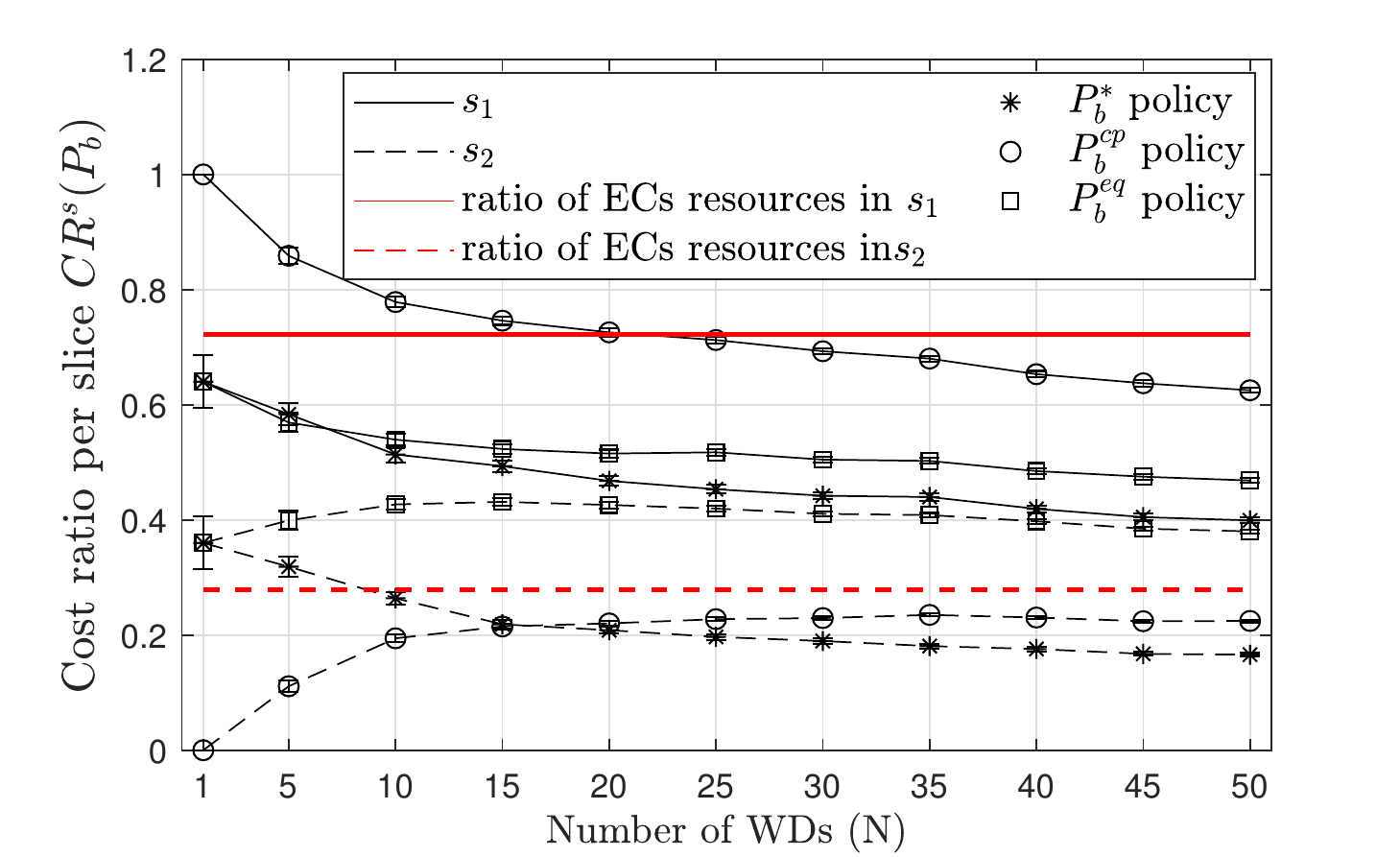}
			\caption{\emph{Cost ratio per slice} vs. number of WDs $\varplayerssetdym$.}
			\label{fig::cost_ratio}
		\end{center}
	\end{minipage}
	\vspace{-0.1cm}
\end{figure*}

\subsection{System Performance}
\label{sec::numerical_sys_perf_gain}
We start with considering the system performance from the point of view of the slice orchestrator. To do so we define the system performance gain $PG(\varpolicy_\varbandwidthcoef)$ for \revo{an inter-slice radio} allocation policy $\varpolicy_\varbandwidthcoef$ w.r.t. the policy $\varpolicy_\varbandwidthcoef^{eq}$ as 
$${PG(\varpolicy_\varbandwidthcoef) = \frac{\varcost(\vardecisionsvector^*,\varpolicy_\varbandwidthcoef^{eq},\hat{\varpolicy}_{\varuplinkratecoef_{\varAP}},\hat{\varpolicy}_{\varcomppowercoef_{\varcloud}})}{\varcost(\vardecisionsvector^*,\varpolicy_\varbandwidthcoef,\hat{\varpolicy}_{\varuplinkratecoef_{\varAP}},\hat{\varpolicy}_{\varcomppowercoef_\varcloud})}.}$$
Figure~\ref{fig::perf_gain} shows $PG(\varpolicy_\varbandwidthcoef)$ as a function of the number $\varplayerssetdym$ of WDs for the optimal $\varpolicy_\varbandwidthcoef^*$ and for the cloud proportional $\varpolicy_\varbandwidthcoef^{cp}$ allocation policy of the operator. We observe that $PG(\varpolicy_\varbandwidthcoef^*) \!=\! PG(\varpolicy_\varbandwidthcoef^{cp}) \!=\! 1$ when $\varslicessetdym \!=\! 1$ because the three solutions are equivalent when there is no slicing. On the contrary, for $\varslicessetdym \!>\! 1$ we observe that $PG(\varpolicy_\varbandwidthcoef^*) \!>\! 1$ and $PG(\varpolicy_\varbandwidthcoef^{cp}) \!>\! 1$, which is due to that the policy $\varpolicy_\varbandwidthcoef^{eq}$ does not take into account that the slices might have different amounts of ECs' resources. We also observe that the policy $\varpolicy_\varbandwidthcoef^*$ achieves better performance gain (up to 2.5 times greater) than the policy $\varpolicy_\varbandwidthcoef^{cp}$ because $\varpolicy_\varbandwidthcoef^*$ \revo{assigns the WDs to slices not only based on the amount of ECs' resources the slices have,}  but also based on how well the slices are tailored for executing their tasks. This effect is especially evident when there are few WDs, because in this case WDs tend to offload their tasks, and thus the system cost is mostly determined by the offloading cost.

\subsection{Computational Cost}
\label{sec::numerical_eq_comp_complexity}
Figure~\ref{fig::comp_complexity} shows the number of \revo{iterations in which the \ac{COS} algorithm computes a decision vector $\vardecisionsvector^*$} as a function of the number $\varplayerssetdym$ of WDs under the optimal $\varpolicy_\varbandwidthcoef^*$, the cloud proportional $\varpolicy_\varbandwidthcoef^{cp}$ and the equal $\varpolicy_\varbandwidthcoef^{eq}$ \revo{inter-slice radio} allocation policy of the slice orchestrator.

 Interestingly, the number of updates decreases with the number $\varslicessetdym$ of slices. \revo{This is due to that the congestion on the logical resources decreases as $\varslicessetdym$ increases, and thus the \ac{COS} algorithm updates the offloading decisions less frequently.} We also observe that the number of updates scales approximately linearly with $\varplayerssetdym$ under all considered policies of the slice orchestrator, and thus we can conclude that the \ac{COS} algorithm is computationally efficient, which makes it a good candidate for computing \revo{an approximation $\vardecisionsvector^*$ to the optimal vector $\hat{\vardecisionsvector}$ of offloading decisions of WDs.}

\subsection{Performance Within the Slices}
\label{sec::numerical_perf_slices}
We continue with considering the performance from the point of view of the slices. For \revo{an inter-slice radio} allocation policy $\varpolicy_\varbandwidthcoef$, we denote by $\varnumber^\varslice(\varpolicy_\varbandwidthcoef)$ the number of offloaders per slice in \revo{the vector $\vardecisionsvector^*$ of offloading decisions computed by the \ac{COS} algorithm} and we define the cost ratio $CR^{\varslice}(\varpolicy_\varbandwidthcoef)$ per slice w.r.t. the system cost as 
$${CR^{\varslice}(\varpolicy_\varbandwidthcoef) = \frac{\varcost^\varslice(\vardecisionsvector^*,\varpolicy_\varbandwidthcoef,\hat{\varpolicy}_{\varuplinkratecoef_{\varAP}},\hat{\varpolicy}_{\varcomppowercoef_{\varcloud}})}{\varcost(\vardecisionsvector^*,\varpolicy_\varbandwidthcoef,\hat{\varpolicy}_{\varuplinkratecoef_{\varAP}},\hat{\varpolicy}_{\varcomppowercoef_{\varcloud}})}.}$$
Figure~\ref{fig::number_of_offloaders} and Figure~\ref{fig::cost_ratio} show $\varnumber^\varslice(\varpolicy_\varbandwidthcoef)$ and $CR^{\varslice}(\varpolicy_\varbandwidthcoef)$, respectively for the optimal $\varpolicy_\varbandwidthcoef^*$, the cloud proportional $\varpolicy_\varbandwidthcoef^{cp}$ and the equal $\varpolicy_\varbandwidthcoef^{eq}$ \revo{inter-slice radio} allocation policy of the slice orchestrator. The results are shown for $\varslicessetdym = 2$ and the red lines in Figure~\ref{fig::cost_ratio} show the share of the ECs' resources among the slices $\varslice_1$ and $\varslice_2$ (i.e, slices $\varslice_1$ and $\varslice_2$ have approximately $72\%$ and $28\%$ of the resources, respectively). 
We observe from Figure~\ref{fig::number_of_offloaders} and Figure~\ref{fig::cost_ratio}, respectively that the gap between $\varnumber^{\varslice_1}(\varpolicy_\varbandwidthcoef)$ and $\varnumber^{\varslice_2}(\varpolicy_\varbandwidthcoef)$ and the gap between $CR^{\varslice_1}(\varpolicy_\varbandwidthcoef)$ and $CR^{\varslice_2}(\varpolicy_\varbandwidthcoef)$ are highest in the case of the policy $\varpolicy_\varbandwidthcoef^{cp}$ and lowest in the case of the policy $\varpolicy_\varbandwidthcoef^{eq}$. Therefore, WDs whose tasks are a better match with the EC resources in slice $\varslice_2$ than those in slice $\varslice_1$ cannot fully exploit the ECs' resources in slice $\varslice_2$ under the policy $\varpolicy_\varbandwidthcoef^{cp}$, which allocates bandwidth resources proportionally to the ECs' resources. Similarly, WDs whose tasks are a better match with the EC resources in slice $\varslice_1$ than in slice $\varslice_2$ cannot fully exploit the ECs' resources in slice $\varslice_1$ under the policy $\varpolicy_\varbandwidthcoef^{eq}$, which allocates bandwidth resources equally. On the contrary, the results show that the optimal policy $\varpolicy_\varbandwidthcoef^*$ finds a good match between the EC resources in the slices and the WDs' preferences for different types of computing resources, which makes it a good candidate for dynamic resource management for network slicing coupled with edge computing.
\section{Related Work}
\label{sec::related}
Closest to our work a recent game theoretic treatments
of the computation offloading  problem~\cite{ge2012game,wang2013nested,chen2015efficient,jovsilo2017game,jovsilo2019wireless}. In~\cite{ge2012game} the authors considered devices that compete for cloud resources so as to minimize their energy consumption, and proved that an equilibrium of offloading decision can be computed in polynomial time. In~\cite{wang2013nested} the authors considered  devices that maximize their  performance and a profit maximizing service provider, and used backward induction for deriving near optimal strategies for the devices and the operator. In~\cite{chen2015efficient} the authors considered that devices can offload their tasks to a cloud through multiple identical wireless links, modeled the congestion on wireless links, and used a potential function argument for proposing a decentralized algorithm for computing an equilibrium. In~\cite{jovsilo2017game} the authors considered that devices can offload their tasks to a cloud through multiple heterogeneous wireless links, modeled the congestion on wireless and cloud resources, showed that the game played by devices is not a potential game and proposed a decentralized algorithm for computing an equilibrium. In~\cite{jovsilo2019wireless} the authors modeled the interaction between devices and a single network operator as a  Stackelberg game, and provided an algorithm for computing a subgame perfect equilibrium. Unlike these works, we consider the computation offloading problem together with network slicing and we analyze the interaction between \rev{the network operator and the slices.} 

Another line of works considers the network slicing resource allocation problem~\cite{jiang2017network,caballero2017multi,caballero2018network,bega2019deepcog,d2019slice}. 
In~\cite{jiang2017network} the authors considered an auction-based model for allocating edge cloud resources to slices and proposed an algorithm for allocating resources to slices so as to maximize the total network revenue. In~\cite{caballero2017multi} the authors considered the radio resources slicing problem and proposed an approximation algorithm for maximizing the sum of the users' utilities. In~\cite{caballero2018network} the authors modeled the interaction between slices that compete for bandwidth resources with the objective to maximize the sum of their users' utilities, and proposed an admission control algorithm under which the slices can reach an equilibrium. 
In~\cite{bega2019deepcog} the authors proposed a deep learning architecture for sharing the resources among network slices in order to meet the users' demand within the slices. In~\cite{d2019slice} the authors considered a radio access network slicing problem and proposed two approximation algorithms for maximizing the total network throughput. Unlike these works, we consider a slicing enabled edge system in \revo{which the slice resource orchestrator assigns WDs to slices and shares radio resources across slices, while the slices manage their own radio and computing resources with the objective to maximize overall system performance.}

To the best of our knowledge ours is the first work to consider slicing and computation offloading to edge clouds jointly, capturing the interaction between \revo{the slice resource orchestrator and the slices.}
\section{Conclusion}
\label{sec::conclusion}
\revo{We have considered the computation offloading problem in an edge computing system under network slicing in which slices jointly manage their own communication and computing resources and the slice resource orchestrator manages communication resources among slices and assigns the WDs to slices. We formulated the problem of minimizing the sum over all WDs' task completion times as a mixed-integer problem, proved that the problem is NP-hard and proposed a decomposition of the problem into a sequence of optimization problems. We proved that the proposed decomposition does not change the optimal solution of the original problem, proposed an efficient approximation algorithm for solving the decomposed problem and proved that the algorithm has bounded approximation ratio. Our numerical results show that the proposed algorithm is computationally efficient. They also show that dynamic allocation of slice resources is essential for maximizing the benefits of edge computing, and slicing could be beneficial for improving overall system performance.}

%
\bibliographystyle{IEEEtran}
\bibliography{IEEEabrv,main}

\begin{thebibliography}{10}
\providecommand{\url}[1]{#1}
\csname url@samestyle\endcsname
\providecommand{\newblock}{\relax}
\providecommand{\bibinfo}[2]{#2}
\providecommand{\BIBentrySTDinterwordspacing}{\spaceskip=0pt\relax}
\providecommand{\BIBentryALTinterwordstretchfactor}{4}
\providecommand{\BIBentryALTinterwordspacing}{\spaceskip=\fontdimen2\font plus
\BIBentryALTinterwordstretchfactor\fontdimen3\font minus
  \fontdimen4\font\relax}
\providecommand{\BIBforeignlanguage}[2]{{%
\expandafter\ifx\csname l@#1\endcsname\relax
\typeout{** WARNING: IEEEtran.bst: No hyphenation pattern has been}%
\typeout{** loaded for the language `#1'. Using the pattern for}%
\typeout{** the default language instead.}%
\else
\language=\csname l@#1\endcsname
\fi
#2}}
\providecommand{\BIBdecl}{\relax}
\BIBdecl

\bibitem{ordonez2017network}
J.~Ordonez-Lucena, P.~Ameigeiras, D.~Lopez, J.~J. Ramos-Munoz, J.~Lorca, and
  J.~Folgueira, ``Network slicing for 5g with sdn/nfv: Concepts, architectures,
  and challenges,'' \emph{IEEE Communications Magazine}, vol.~55, no.~5, pp.
  80--87, 2017.

\bibitem{kekki2018mec}
S.~Kekki, W.~Featherstone, Y.~Fang, P.~Kuure, A.~Li, A.~Ranjan, D.~Purkayastha,
  F.~Jiangping, D.~Frydman, G.~Verin \emph{et~al.}, ``Mec in 5g networks,''
  \emph{Sophia Antipolis, France, ETSI, White Paper}, 2018.

\bibitem{Rost:2019:VNE:3370562.3370587}
M.~Rost and S.~Schmid, ``Virtual network embedding approximations: Leveraging
  randomized rounding,'' \emph{IEEE/ACM Trans. Netw.}, vol.~27, no.~5, pp.
  2071--2084, Oct. 2019.

\bibitem{8850061}
B.~{Farkiani}, B.~{Bakhshi}, and S.~A. {MirHassani}, ``A fast near-optimal
  approach for energy-aware sfc deployment,'' \emph{IEEE Transactions on
  Network and Service Management}, vol.~16, no.~4, pp. 1360--1373, Dec 2019.

\bibitem{8082507}
I.~{Jang}, D.~{Suh}, S.~{Pack}, and G.~{Dán}, ``Joint optimization of service
  function placement and flow distribution for service function chaining,''
  \emph{IEEE Journal on Selected Areas in Communications}, vol.~35, no.~11, pp.
  2532--2541, Nov 2017.

\bibitem{461563398f1a4f8c876135ad7e7d5bed}
A.~{Zafeiropoulos et al.}, \emph{5G PPP Architecture Working Group: View on 5G
  Architecture}, S.~Redana and {\"O}.~Bulakci, Eds.\hskip 1em plus 0.5em minus
  0.4em\relax European Commission, Jun. 2019, vol. Version 3.0.

\bibitem{Foukas:2017:ORS:3117811.3117831}
X.~Foukas, M.~K. Marina, and K.~Kontovasilis, ``Orion: Ran slicing for a
  flexible and cost-effective multi-service mobile network architecture,'' in
  \emph{Proc. of ACM International Conference on Mobile Computing and
  Networking (MobiCom)}, 2017, pp. 127--140.

\bibitem{8407021}
C.~{Chang}, N.~{Nikaein}, and T.~{Spyropoulos}, ``Radio access network resource
  slicing for flexible service execution,'' in \emph{Proc. of IEEE INFOCOM 2018
  Workshops)}, April 2018, pp. 668--673.

\bibitem{zheng2018dynamic}
J.~Zheng, Y.~Cai, Y.~Wu, and X.~Shen, ``Dynamic computation offloading for
  mobile cloud computing: A stochastic game-theoretic approach,'' \emph{IEEE
  Transactions on Mobile Computing}, vol.~18, no.~4, pp. 771--786, 2018.

\bibitem{chen2014decentralized}
X.~Chen, ``Decentralized computation offloading game for mobile cloud
  computing,'' \emph{IEEE Transactions on Parallel and Distributed Systems},
  vol.~26, no.~4, pp. 974--983, 2014.

\bibitem{jovsilo2018decentralized}
S.~Jo{\v{s}}ilo and G.~D{\'a}n, ``Decentralized algorithm for randomized task
  allocation in fog computing systems,'' \emph{IEEE/ACM Transactions on
  Networking}, vol.~27, no.~1, pp. 85--97, 2018.

\bibitem{neto2018uloof}
J.~L.~D. Neto, S.-Y. Yu, D.~F. Macedo, J.~M.~S. Nogueira, R.~Langar, and
  S.~Secci, ``Uloof: a user level online offloading framework for mobile edge
  computing,'' \emph{IEEE Transactions on Mobile Computing}, vol.~17, no.~11,
  pp. 2660--2674, 2018.

\bibitem{chun2011clonecloud}
B.-G. Chun, S.~Ihm, P.~Maniatis, M.~Naik, and A.~Patti, ``Clonecloud: elastic
  execution between mobile device and cloud,'' in \emph{Proceedings of the
  sixth conference on Computer systems}.\hskip 1em plus 0.5em minus 0.4em\relax
  ACM, 2011, pp. 301--314.

\bibitem{cuervo2010maui}
E.~Cuervo, A.~Balasubramanian, D.-k. Cho, A.~Wolman, S.~Saroiu, R.~Chandra, and
  P.~Bahl, ``Maui: making smartphones last longer with code offload,'' in
  \emph{Proceedings of the 8th international conference on Mobile systems,
  applications, and services}.\hskip 1em plus 0.5em minus 0.4em\relax ACM,
  2010, pp. 49--62.

\bibitem{josilo2018selfish}
S.~Jo{\v{s}}ilo and G.~D{\'a}n, ``Selfish decentralized computation offloading
  for mobile cloud computing in dense wireless networks,'' \emph{IEEE TMC},
  vol.~18, no.~1, pp. 207--220, 2018.

\bibitem{huang2012dynamic}
D.~Huang, P.~Wang, and D.~Niyato, ``A dynamic offloading algorithm for mobile
  computing,'' \emph{IEEE Transactions on Wireless Communications}, vol.~11,
  no.~6, pp. 1991--1995, 2012.

\bibitem{josilo2019tcc}
S.~Jo{\v{s}}ilo and G.~D{\'a}n, ``Joint management of wireless and computing
  resources for computation offloading in mobile edge clouds,'' \emph{IEEE
  Transactions on Cloud Computing}, pp. 1--1, 2019.

\bibitem{garey2002computers}
M.~R. Garey and D.~S. Johnson, \emph{Computers and intractability}.\hskip 1em
  plus 0.5em minus 0.4em\relax wh freeman New York, 2002, vol.~29.

\bibitem{harks2011characterizing}
T.~Harks, M.~Klimm, and R.~H. M{\"o}hring, ``Characterizing the existence of
  potential functions in weighted congestion games,'' \emph{Theory of Computing
  Systems}, pp. 46--70, 2011.

\bibitem{monderer1996potential}
D.~Monderer and L.~S. Shapley, ``Potential games,'' \emph{Games and economic
  behavior}, vol.~14, no.~1, pp. 124--143, 1996.

\bibitem{saunders2007antennas}
S.~R. Saunders and A.~Arag{\'o}n-Zavala, \emph{Antennas and propagation for
  wireless communication systems}.\hskip 1em plus 0.5em minus 0.4em\relax John
  Wiley \& Sons, 2007.

\bibitem{tsgr2011lte}
E.~TSGR, ``Lte: Evolved universal terrestrial radio access (e-utra),''
  \emph{Physical channels and modulation (3GPP TS 36.211 version 10.0.0. 0
  Release 10) ETSI TS}, 2011.

\bibitem{zaidi20185g}
A.~Zaidi, F.~Athley, J.~Medbo, U.~Gustavsson, G.~Durisi, and X.~Chen, \emph{5G
  Physical Layer: Principles, Models and Technology Components}.\hskip 1em plus
  0.5em minus 0.4em\relax Academic Press, 2018.

\bibitem{lauridsen2014empirical}
M.~Lauridsen, L.~No{\"e}l, T.~B. S{\o}rensen, and P.~Mogensen, ``An empirical
  lte smartphone power model with a view to energy efficiency evolution.''
  \emph{Intel Technology Journal}, vol.~18, no.~1, 2014.

\bibitem{da2018understanding}
N.~Da~Dalt and A.~Sheikholeslami, \emph{Understanding Jitter and Phase Noise: A
  Circuits and Systems Perspective}.\hskip 1em plus 0.5em minus 0.4em\relax
  Cambridge University Press, 2018.

\bibitem{codrescu2014hexagon}
L.~Codrescu, W.~Anderson, S.~Venkumanhanti, M.~Zeng, E.~Plondke, C.~Koob,
  A.~Ingle, C.~Tabony, and R.~Maule, ``Hexagon dsp: An architecture optimized
  for mobile multimedia and communications,'' \emph{IEEE Micro}, vol.~34,
  no.~2, pp. 34--43, 2014.

\bibitem{hackenberg2015energy}
D.~Hackenberg, R.~Sch{\"o}ne, T.~Ilsche, D.~Molka, J.~Schuchart, and R.~Geyer,
  ``An energy efficiency feature survey of the intel haswell processor,'' in
  \emph{2015 IEEE international parallel and distributed processing symposium
  workshop}, 2015, pp. 896--904.

\bibitem{Takefuji:2017:GPC:3161341}
Y.~Takefuji, \emph{GPU Parallel Computing for Machine Learning in Python: How
  to Build a Parallel Computer}.\hskip 1em plus 0.5em minus 0.4em\relax
  Independently published, 2017.

\bibitem{fletcher2005road}
L.~Fletcher, L.~Petersson, and A.~Zelinsky, ``Road scene monotony detection in
  a fatigue management driver assistance system,'' in \emph{IEEE Proceedings.
  Intelligent Vehicles Symposium, 2005.}, 2005, pp. 484--489.

\bibitem{lorch2004pace}
J.~R. Lorch and A.~J. Smith, ``Pace: A new approach to dynamic voltage
  scaling,'' \emph{IEEE Transactions on Computers}, vol.~53, no.~7, pp.
  856--869, 2004.

\bibitem{amazonEC2}
https://aws.amazon.com/ec2/instance types/, ``Amazon ec2 instance types.''

\bibitem{ge2012game}
Y.~Ge, Y.~Zhang, Q.~Qiu, and Y.-H. Lu, ``A game theoretic resource allocation
  for overall energy minimization in mobile cloud computing system,'' in
  \emph{ACM/IEEE Symposium on low power electronics and design}, 2012, pp.
  279--284.

\bibitem{wang2013nested}
Y.~Wang, X.~Lin, and M.~Pedram, ``A nested two stage game-based optimization
  framework in mobile cloud computing system,'' in \emph{IEEE Service Oriented
  System Engineering Symposium}, 2013, pp. 494--502.

\bibitem{chen2015efficient}
X.~Chen, L.~Jiao, W.~Li, and X.~Fu, ``Efficient multi-user computation
  offloading for mobile-edge cloud computing,'' \emph{IEEE/ACM Transactions on
  Networking}, vol.~24, no.~5, pp. 2795--2808, 2015.

\bibitem{jovsilo2017game}
S.~Jo{\v{s}}ilo and G.~D{\'a}n, ``A game theoretic analysis of selfish mobile
  computation offloading,'' in \emph{IEEE INFOCOM}, 2017, pp. 1--9.

\bibitem{jovsilo2019wireless}
------, ``Wireless and computing resource allocation for selfish computation
  offloading in edge computing,'' in \emph{IEEE INFOCOM}, 2019, pp. 2467--2475.

\bibitem{jiang2017network}
M.~Jiang, M.~Condoluci, and T.~Mahmoodi, ``Network slicing in 5g: An
  auction-based model,'' in \emph{IEEE International Conference on
  Communications (ICC)}, 2017, pp. 1--6.

\bibitem{caballero2017multi}
P.~Caballero, A.~Banchs, G.~De~Veciana, and X.~Costa-P{\'e}rez, ``Multi-tenant
  radio access network slicing: Statistical multiplexing of spatial loads,''
  \emph{IEEE/ACM Transactions on Networking}, vol.~25, no.~5, pp. 3044--3058,
  2017.

\bibitem{caballero2018network}
P.~Caballero, A.~Banchs, G.~De~Veciana, X.~Costa-P{\'e}rez, and A.~Azcorra,
  ``Network slicing for guaranteed rate services: Admission control and
  resource allocation games,'' \emph{IEEE Transactions on Wireless
  Communications}, vol.~17, no.~10, pp. 6419--6432, 2018.

\bibitem{bega2019deepcog}
D.~Bega, M.~Gramaglia, M.~Fiore, A.~Banchs, and X.~Costa-Perez, ``Deepcog:
  Cognitive network management in sliced 5g networks with deep learning,'' in
  \emph{IEEE INFOCOM}, 2019, pp. 280--288.

\bibitem{d2019slice}
S.~D’Oro, F.~Restuccia, A.~Talamonti, and T.~Melodia, ``The slice is served:
  Enforcing radio access network slicing in virtualized 5g systems,'' in
  \emph{IEEE INFOCOM}, 2019, pp. 442--450.

\end{thebibliography}
\end{document}